\tikzstyle{decision} = [diamond, draw, fill=blue!20, 
\tikzstyle{block} = [rectangle, draw, fill=blue!20, 
\tikzstyle{line} = [draw, -latex']
\newtheorem{theorem}{Theorem}[section]
\newtheorem{lemma}[theorem]{Lemma}
\newtheorem{definition}[theorem]{Definition}
\newcommand{\Id}{\mathbbm{1}} 
\newcommand{\supp}{\operatorname{Supp}}
\newcommand{\defi}{:=} 
\newcommand{\crefnames}[3]{%
  \@for\next:=#1\do{%
    \expandafter\crefname\expandafter{\next}{#2}{#3}%
  }%
}
\definecolor{darkblue}{RGB}{0,0,127}
\renewcommand\@make@capt@title[2]{%
  \@ifx@empty\float@link{\@firstofone}{\expandafter\href\expandafter{\float@link}}%
   {\textbf{#1}}\@caption@fignum@sep#2\quad
}%
\begin{document}

\title{Fault-tolerant logical gates in holographic stabilizer codes are severely restricted} 

\author{Sam Cree}
\email{scree@stanford.edu}
\affiliation{Stanford Institute for Theoretical Physics, Stanford University, Stanford, CA 94305, USA}
\author{Kfir Dolev}
\affiliation{Stanford Institute for Theoretical Physics, Stanford University, Stanford, CA 94305, USA}
\author{Vladimir Calvera}
\affiliation{Department of Physics, Stanford University, Stanford, CA 94305, USA}
\author{Dominic J. Williamson}
\affiliation{Stanford Institute for Theoretical Physics, Stanford University, Stanford, CA 94305, USA}

\begin{abstract}
	We evaluate the usefulness of holographic stabilizer codes for practical purposes by studying their allowed sets of fault-tolerantly implementable gates.
	We treat them as subsystem codes and show that the set of transversally implementable logical operations is contained in the Clifford group for sufficiently localized logical subsystems.
	As well as proving this concretely for several specific codes, we argue that this restriction naturally arises in any stabilizer subsystem code that comes close to capturing certain properties of holography.
	We extend these results to approximate encodings, locality-preserving gates, certain codes whose logical algebras have non-trivial centers, and discuss cases where restrictions can be made to other levels of the Clifford hierarchy.
	A few auxiliary results may also be of interest, including a general definition of entanglement wedge map for any subsystem code, and a thorough classification of different correctability properties for regions in a subsystem code.
\end{abstract}

\maketitle

\tableofcontents

\section{Introduction}

The anti-de Sitter/conformal field theory (AdS/CFT) correspondence~\cite{Maldacena1997,Witten1998} is a particularly fruitful instance of the holographic principle~\cite{Hooft1993,Susskind1995} from high-energy theory -- it refers to a duality between a theory of quantum gravity in $D+1$ dimensions and a conformally-invariant quantum field theory on the $D$-dimensional boundary of the original space.
It was observed in Ref.~\cite{almheiriBulkLocalityQuantum2015} that this duality has many properties in common with quantum error-correcting codes, which motivated the development of several families of error-correcting codes built out of tensor networks~\cite{pastawskiHolographicQuantumErrorcorrecting2015,haydenHolographicDualityRandom2016,donnellyLivingEdgeToy2017,caoApproximateBaconShorCode2020,farrellyParallelDecodingMultiple2020,harrisCalderbankSteaneShorHolographicQuantum2018,harrisMaximumLikelihoodDecoder2020,Cao2021,Jahn2021} as toy models for the holographic correspondence.
These \textit{holographic codes} encode information into a series of degrees of freedom which are increasingly non-local, with the degree of non-locality manifesting in the logical system as an additional emergent geometric dimension.
The logical system is associated with this higher-dimensional ``bulk'' geometry, and the physical system with the lower-dimensional ``boundary''.

Quantum error correction is a remarkable idea that unlocks the possibility of feasible quantum computation~\cite{Shor1995,Aharonov2008,shor1996fault}. 
Many of the leading proposals for implementing quantum error correction are based on a concept borrowed from theoretical condensed matter physics, namely topological order~\cite{wegner1971duality,PhysRevLett.50.1395,doi:10.1142/S0217979290000139}. 
These \textit{topological codes}~\cite{qdouble,Bravyi1998,Freedman2001} encode information into the (necessarily non-local) degrees of freedom that are sensitive to the topology of the surface on which the code lives. Thus they protect against local, topologically trivial, operations. 
The success of topological codes motivates us to assess the usefulness of these new holographic codes for practical quantum computation.

There are a number of reasons to suspect that holographic codes may be of practical use for quantum computing.
Holographic codes can admit erasure thresholds comparable to that of the widely-studied surface code~\cite{pastawskiHolographicQuantumErrorcorrecting2015,harrisCalderbankSteaneShorHolographicQuantum2018}, and likewise for their threshold against Pauli errors~\cite{harrisMaximumLikelihoodDecoder2020}. 
Their holographic structure also naturally leads to an organization of encoded qubits into a hierarchy of levels of protection from errors, which could be useful for applications which call for many qubits with varying levels of protection.
In particular, this is reminiscent of many schemes for magic state distillation~\cite{Knill2004,Bravyi2005} -- and indeed, the concatenated codes~\cite{Knill1996} utilized for magic state distillation share a similar hierarchical structure to holographic codes. 
The layered structure of holographic codes is also reminiscent of memory architectures in classical computers, where it is useful to have different levels of short- and long-term memory. 
Although these codes have some notable drawbacks, in particular holographic stabilizer codes require non-local stabilizer generators, other codes such as concatenated codes suffer similar drawbacks and have still proven to be useful.
Conversely, the stringent requirement of non-local stabilizer generators allows holographic codes to protect many more qubits than a topological code and in fact attain a finite nonzero encoding rate, which is typically not possible for topological codes.
Nonetheless, many open questions remain about the usefulness of holographic codes for fault-tolerant quantum computing.

Beyond the robust storage of quantum information, a desirable feature of a quantum codes is the ability to perform a large set (ideally a universal set) of logical operations while maintaining the protection provided by the code. 
Thus it is useful to study the set of \textit{fault-tolerant} logic gates implementable in a given code, i.e.\ those that confine the spread of a physically relevant set of correctable errors sufficiently well that they remain correctable. 
The simplest examples are \textit{transversal} gates, which do not couple the physical subsystems of the code and hence do not spread errors at all. 
The desirable error correction properties of these gates come at a cost, which is that they are restricted to implement a non-universal (and in fact, finite) group of logical operations on a code with non-trivial distance \cite{eastinRestrictionsTransversalEncoded2009}.

Topological codes possess additional geometric structure, which results in a wider class of fault-tolerant gates that subsumes the set of transversal gates: those implementable by \textit{locality-preserving} operations.
Locality-preserving operations are those which, when acting on a local operator via conjugation, expand the region on which it acts by a bounded amount.
These preserve the correctability of errors confined to sufficiently small regions. 
However, like transversal gates, locality-preserving gates in topological codes are also restricted to form a finite group\footnote{
This does not include the more general class of gates and codes with boundaries considered in Refs.~\cite{Zhu2017,Lavasani2019}.}~\cite{bravyiClassificationTopologicallyProtected2013,beverlandProtectedGatesTopological2016}.  
Specifically, it has been shown that locality-preserving gates in topological \textit{stabilizer subsystem codes}~\cite{bravyiClassificationTopologicallyProtected2013,pastawskiFaulttolerant2015,Webster2018} in $D$ spatial dimensions can implement only gates from $\mathcal{C}_D$, the $D$-th level of the Clifford hierarchy, which is a series of increasing sets of gates that includes the Pauli group ($\mathcal{C}_1$) and the Clifford group ($\mathcal{C}_2$). 
These sets obey $\mathcal{C}_1 \subset \mathcal{C}_2 \subset \dots $, and the bound is known to be saturated; i.e.\ for all $D>1$, one can find a $D$-dimensional topological stabilizer (subsystem) code that can transversally implement a gate from $\mathcal{C}_D \backslash \mathcal{C}_{D-1}$, for example the color code~\cite{bombin2015gauge,https://journals.aps.org/prl/abstract/10.1103/PhysRevLett.98.160502}.

Just as for topological codes, the geometric structure of holographic codes endows locality-preserving gates with fault tolerance. 
This raises a natural question: what restrictions does the holographic structure of these codes impose on locality-preserving gates?
A natural guess for holographic \textit{stabilizer} codes in $D$ spatial dimensions might be that one can access $\mathcal{C}_{D+1}$, the $(D+1)$-th level of the Clifford hierarchy, due to their locality in the extra holographic dimension and the analogy to topological codes. 
Here we show that the restrictions are in fact much more severe than this; namely that locality-preserving gates for such holographic codes are dimension-independently restricted to the Clifford group, $\mathcal{C}_2$.

\subsection{Summary of results \& main ideas}

In this work, we show that for sufficiently locality-preserving physical gates, the restricted logical action on a sufficiently small region of the bulk cannot implement a non-Clifford gate for any stabilizer code with a holographic structure that captures certain properties of AdS/CFT in any spatial dimension.
Deeper into the bulk, this restriction applies even to unitaries that expand the support of local operators by increasingly large amounts.
We describe a number of generalizations of our result, including an extension to approximate stabilizer encoding maps and to the codes studied in Ref.~\cite{donnellyLivingEdgeToy2017} with logical algebras whose centers are non-trivial; in all these cases we show that it remains impossible to implement a logical non-Clifford gate.
We also show that as our assumptions are relaxed then even if non-Clifford gates are allowed, the logical gate is still restricted to some higher level of the Clifford hierarchy determined by the size of the logical bulk region and the degree of spreading by the locality-preserving unitary.

We now explain the relatively simple idea that underlies most of our results. 
First, we recall the following implication of a lemma due to Pastawski and Yoshida stated informally: any stabilizer subsystem code whose physical qudits can be partitioned into three correctable regions can only support transversal Clifford gates (see below for the technical statement, \cref{lem:PY}). 
For any given instance of a holographic code and choice of small bulk subregion, it is typically easy to find a partitioning of the physical system into three correctable regions, meaning this lemma can be applied.
The main contribution of this work is to argue from the physical principles of holography when and why it is possible to find such a tripartition.
A key error-correcting property of holographic codes is a relationship between the correctability of two complementary regions, known as complementary recovery.
We argue that so long as regions that \textit{almost} satisfy complementary recovery are not \emph{too} rare, then the partitioning into three correctable regions is possible -- see Fig.~\ref{fig:MainIdea}. 
We go on to make use of the geometric structure and locality inherent to the physical qubits at the boundary of a holographic code to extend our results to cover locality-preserving unitaries. 
Informally, we show the following.
\begin{theorem}[Informal Statement of \cref{thm:1d}]
    If there exists an appropriate region that almost satisfies complementary recovery, and the protected bulk subsystem is sufficiently small, then any sufficiently locality-preserving unitary with well-defined logical action implements an element of the Clifford group. 
\end{theorem}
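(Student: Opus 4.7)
The plan is to reduce to the Pastawski--Yoshida lemma (\cref{lem:PY}) by producing, for any sufficiently locality-preserving unitary $U$ with well-defined logical action on a small bulk subsystem $b$, a tripartition of the boundary into three regions that are each correctable with respect to $b$ and on which $U$ effectively acts transversally. The argument is a quantitative version of the picture in \cref{fig:MainIdea}: begin with the almost-complementary-recovery region $A$, then carve a thin buffer out of the interface to obtain a third piece.

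First, I would extract from the hypothesis a bipartition $(A,A^c)$ of the boundary with the property that $b$ is reconstructible from $A$ (up to a controlled error), so that both $A$ and $A^c$ are correctable with respect to $b$. Let $r$ denote the spreading radius of $U$, choose a buffer strip $C$ of width at least $2r$ straddling the interface between $A$ and $A^c$, and set $R_1 = A\setminus C$, $R_2 = C$, and $R_3 = A^c\setminus C$. Correctability of $R_1$ and $R_3$ follows by monotonicity of correctability under inclusion, since each is contained in one of the original complementary correctable halves. Correctability of $R_2$ must be argued separately: since $b$ is taken sufficiently small, a thin enough boundary strip cannot support the logical algebra of $b$, so by complementary recovery applied to $R_2^c$ the strip $R_2$ is itself correctable.

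The locality-preserving property of $U$ then ensures that $U R_i U^\dagger$ is contained in the $r$-fattening of $R_i$, which by construction lies within $R_i \cup C$. Consequently, the restricted action of $U$ on each $R_i$ is supported in a correctable region, and $U$ decomposes, up to corrections that act trivially on the logical algebra of $b$, as a product of three pieces each supported near one of the $R_i$. This is precisely the structure of a transversal gate with respect to the coarse-graining into $(R_1,R_2,R_3)$, so \cref{lem:PY} applies and forces the induced logical action on $b$ to lie in the Clifford group $\mathcal{C}_2$.

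The main obstacle I anticipate is making the word ``almost'' quantitative. Pastawski--Yoshida is stated for exact correctability, while the hypothesis only provides approximate complementary recovery, and the correctability of the buffer $R_2$ is deduced from smallness of $b$ rather than assumed. Handling this rigorously will likely require showing that, for $b$ sufficiently small, the approximate data can be promoted to an exact cleaning statement for the stabilizer group associated with $b$, possibly by absorbing the slack into the gauge degrees of freedom permitted by the subsystem-code structure, or by working with the entanglement wedge map developed earlier in the paper. A secondary, mostly bookkeeping, issue is to convert the qualitative conditions ``sufficiently locality-preserving'' and ``sufficiently small'' into a single explicit inequality relating the spreading radius $r$, the buffer width, and the allowed bulk size of $b$, so that the informal statement can be rigorously unpacked into the precise hypotheses of \cref{thm:1d}.
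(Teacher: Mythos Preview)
Your proposal contains a genuine structural error at the very first step. You begin by asserting that for a bipartition $(A,A^c)$ satisfying (almost) complementary recovery, \emph{both} $A$ and $A^c$ are correctable with respect to the bulk subsystem $b$. This is impossible whenever $b$ carries a nontrivial logical algebra: if $A$ is correctable then every bare-logical operator on $b$ can be implemented on $A^c$, and if $A^c$ is also correctable then every bare-logical operator can be implemented on $A$; since operators supported on disjoint regions commute, all bare-logical operators on $b$ would commute with one another, a contradiction. Complementary recovery says precisely that the entanglement wedges of $A$ and $A^c$ partition the bulk, so exactly one of the two halves contains $b$ and is therefore \emph{not} correctable. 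Your monotonicity step (``$R_1 \subset A$ so $R_1$ is correctable'') then fails, because $A$ itself is not correctable.

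The paper's argument inverts your picture. One starts not with a generic bipartition but with a minimal \emph{non}-correctable connected region $R$ of size $d_c$; complementary recovery then forces $R^c$ to be correctable, and this gives the first region $R_0=R^c$ for \cref{lem:PY}. The remaining two regions $R_1,R_2$ are obtained by splitting the small interval $R$ into two subintervals, each strictly shorter than $d_c$ even after fattening by the appropriate multiple of $s_U$; correctability of these pieces follows directly from the definition of $d_c$ as the size of the smallest non-correctable connected region, not from any buffer construction. The quantitative condition $s_U < d_c/6$ falls out of the arithmetic of choosing the split point so that both fattened subintervals stay below length $d_c$. Note also that \cref{lem:PY} is applied directly to the locality-preserving $U$; there is no need to argue that $U$ ``decomposes as a product of three pieces'' or becomes transversal after coarse-graining.
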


We remark that there already exists a bound~\cite{pastawskiFaulttolerant2015} establishing that transversal unitaries acting on logical subsystems that have sufficiently large distances, such as those associated with qubits deep in the bulk, are already confined to a particular level of the Clifford hierarchy.
Our work can be seen as a strengthening of that result in the specific context of holographic codes.
We apply it to non-transversal but locality-preserving gates, as well as logical qubits associated to more shallow regions of the bulk whose distances are not sufficiently large for the previously established arguments to apply.

Our results share some similarities with recent work ruling out global symmetries in AdS/CFT~\cite{harlowSymmetriesQuantumField2019,Harlow2018}. 
The results proved there imply that holographic codes satisfying a set of assumptions motivated by the AdS/CFT correspondence do not support any non-trivial locality preserving gates that act uniformly on the bulk encoded qubits. 
Our results differ on one hand as we require weaker assumptions to be satisfied by the holographic codes in question, and we consider a broader class of locality-preserving gates that may act on the bulk in a non-uniform fashion.
As a consequence, our results are in a sense weaker where they overlap, only ruling out non-Clifford global symmetries in the bulk rather than all global symmetries.
Our broader assumptions necessarily require this weakening of the results, as our notion of holographic codes includes the well-known HaPPY code~\cite{pastawskiFaulttolerant2015}, which is known to violate the stronger ``no global symmetries'' result by admitting non-trivial Clifford global symmetries%
\footnote{%
Despite the no-go result in Ref.~\cite{harlowSymmetriesQuantumField2019}, such global symmetries are in fact possible when domain walls of the boundary symmetry cannot be locally mapped back into the code subspace. 
This does not contradict Ref.~\cite{harlowSymmetriesQuantumField2019} as it violates the assumptions made there, we plan to discuss this further in a forthcoming work~\cite{unpublishedGauge}. 
}.
On the other hand, we also assume additional structure by requiring the codes to be stabilizer codes,
which allows us to make stronger claims about bulk logical operations that are not global symmetries.

As far as practical applications are concerned, unfortunately our main results point out an obstacle for any potential application of holographic stabilizer codes for magic state distillation. 
Specifically, we show that such codes essentially do not support transversal logical gates outside the Clifford group as required for magic state distillation.

\begin{figure}[t]
	\centering
	\includegraphics[width=1.0\linewidth]{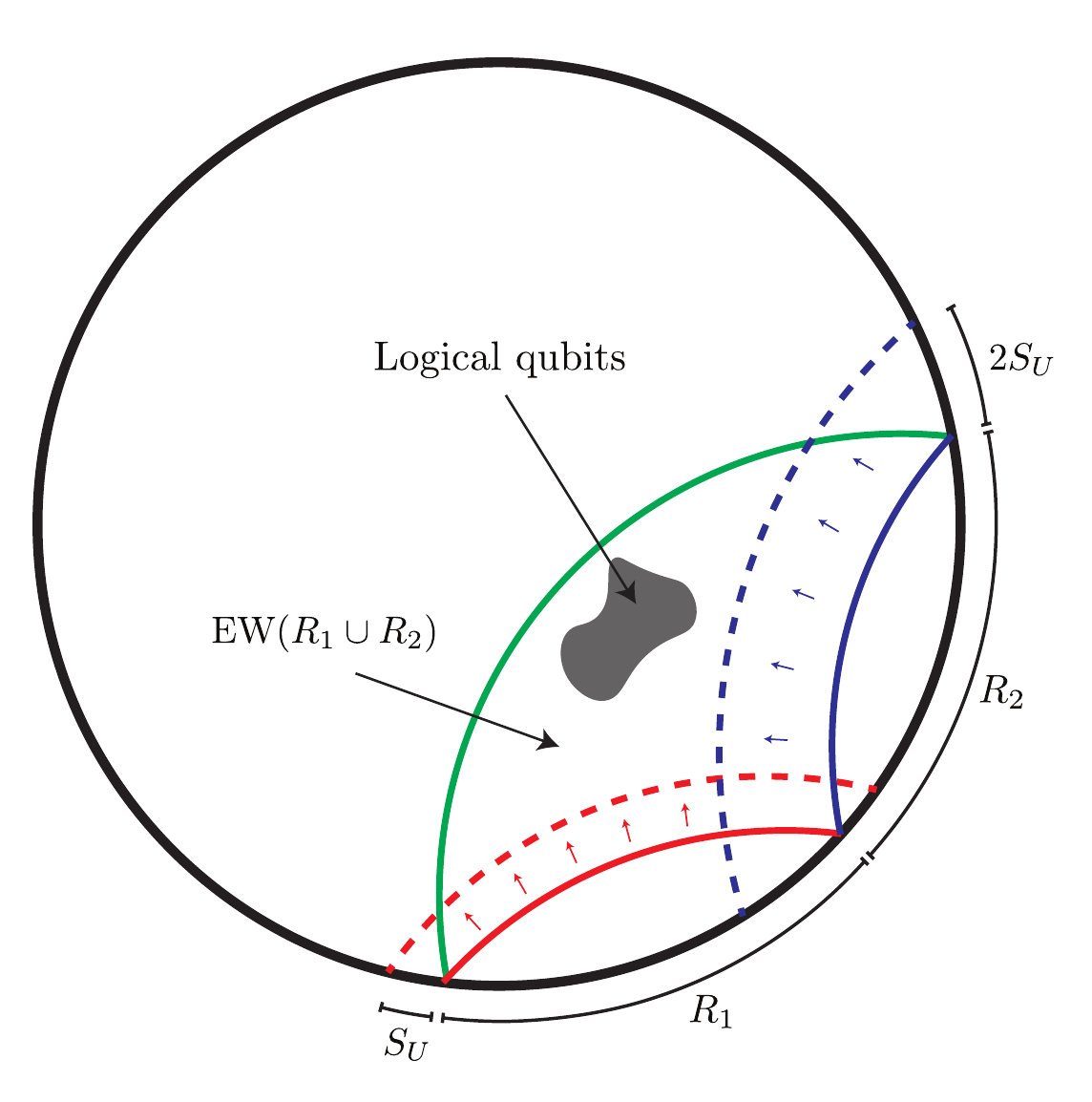}
	\caption{A partition of the physical qubits on the boundary into three regions $R_1,R_2,(R_1 \cup R_2)^c,$ that are all correctable with respect to some logical qubits associated to a bulk region due to the holographic structure of the code. 
    In this setting locality preserving logic gates $U$ that only spread local operators by an amount $S_U$ can only implement Clifford gates on the logical qubits.} 
\label{fig:MainIdea}
\end{figure}

\subsection{Outline}

The manuscript is laid out as follows: We first introduce background on subsystem codes in \cref{sect:background} and prove some of their properties that are useful for our purposes. 
We also introduce a general definition of entanglement wedge, termed the \emph{maximal entanglement wedge}, which is defined for an arbitrary subsystem code without any assumptions of additional structure such as a particular geometry.
This contributes to a growing set of such results~\cite{harlowRyuTakayanagiFormulaQuantum2017,pastawskiCodePropertiesHolographic2017,kamal2019ryu} which identify holographic features that emerge from surprisingly simple quantum error-correction properties.
We then describe some of the properties that we associate with holographic codes, such as complementary recovery and a geometric structure.
We also review the stabilizer formalism, to emphasize some crucial subtleties that allow for a natural definition of the Clifford group on the logical space.
Finally, we review background on fault-tolerant implementation of logical gates; in particular a crucial lemma from Ref.~\cite{pastawskiFaulttolerant2015} (the Pastawski-Yoshida lemma) that shows that the existence of regions with certain error-correction properties implies a restriction on the logical gates implementable by locality-preserving operations.

In \cref{sect:happy}, we consider a prototypical family of holographic codes constructed from perfect tensors; known as HaPPY codes after the authors of Ref.~\cite{pastawskiHolographicQuantumErrorcorrecting2015}.
We show that any code made from copies of a single perfect tensor must support appropriate regions to apply the Pastawski-Yoshida lemma; and argue that this property extends to any HaPPY code built from more general combinations of perfect and planar-perfect tensors.

Since the introduction of HaPPY, many other families of holographic stabilizer codes have been introduced which extend beyond the assumptions made in Ref.~\cite{pastawskiHolographicQuantumErrorcorrecting2015}. 
Thus, we argue in \cref{sect:complementary} on more general grounds that any holographic code that even almost captures the code properties of AdS/CFT (\emph{complementary recovery} in particular) must also have their locality-preserving gates restricted to only implement Clifford gates.
We also extend this argument to approximate codes such as that in Ref.~\cite{caoApproximateBaconShorCode2020},  and to codes whose logical algebras may have non-trivial centers~\cite{donnellyLivingEdgeToy2017}.
Finally, in \cref{sect:levels} we show variants of this argument for other levels of the Clifford hierarchy.
For example, even if a bulk region is sufficiently large that the arguments of \cref{sect:complementary} do not apply, we can still establish a weaker restriction to a level $\mathcal{C}_n$ for some $n>2$.
Interestingly, in other cases we are able to show an even stronger restriction than our main result. 
In these cases, a codespace-preserving operator that acts only on the logical subsystem and trivially elsewhere can not even implement a Clifford gate unless it is a Pauli; i.e.\ it is restricted to $\mathcal{C}_1$.

\section{Preliminaries} \label{sect:background}

In this section, we introduce various properties and additional structures on error-correcting codes that are necessary to formulate and prove the main results of our work in later sections.
We review subsystem and stabilizer codes, introduce the error-correction and geometric properties of holographic codes, and review some useful lemmas about locality-preserving gates.

In holographic codes, the bulk space (which we denote $B$) is associated with the encoded space of the code, and the boundary space (denoted $\partial B$) with the physical degrees of freedom of the code.
Different encoded/bulk qubits have different levels of protection against error.
Those near the boundary are susceptible to small errors occurring nearby on the boundary, whereas those deep in the bulk are well protected against all small errors.
For this reason, it is useful to characterize the error-correction properties of such a code with respect to a choice of a subset of the encoded qubits that are intended to be protected while the remainder need not be. 
This can be accomplished with the subsystem code formalism, which we introduce here.

\subsection{Subsystem codes}\label{subsec:subsystem-codes}
For a general error-correcting code, the physical Hilbert space $\mathcal{H}_p$ decomposes into $\mathcal{H}_p= \mathcal{H}_c \oplus \bar{\mathcal{H}}_c$, with $\mathcal{H}_c$ the code subspace and $\bar{\mathcal{H}}_c$ its complement.
For a standard code, the code subspace is taken to be isomorphic to a logical system $\mathcal{H}_L$ which represents the important encoded quantum information.
In the more general framework of \emph{subsystem} codes, $\mathcal{H}_c$ is isomorphic to a tensor product $\mathcal{H}_L \otimes \mathcal{H}_J$, with the logical subsystem $\mathcal{H}_L$ storing the ``important information'', and the \emph{junk} subsystem (sometimes known as the \emph{gauge} subsystem)\footnote{Operations acting on the gauge subsystem that act trivially on the logical subsystem are unimportant, as we do not care about the state of the gauge system -- in this way, they are similar to gauge transformations in gauge theory, hence the name.} $\mathcal{H}_J$ storing ``unimportant information''. 
This division is particularly useful when different subsystems of the code space experience different levels of protection against error, as one can evaluate the error-correction properties of $\mathcal{H}_L$ independently of $\mathcal{H}_J$, which may be highly susceptible to errors.
We sometimes refer to the combined system $\mathcal{H}_L \otimes \mathcal{H}_J$ as the \emph{encoded space}, $\mathcal{H}_e = \mathcal{H}_L \otimes \mathcal{H}_J \cong \mathcal{H}_c$.

The encoding isometry $V: \mathcal{H}_L \otimes \mathcal{H}_J \to \mathcal{H}_p$ maps the combination of these two subsystems into the physical Hilbert space.
Its image is the code subspace, $\mathcal{H}_c$, and we denote the projector onto this subspace as $\Pi = VV^{\dagger} \in \mathcal{B}(\mathcal{H})$, where $\mathcal{B}(\mathcal{H})$ denotes the set of operators acting on the Hilbert space $\mathcal{H}$.
An operator $A\in \mathcal{B}(\mathcal{H})$ is said to be \emph{codespace-preserving} (CSP) if it commutes with the code projector, $[A,\Pi] = 0$.
As we only aim to protect the logical subsystem from error, a codespace-preserving operator that entangles the two encoded subsystems is undesirable, as it can contaminate the important information with junk information that has not been well-protected from error.
Thus we focus on \emph{dressed-CSP} operators, which are CSP operators $A$ that act as a product operator on the two subsystems, i.e.\ $V^{\dagger}\! A V = A_L \otimes A_J \in \mathcal{B}\left( \mathcal{H}_L \otimes \mathcal{H}_J \right)$.
We say that such an operator $A$ \emph{implements} the logical operation $A_L\otimes A_J$, which is a \emph{dressed-logical} operator.
When the operator acts trivially on the junk subsystem, i.e.\ $A_J = \Id_J$, we say that $A$ is a \emph{bare-CSP} operator, and its implemented operator $A_L \otimes \Id_J$ is a \emph{bare-logical} operator.
We say that a pair of bare- or dressed-CSP operators $A$ and $B$ are \emph{logically equivalent} if, up to scalar multiplication, they act the same way on the logical subsystem,  i.e.\ $A_L \propto B_L$.

Let us now discuss error-correction properties of subsystem codes.
We present three equivalent conditions for a particular region $R$ (that is, a collection of physical subsystems) to be \emph{correctable}\footnote{
One can generalize this definition by considering correctability with respect to any particular noise channel, rather than just erasure channels as we do here. However we are primarily interested in the implications of the geometric local structure of holographic codes, which is most readily apparent when studying correctability with respect to erasure of particular regions.} -- i.e.\  when it is possible to correct for the erasure of $R$.
The conditions are expressed in terms of operators 
that are supported in $R$, by which we mean those elements of $\mathcal{B}(\mathcal{H})$ that take the form $A_R \otimes \Id_{R^c} $.

\begin{lemma}
	For a region $R$ of a physical Hilbert space of a code with projector $\Pi$, the following properties are equivalent \cite{pastawskiFaulttolerant2015}:
\begin{itemize}
	\item There exists a recovery channel $\mathcal{R}$ such that for every bare-CSP operator $A$, $\Pi (\mathcal{R} \circ \mathcal{N}_R)^{\dagger}(A) \Pi = \Pi A \Pi$, with $\mathcal{N}_R(\cdot ) \defi  \tr_R(\cdot )$ the channel that completely erases subsystem $R$.
	\item For any operator $ B_R $ supported in $R$, and any bare-CSP operator $A$, $[A,\Pi B_R \Pi] = 0$.
	\item For any logical operator $A_L$, there exists a bare-CSP operator $A$ supported in $R^c$ such that $A$ implements $A_L \otimes \Id_J$.
	\end{itemize}
	\label{lem:correctable}
\end{lemma}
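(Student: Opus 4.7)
The plan is to prove the three conditions equivalent via the cycle $(3) \Rightarrow (2) \Rightarrow (1) \Rightarrow (3)$. Throughout, I write $V : \mathcal{H}_L \otimes \mathcal{H}_J \to \mathcal{H}_p$ for the encoding isometry and $\Pi = V V^\dagger$, so that a bare-CSP $A$ obeys $A\Pi = \Pi A = \Pi A \Pi$ with $V^\dagger A V = \bar A_L \otimes \Id_J$. A key preparatory observation is that any two bare-CSP operators implementing the same bare-logical operator share the same sandwich, since $\Pi A \Pi = V(\bar A_L \otimes \Id_J) V^\dagger$ depends only on $\bar A_L$.

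The implication $(3) \Rightarrow (2)$ reduces to direct manipulation. Given bare-CSP $A$ and any $B_R$ supported in $R$, (3) supplies a bare-CSP $A' \in R^c$ with $\Pi A = \Pi A'$ and $A\Pi = A'\Pi$. The CSP identity $A\Pi = \Pi A$ collapses the commutator to
\[
[A, \Pi B_R \Pi] = \Pi A B_R \Pi - \Pi B_R A \Pi = \Pi [A', B_R] \Pi = 0,
\]
where the final equality uses that $A'$ acts as $\Id_R$ on $R$ while $B_R$ is supported there.

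For $(2) \Rightarrow (1)$ I would recast the hypothesis in Knill--Laflamme form. Since $V^\dagger V = \Id$, pulling (2) back through $V$ gives $[\bar A_L \otimes \Id_J,\, V^\dagger B_R V] = 0$ for every $\bar A_L$ and every $B_R$ on $R$. The commutant of $\mathcal{B}(\mathcal{H}_L) \otimes \Id_J$ inside $\mathcal{B}(\mathcal{H}_L \otimes \mathcal{H}_J)$ is $\Id_L \otimes \mathcal{B}(\mathcal{H}_J)$, so each $V^\dagger B_R V = \Id_L \otimes G_J(B_R)$. Applying this to products $K_\alpha^\dagger K_\beta$ of Kraus operators of $\mathcal{N}_R$ puts the erasure of $R$ in the standard subsystem Knill--Laflamme form, from which an explicit Petz/transpose-map recovery $\mathcal{R}$ is built in the usual way, and the Heisenberg identity in (1) can be verified directly on bare-CSP inputs.

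The main obstacle is $(1) \Rightarrow (3)$: upgrading the recovery output to a genuinely bare-CSP representative supported on $R^c$. Starting from $A_{\mathrm{can}} = V(\bar A_L \otimes \Id_J) V^\dagger$, the Heisenberg image
$(\mathcal{R}\circ\mathcal{N}_R)^\dagger(A_{\mathrm{can}}) = \Id_R \otimes \mathcal{R}^\dagger(A_{\mathrm{can}})$
(using $\mathcal{N}_R^\dagger(X) = \Id_R \otimes X$) is automatically supported on $R^c$ and has the correct $\Pi$-sandwich by (1), but it need not commute with $\Pi$. I would close the gap via the decoupling already established along the way: condition (2) forces $\operatorname{tr}_{R^c}[V \rho_{LJ} V^\dagger]$ to depend only on the marginal $\rho_J$, and an Uhlmann argument produces, for each fixed $|\psi_J\rangle$, a unitary on $R^c$ that maps $V(|\phi_L\rangle \otimes |\psi_J\rangle) \to V(\bar A_L|\phi_L\rangle \otimes |\psi_J\rangle)$. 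The technical crux is promoting these slice-by-slice unitaries to a single $R^c$-supported operator acting as $\bar A_L \otimes \Id_J$ uniformly across junk states; this is handled by introducing a purifying reference on $L$ and exploiting that the decoupling holds jointly on $R$ versus the logical-plus-reference sector, which then furnishes the desired bare-CSP representative on $R^c$.
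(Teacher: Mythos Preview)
The paper does not supply its own proof of this lemma; it is stated with a citation to Pastawski--Yoshida and accompanied only by the remark that, in the trivial-junk case, the equivalence of the first two conditions is the Knill--Laflamme criterion and the third is the cleaning lemma. There is therefore no in-paper argument to compare your proposal against.

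Your sketch is broadly correct, and the steps $(3)\Rightarrow(2)$ and $(2)\Rightarrow(1)$ are clean. There is, however, a structural slip in $(1)\Rightarrow(3)$: you write that ``condition (2) forces $\operatorname{tr}_{R^c}[V\rho_{LJ}V^\dagger]$ to depend only on $\rho_J$'' and then run the Uhlmann argument from that decoupling. In the cycle $(3)\Rightarrow(2)\Rightarrow(1)\Rightarrow(3)$ you are not entitled to use (2) while proving $(1)\Rightarrow(3)$; the decoupling you invoke was derived under hypothesis (2), not (1), so as written the argument is circular. The repair is easy: the same decoupling follows directly from (1), since the Heisenberg identity there says precisely that the $L$-marginal of any code state is recoverable from $R^c$ alone, and the standard recoverability--decoupling duality (purify $L$ by a reference $L'$; recoverability of $L$ from $R^c$ is equivalent to $L'$ being product with $R$) then gives exactly the statement you need. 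Once the decoupling is derived from (1) rather than borrowed from (2), your Uhlmann construction of the bare-CSP representative on $R^c$ goes through.
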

\begin{definition}
	A region $R$ of the physical Hilbert space is said to be \textbf{\emph{correctable}} if the above conditions hold.
\end{definition}

We remark that each of these conditions refers to \emph{bare-CSP} operators, which is in turn dependent on the choice of logical subsystem $\mathcal{H}_L$. 
Some regions may be correctable for some choices of $\mathcal{H}_L$ and not others.
In the case that the junk subsystem is trivial, i.e.\ $\mathcal{H}_L \equiv \mathcal{H}_c$, then the equivalence of the first two points in \cref{lem:correctable} corresponds to the familiar Knill-Laflamme error-correction conditions \cite{Knill2004}, and the equivalence of the third point is the so-called \emph{cleaning lemma} for stabilizer codes \cite{bravyi2009no}.

In standard codes, there is another equivalent property to the three presented in \cref{lem:correctable}, which is that any CSP operator $A$ supported in $R$ implements the logical identity up to a scalar, $V^{\dagger}AV \propto \Id$. 
A similar property holds for correctable regions in subsystem codes, although it is not known to be a sufficient criterion for correctability in this context.
\begin{lemma}
	Suppose a region $R$ is \emph{correctable}. 
	Then any \emph{dressed-CSP} operator supported on $R$ implements a logical operation of the form $\Id_L \otimes A_J$, for some $A_J \in \mathcal{B}(\mathcal{H}_J)$.
	\label{lem:dressedtrivial}
\end{lemma}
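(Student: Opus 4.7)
The strategy is to invoke the second equivalent characterization of correctability from \cref{lem:correctable}, taking the dressed-CSP operator $A$ itself to play the role of the operator supported in $R$. Since $R$ is correctable and $A$ is supported on $R$, we obtain $[Y, \Pi A \Pi] = 0$ for every bare-CSP operator $Y$. Because both $A$ and $Y$ are codespace-preserving, $\Pi A \Pi = \Pi A = A \Pi$, which will be useful when pushing the identity through the encoding isometry.

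The next step is to translate this commutator into the encoded picture. Using $\Pi V = V$, $V^\dagger V = \Id$, together with the defining relations $V^\dagger Y V = Y_L \otimes \Id_J$ and $V^\dagger A V = A_L \otimes A_J$, a short computation should give
\begin{equation*}
V^\dagger Y \Pi A \Pi V = Y_L A_L \otimes A_J, \qquad V^\dagger \Pi A \Pi Y V = A_L Y_L \otimes A_J,
\end{equation*}
so the operator identity above collapses to $[Y_L, A_L] \otimes A_J = 0$ in $\mathcal{B}(\mathcal{H}_L \otimes \mathcal{H}_J)$.

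It then remains to argue that $Y_L$ can be made to range over all of $\mathcal{B}(\mathcal{H}_L)$. This is immediate from the explicit construction $Y := V(Y_L \otimes \Id_J) V^\dagger$, which is manifestly a bare-CSP operator with logical action $Y_L$ for every choice of $Y_L$. Provided $A_J \ne 0$, the operator $A_L$ must therefore commute with the full matrix algebra on $\mathcal{H}_L$ and hence be a scalar; absorbing this scalar into $A_J$ delivers $V^\dagger A V = \Id_L \otimes A_J$ as claimed, while the degenerate case $A_J = 0$ is already of the desired form. I do not anticipate any genuine obstacle in executing this plan---only the bookkeeping of the projector sandwich $\Pi A \Pi$ and the scalar ambiguity inherent in the tensor product decomposition $A_L \otimes A_J$. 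Conceptually, this is just the subsystem-code analogue of the familiar fact that a codespace-preserving operator on a correctable region acts as a scalar on the codespace, with the tensor-product structure of $\mathcal{H}_e$ converting ``scalar on codespace'' into ``identity on $\mathcal{H}_L$ with arbitrary action on $\mathcal{H}_J$.''
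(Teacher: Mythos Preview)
Your proof is correct and reaches the same conclusion, but via a different equivalent characterization of correctability than the paper uses. The paper invokes the third bullet of \cref{lem:correctable} (the cleaning property): for any $B_L\in\mathcal{B}(\mathcal{H}_L)$ there is a bare-CSP operator $B$ supported on $R^c$ implementing $B_L\otimes\Id_J$, and then disjointness of the supports of $A$ and $B$ gives $[A,B]=0$, hence $[A_L,B_L]=0$ directly. You instead use the second bullet (the commutator condition $[Y,\Pi A\Pi]=0$ for all bare-CSP $Y$) and push it through the isometry. Both routes are equally short; the paper's version sidesteps the projector bookkeeping by appealing to complementary supports, while yours avoids having to exhibit an operator on $R^c$ at the price of tracking $\Pi$ explicitly. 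The degenerate case $A_J=0$ that you flag is handled tacitly in the paper too, since $A_L$ is only defined up to a scalar in the decomposition $A_L\otimes A_J$.
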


\begin{proof}
Suppose $R$ is correctable. 
Then take any dressed-CSP operator $A$ supported on $R$ with $V^{\dagger} A V = A_L \otimes A_J $.
By assumption, for any $B_L \in \mathcal{B}(\mathcal{H}_L)$, one can find $B$ supported on $R^c$ such that $V^{\dagger} B V = B_L \otimes \Id_J $.
Since $A$ and $B$ are supported on complementary regions, we have $[A,B]=0$, which again implies $[A_L, B_L ] = 0$.
Since this is true for any $B_L \in \mathcal{B}(\mathcal{H}_L)$, $A_L$ must be proportional to the identity, $A_L = c\Id_L$.
\end{proof}

\subsection{Entanglement wedge in subsystem codes} \label{subsec:holocodes-as-subsystem-codes}
The encoded space of a subsystem code factorizes into a tensor product of a logical subsystem and a junk subsystem.
More generally, it typically has a richer tensor product structure in the form of a decomposition into local encoded subsystems 
\begin{align}
    \mathcal{H}_e = \bigotimes_{i=1}^{n_L} \mathcal{H}_i, \label{eq:TPS}
\end{align}
e.g.\ individual qudits, sites of a lattice, etc. 
Then each subset $S \subset \left\{ 1,\dots ,n_L \right\}$ corresponds to a choice of logical subsystem $\mathcal{H}_L = \bigotimes_{ i\in S} \mathcal{H}_i$, with junk subsystem $\mathcal{H}_J = \bigotimes_{i\notin S} \mathcal{H}_i$.
We then say that a region is correctable with respect to $S$ if it is correctable with respect to the corresponding choice of logical and junk subsystems.

The correctability of a physical region is preserved under union of logical subsystems, i.e.\ if a region $R$ is correctable with respect to $S_1$ and $S_2$, it is also\footnote{
This follows from the fact that any operator $A$ supported on the Hilbert space associated with $S_1 \cup S_2$ can be written as a linear combination of product operators on the spaces associated with $S_1$ and $S_2 \backslash S_1$, $A = \sum_{i}^{} A_{1}^i \otimes A_{2\backslash 1}^i$, with an implicit identity acting on systems not labelled.  By correctability of $R$ w.r.t.\ $S_{1,2}$, each operator $A_1^i $ and $A_{2\backslash 1}^i$ can be implemented by a CSP operator supported on $R^c$.  These operators can be multiplied and linearly combined appropriately to implement $A$ with a CSP operator on $R^c$, and thus $R$ is correctable w.r.t.\ $S_1 \cup S_2$.} 
correctable with respect to $S_1 \cup S_2$.
This implies the existence of a unique ``maximal'' logical subsystem given by the union of all local encoded subsystems with respect to which $R$ is correctable. 
We define this to be the \emph{maximal entanglement wedge} of $R^c$ in the subsystem code.
\begin{definition}
	The \textbf{\emph{maximal entanglement wedge}}, denoted $\mathcal{E}[R]$, is defined to be the union of all local encoded subsystems with respect to which $R^c$ is correctable. 
	\label{def:ew}
\end{definition}
Let us justify the terminology used here by comparing to the entanglement wedge as defined in holography.
Any operator acting on local encoded subsystems in the entanglement wedge of $R$ can be reconstructed in $R$, just as in holography.
Furthermore, any alternative definition of an entanglement wedge with this property (for example, the greedy entanglement wedge map of HaPPY), must be included within the maximal entanglement wedge as defined above%
\footnote{We conjecture that our definition in fact coincides with that of the greedy entanglement wedge from HaPPY -- essentially, perfect tensors are already so highly constrained that the entire algebra of operators on a subsystem can only be correctable if this is already guaranteed by the properties of perfect tensors.
}.

We remark that this definition of maximal entanglement wedge makes no additional assumptions beyond a particular choice of decomposition into local encoded subsystems.
Thus the existence of an entanglement wedge map is a general feature of any subsystem code.

Relevant to our purposes of assessing the usefulness of holographic codes for fault-tolerant quantum computation, the maximal entanglement wedge suggests a correlation between the usefulness of a subsystem code for practical error-correction against erasure of $R^c$, and its quality as a model of holography -- namely, both of these characteristics are improved when the entanglement wedge of $R$ is larger.
The former should be clear; a larger entanglement wedge means that $R^c$ is correctable with respect to more subsystems, and so more encoded information can be protected from erasures.

As for the latter, we remark that there is a limit to how large entanglement wedges can be -- a limit which, it turns out, AdS/CFT saturates.
A given local encoded subsystem cannot belong to both the maximal entanglement wedge of a boundary region $R$ and its complement $R^c$, as this would imply that its bare logical operators can be reconstructed in both $R$ and $R^c$, which would mean that all its bare logical operators must commute; a contradiction.
Thus we have in general,
\begin{align}
    \mathcal{E}[R] \subseteq \mathcal{E}[R^c]^c.
    \label{eq:ewlimit}
\end{align}
If this limit is saturated for some $R$, that is ${\mathcal{E}[R] = \mathcal{E}[R^c]^c}$, then we say that $R$ obeys \emph{complementary recovery}.
\begin{definition}
	If all local encoded subsystems belong to either $\mathcal{E}[R]$ or $\mathcal{E}[R^c]$, then $R$ is said to obey \textbf{\emph{complementary recovery}}.
	\label{def:cr}
\end{definition}
In full AdS/CFT, the entanglement wedge map is expected to satisfy this property for any region $R$ -- in this case we say that the code itself obeys complementary recovery.
In fact, it is known that a subsystem code with complementary recovery automatically exhibits the Ryu-Takayanagi formula from holography~\cite{harlowRyuTakayanagiFormulaQuantum2017}. 
Thus, codes with larger entanglement wedges -- closer to the limit set by \cref{eq:ewlimit} -- are both better at protecting from errors%
\footnote{More precisely we mean that for a local encoded subsystem with a fixed price, a code that satisfies complementary recovery leads to the maximum possible distance for that subsystem.}%
, \emph{and} more closely resemble holography. 

Following this line of reasoning, several previous works have defined holographic codes to be those that satisfy complementary recovery.
As we show later in \cref{thm:complementary}, our results are  particularly straightforward to state for these codes.
However, this class of codes, which we refer to as \emph{ideal holographic codes}, is too restrictive for our purposes. 
In particular, the most notable (and original) example of a holographic stabilizer code, the HaPPY code, is excluded from this class. 
It is relatively easy in HaPPY to construct a region with many disconnected pieces, such that the majority of the bulk qubits lie in neither its entanglement wedge nor that of its complement -- an example in \cref{fig:empty} is shown where both entanglement wedges are completely empty.
One can also construct connected regions in HaPPY codes which do not obey complementary recovery, for example see \cref{fig:almostcomplementary}; however these typically have only a small fraction of the bulk that lies in neither entanglement wedge.
Furthermore, many regions do not have any such residual bulk region, as in \cref{fig:complementary}.
Other holographic codes with promising attributes for practical application similarly fail to meet the stringent criterion of an ideal holographic code~\cite{harrisCalderbankSteaneShorHolographicQuantum2018,harrisMaximumLikelihoodDecoder2020}.
Thus for our purposes, a holographic code is one which (\textit{almost}) saturates
\cref{eq:ewlimit} for many regions $R$; the more common this property is among the code's regions, the closer to an ideal holographic code it becomes.
For example, connected regions in HaPPY typically fail to satisfy complementary recovery by a small fraction of the total number of bulk sites, tending to zero in the worst case\footnote{The size of this residual region for a connected boundary component was in fact argued to be constant in Ref.~\cite{pastawskiHolographicQuantumErrorcorrecting2015}.} as $\log |R|/|B|\rightarrow 0$ as the number of sites in the boundary diverges. 
Hence the connected regions in HaPPY \textit{almost} satisfy complementary recovery, making it a close model of holographic features for connected boundary regions.

\begin{figure}[t]
	\centering
	\subfloat[An example of a choice of connected region $R$ in a HaPPY code (shown in yellow) which obeys complementary recovery; i.e.\ its entanglement wedge is the complement of the entanglement wedge of its complement (shown in red). \label{fig:complementary}]{
	    \makebox[\linewidth][c]{
	    \includegraphics[width=0.5\linewidth]{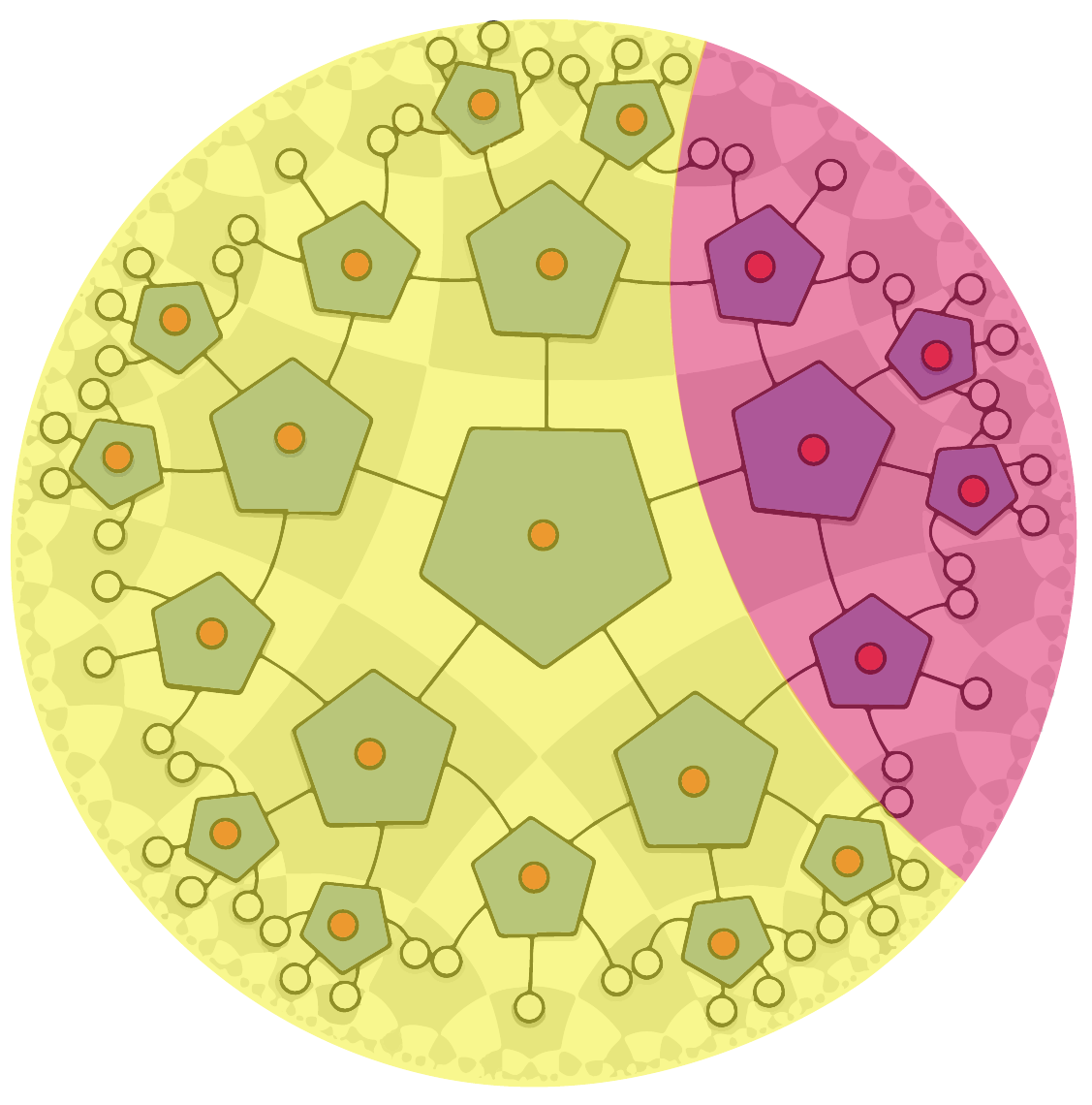}
	    }}
	\\
	\subfloat[A case in which a connected region does not obey exact complementary recovery, and instead a small $\mathcal{O}(1)$ residual region exists outside of both entanglement wedges. \label{fig:almostcomplementary}]{
	\makebox[\linewidth][c]{
	    \includegraphics[width=0.5\linewidth]{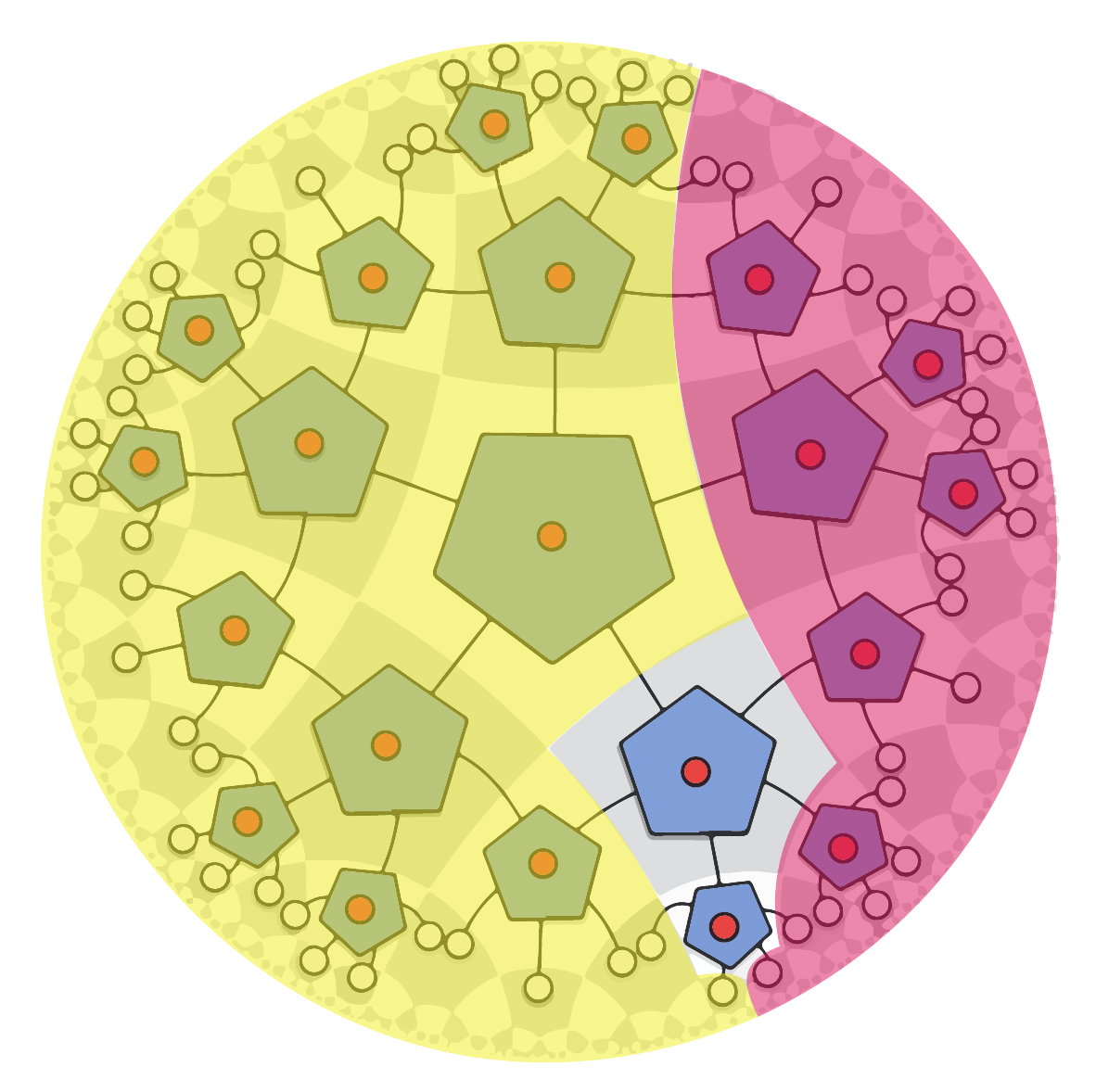}
	    }}
	\\
	\subfloat[An example of a choice of a highly disconnected region $R$ (shown along with its entanglement wedge in yellow) such that both it and its complement (shown along with its entanglement wedge in red) have empty entanglement wedges.
	This is an extreme case in which complementary recovery maximally fails to hold. \label{fig:empty}]{
	\makebox[\linewidth][c]{
	    \includegraphics[width=0.5\linewidth]{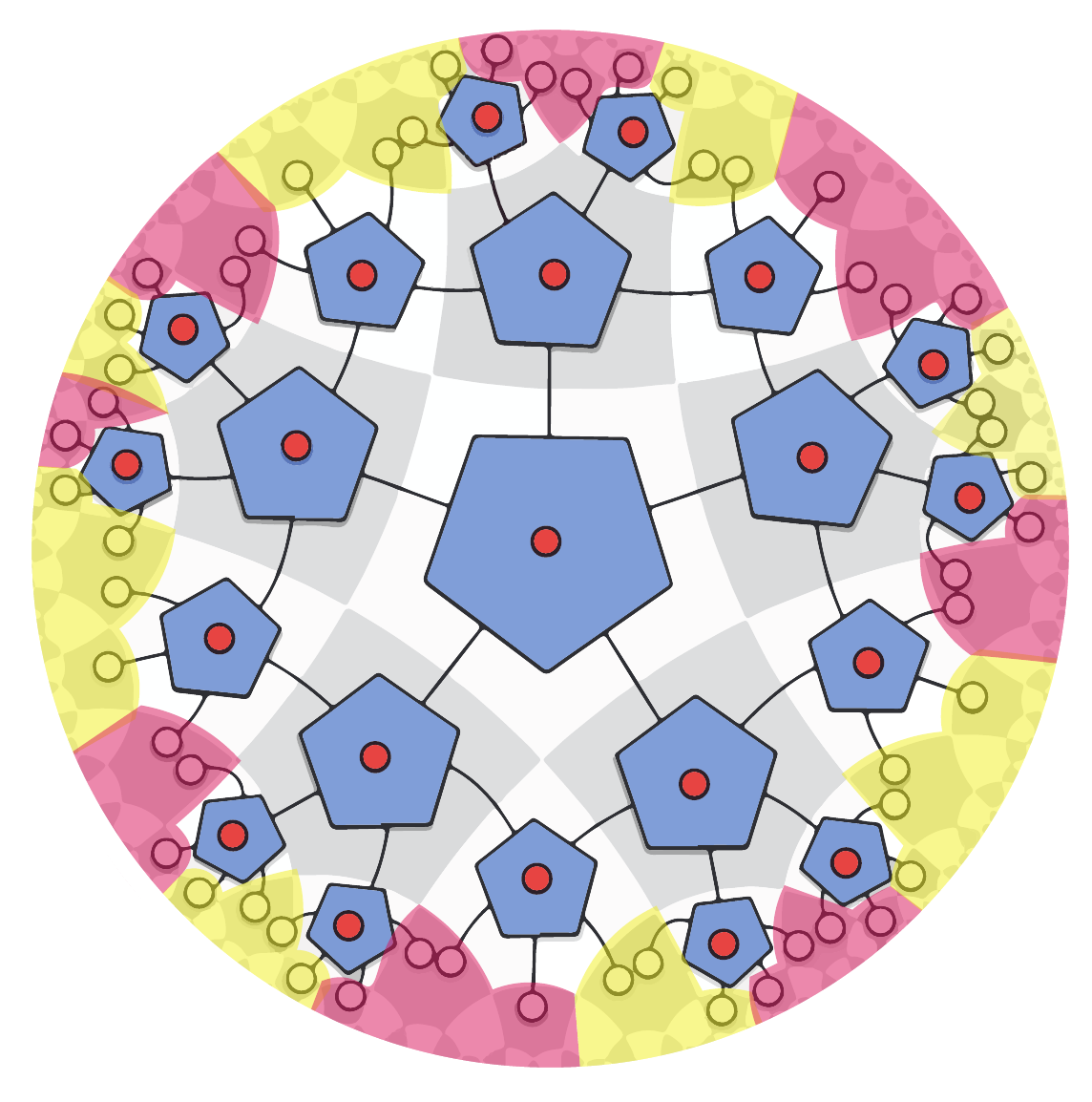}
	    }}
	\caption{Assorted examples where complementary recovery holds or fails in HaPPY. This and following figures are adapted from those in \cite{pastawskiHolographicQuantumErrorcorrecting2015}.}
	\label{fig:complementarity}
\end{figure}

\subsection{Geometric structure of holographic codes} \label{sect:geometry}
To this point, we have only relied on the tensor product structure of the physical and logical spaces, and not on the associated geometric structures that are typically associated with holographic codes (and assumed in pictures such as \cref{fig:complementarity}).
Although some of our results hold without a geometric structure, we require it at times -- for example, to generalize beyond transversal gates to locality-preserving gates in \cref{sect:nontransversal}.

Following Ref.~\cite{pastawskiCodePropertiesHolographic2017} each local encoded subsystem in the tensor product decomposition of \cref{eq:TPS} is associated with a specific point in some continuous bulk geometry.
A geometric region is then associated with the tensor product of Hilbert spaces corresponding to each local encoded subsystem located in that region.
We typically identify geometric distances in the bulk with the graph distances along the graph that underlies the tensor network of the encoding isometry.
The boundary geometry is then inherited by smoothly continuing this bulk geometry to the outward-facing legs of the outermost layer of tensors.
We make use of the geometric structure in two ways; first, we sometimes refer to \emph{connected} regions of the boundary, which are defined in terms of this geometry (e.g.\ in the above discussion about connected regions in HaPPY).
Second, we occasionally make use of distance measures in either the bulk or the boundary; these are defined in terms of the graph distance of the relevant lattice.
We assume these distance measures are normalized such that two adjacent sites are a distance of order one apart.

We now introduce some characterizations of how well protected a logical subsystem $S$ is against errors, some of which make explicit use of the geometric structure associated with holographic codes.
\begin{definition}\label{def:distance}
	We define the \textbf{\emph{code distance}} $d(S)$ of a logical subsystem $S$ in a subsystem code as the size of the smallest boundary region $R$ that is not correctable with respect to $S$, with size measured by number of qubits.
	This is equivalent to
	\begin{align}
	    d
		(S) = \min_R \{ 
		|R|: R\text{ is not correctable wrt }S\}.
	\end{align}
\end{definition}
Physically motivated topological and holographic codes come endowed with a richer geometric structure.
This leads to refined notions of generalized code distance, several of which we now introduce with holographic codes in mind.
\begin{definition}\label{def:connectedDistance}
	We define the \textbf{\emph{connected code distance}} $d_c(S)$ of a logical subsystem $S$ of a subsystem code as the size of the smallest \emph{connected} boundary region $R$ that is non-correctable with respect to $S$, with size measured by number of qubits. 
	This definition satisfies $d_c \geq d$.
\end{definition}

Finally, we may wish to incorporate the geometric structure in any number of other ways.
We define a notion of distance which depends on an arbitrary ``size'' function, $\sigma(R)$, defined for any region $R$.
Typically, a useful definition will satisfy other constraints, like $R'\subset R \implies \sigma(R') < \sigma(R)$ or $\sigma(R)\geq 0$, but we do not require it.
\begin{definition}\label{def:generalDistance}
	Take any notion of ``size'' of a region, $\sigma(R)$.
	We define the \textbf{\emph{$\sigma$-code distance}} $d_\sigma(S)$ of a logical subsystem $S$ of a subsystem code as the minimum size of a non-correctable region, i.e.\
	\begin{align}
	    d_\sigma(S) = \min_R \{ \sigma(R) : R\text{ is non-correctable wrt }S\}.
	\end{align}
\end{definition}
We remark that \cref{def:connectedDistance} can be thought of as a special case, in which $\sigma$ is defined to be infinity for non-connected regions, and the number of qubits in the region otherwise.

Along with the above definitions of distance, we recall the notion of \emph{code price}~\cite{pastawskiCodePropertiesHolographic2017}, defined as follows.
\begin{definition}
	We define the \textbf{\emph{
	code price}} $p(S)$ of a logical subsystem $S$ of a subsystem code as the minimum size of a region which can reconstruct all logical operators on $S$
	, with size measured by number of qubits.
	This is equivalent to
	\begin{align}
	    p
		(S) = \min_R \{ 
		|R|: R^c\text{ is correctable wrt }S\}.
	\end{align}
\end{definition}
As with distance, equivalent versions of the connected code price $p_c$ and the $\sigma $-code price $p_{\sigma }$ can be defined.
We remark that the following is a straightforward corollary of \cref{eq:ewlimit}.
\begin{lemma}
	The price is larger than the distance, $p(S) \geq d(S)$.
	\label{lem:pricedistance}
\end{lemma}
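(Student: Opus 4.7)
The plan is to chase the definitions and reduce everything to the basic incompatibility expressed in \cref{eq:ewlimit}. Let $R^\ast$ be a region that attains the minimum in the definition of the price, so that $|R^\ast| = p(S)$ and $(R^\ast)^c$ is correctable with respect to $S$. My goal is to show that $R^\ast$ itself is \emph{not} correctable with respect to $S$, which by \cref{def:distance} immediately gives $d(S) \leq |R^\ast| = p(S)$.

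To establish non-correctability of $R^\ast$, I would argue by contradiction. Suppose for contradiction that $R^\ast$ were also correctable with respect to $S$. By the union-of-correctable-subsystems property noted just before \cref{def:ew}, correctability of $(R^\ast)^c$ w.r.t.\ $S$ implies that each local encoded subsystem in $S$ lies in $\mathcal{E}[R^\ast]$, i.e.\ $S \subseteq \mathcal{E}[R^\ast]$. Symmetrically, correctability of $R^\ast$ w.r.t.\ $S$ gives $S \subseteq \mathcal{E}[(R^\ast)^c]$. Combining these with \cref{eq:ewlimit} yields
\begin{equation*}
    S \subseteq \mathcal{E}[R^\ast]\cap \mathcal{E}[(R^\ast)^c] = \emptyset,
\end{equation*}
contradicting the non-triviality of the logical subsystem $S$. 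Unfolding the proof of \cref{eq:ewlimit}: a single local encoded subsystem cannot simultaneously have its bare-logical algebra reconstructed in $R^\ast$ and in $(R^\ast)^c$, since operators on disjoint supports commute, which would force the whole (non-abelian) algebra $\mathcal{B}(\mathcal{H}_i)$ to be abelian.

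Putting this together, $R^\ast$ is not correctable w.r.t.\ $S$, so it is a valid candidate in the minimization defining $d(S)$, giving $d(S) \leq |R^\ast| = p(S)$, which is the claim. The argument transparently adapts to the connected and $\sigma$-variants: if $R^\ast$ realizes the connected (resp.\ $\sigma$-) price, the same reasoning shows it is a non-correctable connected (resp.\ finite-$\sigma$) region, so the same inequality holds for $d_c$ and $d_\sigma$.

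There is essentially no technical obstacle here; the only subtlety worth flagging is the implicit assumption that $S$ is a non-trivial logical subsystem, without which the final contradiction step fails (but in that case both $p(S)$ and $d(S)$ are trivially zero or undefined, and the inequality is vacuous). All the real work has already been done in setting up \cref{def:ew} and deriving \cref{eq:ewlimit}, so this corollary is truly one line once one has internalized what $\mathcal{E}[R]$ encodes about correctability.
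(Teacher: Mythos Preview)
Your proof is correct and follows exactly the approach the paper indicates: the paper simply states that the lemma is a straightforward corollary of \cref{eq:ewlimit}, and you have spelled out precisely that argument, showing that the price-attaining region $R^\ast$ must be non-correctable (else $S\subseteq \mathcal{E}[R^\ast]\cap\mathcal{E}[(R^\ast)^c]=\emptyset$) and hence witnesses the distance bound.
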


We expect the entanglement wedge map in a holographic stabilizer code to be compatible with its geometric structure, as is the case in AdS/CFT. 
Specifically, we expect that a local extension of the bulk region should be possible by locally deforming the boundary region -- the entanglement wedge map is ``smooth'' in some sense%
\footnote{There can be discontinuities in the entanglement wedge map, such as when a phase transition occurs between connected and disconnected entanglement wedges.
These do not contradict our definition of smoothness.}%
.
Of course, the hyperbolic nature of geometries associated with spatial slices of AdS means that an $\mathcal{O}(1)$ extension of the entanglement wedge in the bulk by an amount $\delta$ may require an $\mathcal{O}(\exp(r+\delta))$ extension to the diameter of the boundary region, where $r$ is the distance to the boundary in AdS units (the continuous analogue of the number of layers in HaPPY).
Nevertheless, this weak notion of continuity is sufficient for our purposes, as the distance of a bulk region typically also scales exponentially with $r$.
The following definition formalizes this notion.

\begin{definition}
    A code is said to have a \emph{$\kappa_R$-smooth entanglement wedge map} if for any choice of a connected boundary region $R$ and positive numbers $\delta,x$ that satisfy $\kappa_R(\delta) \le x$ then there exists a connected boundary region $R_{x}$ such that $R\subset R_x$, $\abs{R_x -R}\le x$ and $B(\mathcal{E}(R),\delta) \subset\mathcal{E}(R_x)$, where $B(\mathcal{E}(R),\delta)$ is the region of all points with distance at most $\delta$ from $\mathcal{E}[R]$.
    In most cases we take $\kappa_R$ to be a non-decreasing function that satisfies $\kappa_R(0)=0$. 
	\label{def:kappa}
\end{definition}

In the above definition $\kappa_R$ measures how much a boundary region $R$ must be extended to accommodate a small extension of its entanglement wedge $\mathcal{E}(R)$ in the bulk. 

\begin{figure}[t]
	\centering
	\subfloat[An example of a region $R$ with small $\kappa _R$. 
		The region $R$ is shown in red, and for an expansion of $1$ graph distance in the bulk, a boundary expansion of only $1$ site in either direction is required; thus $\kappa _R(1) = 1$. The expanded region is shown in yellow. \label{fig:kappa1}]{
		\makebox[\linewidth][c]{
		\includegraphics[width=.5\linewidth]{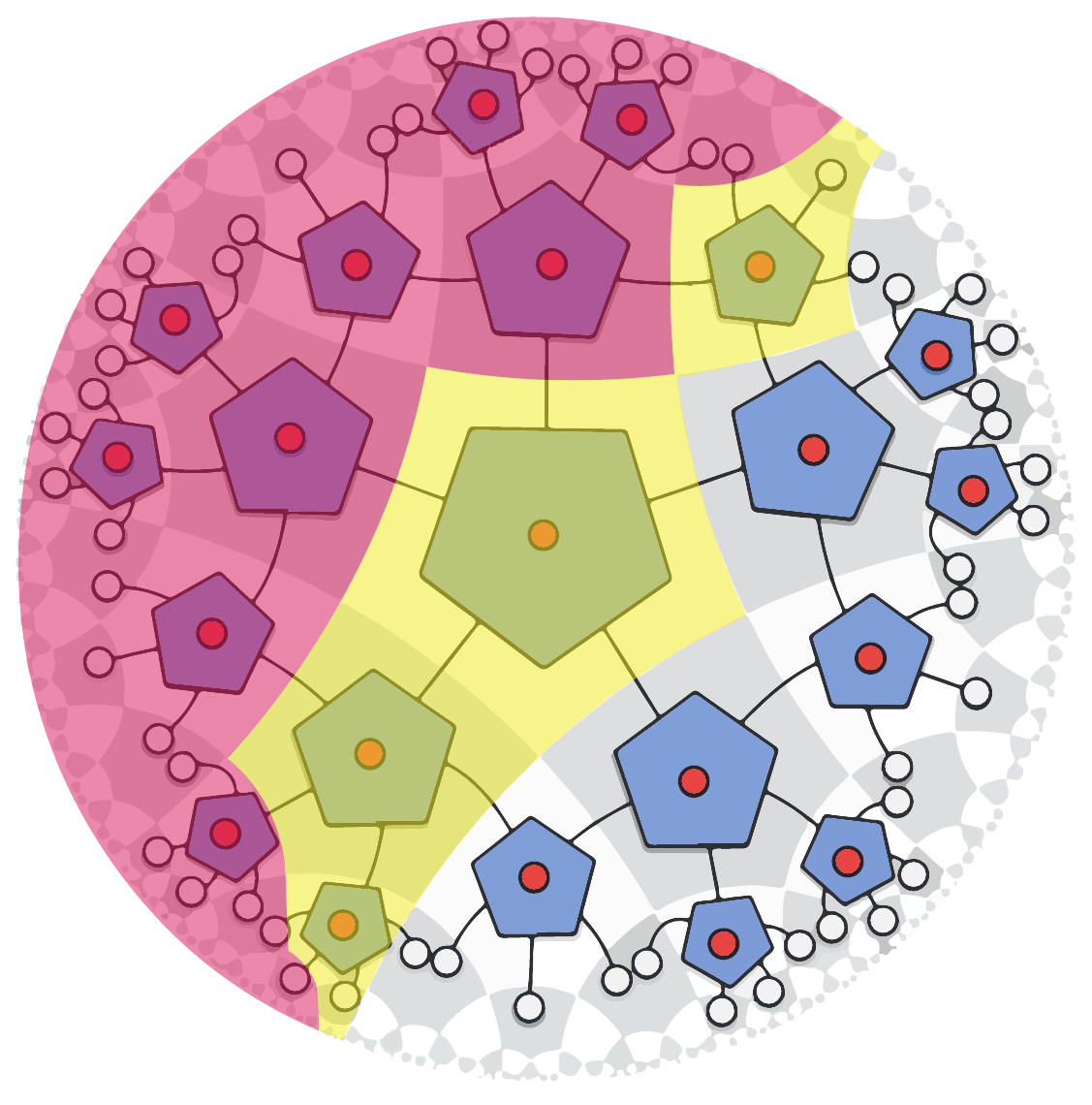}
		}}
		\\
		\subfloat[An example of a region $R$ with a large $\kappa _R$. 
		Expansion by $1$ in the bulk requires at least $11$ sites added to either side, thus $\kappa _R(1) = 11$. \label{fig:kappa2}]{
		\makebox[\linewidth][c]{
		\includegraphics[width=.5\linewidth]{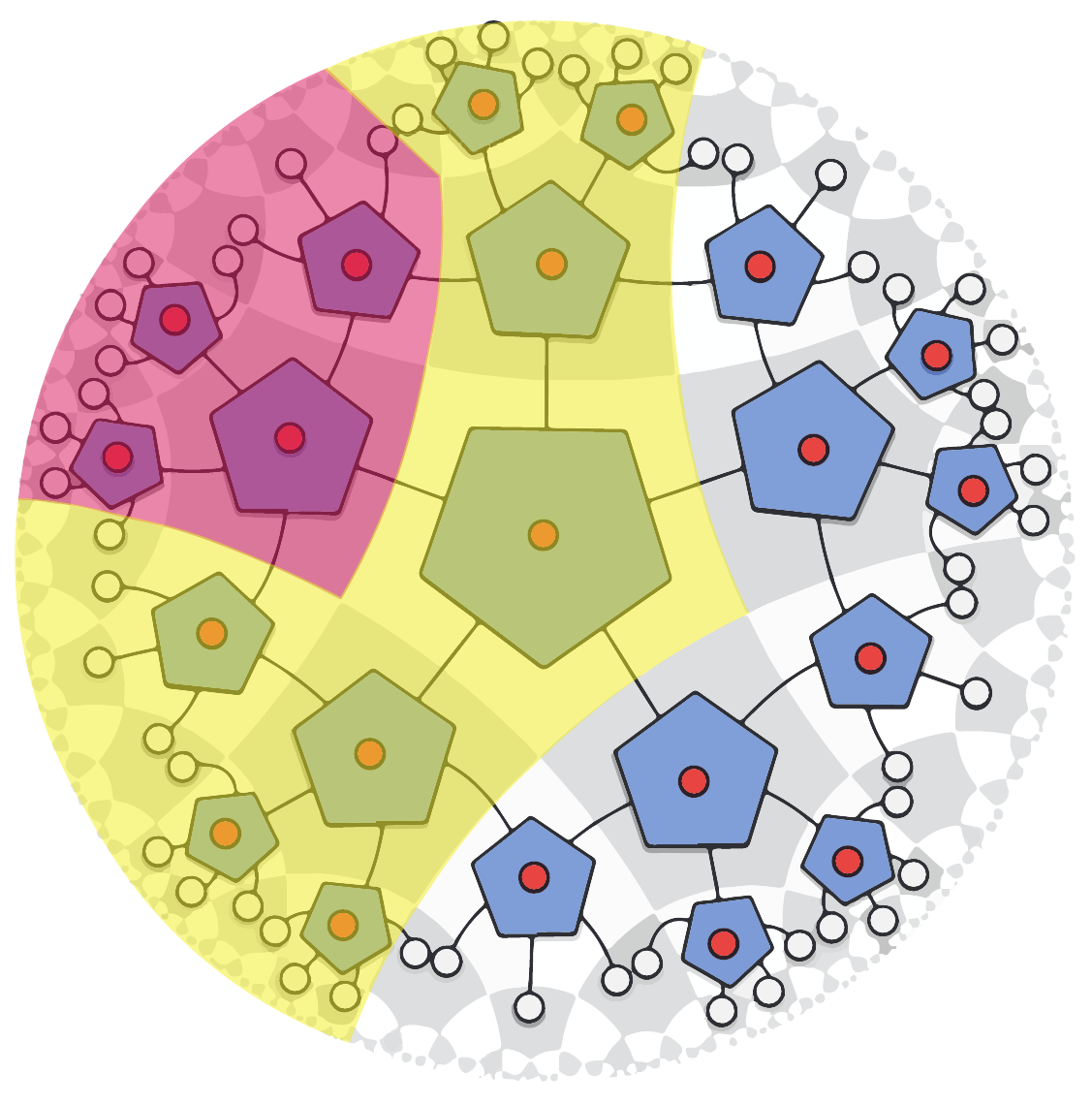}
		}}
	\caption{Illustration of the idea of a $\kappa_R $-smooth entanglement wedge map, as defined in \cref{def:kappa}.}
	\label{fig:kappa}
\end{figure}

\subsection{Stabilizer subsystem codes}
\label{subsec:stabilizer-subsystem-codes}
The main results of this paper refer to the Clifford hierarchy, which is closely related to the Pauli group that arises in stabilizer codes.
In this section we carefully introduce this notion to point out a few subtleties.

Given a decomposition of a Hilbert space into qudits and a particular choice of basis for each qudit, one can construct a Pauli group as follows.
The Hilbert space decomposes into $d$-dimensional qudits, $\mathcal{H} = \bigotimes_{n=1}^{N} \mathcal{H}_n$, with a specific basis $\left\{ \ket{j}_n \right\}_{j=1}^{d}$ chosen for each $\mathcal{H}_n \cong \mathbb{C}^d$.
The Pauli group $\mathcal{P} $ is generated by generalized $X$ operators $X_n = \sum_{j=1}^{d} \ket{(j+1) \textrm{ mod }d}_n\!\bra{j} $  (with identities acting on all but the $n$th qudit) and generalized $Z$ operators $Z_n= \sum_{j=1}^{d} \omega _d^j \ket{j}_n\!\bra{j}$ (with $\omega _d $ a primitive $d$th root of unity). 

A \emph{stabilizer} code requires the physical space to be equipped with a specific tensor product structure and choice of local basis, and some abelian subgroup of the Pauli group $\mathcal{S}\subset \mathcal{P}$ such that the only element of $\mathcal{S}$ in the center\footnote{Recall that the center $\mathcal{Z}[A]$ of an algebra $A$ is the set of elements of $A$ that commute with all of $A$.} of $\mathcal{P}$ is the identity, $\mathcal{Z}[\mathcal{P}] \cap \mathcal{S} = \left\{ \Id \right\}$\footnote{In other words, for $j=1,\dots,d-1$, $ \omega_d^j \notin \mathcal{S}$. In the qubit case, this condition coincides with the familiar $-\Id \notin \mathcal{S}$.}; this group $\mathcal{S}$ is called the \emph{stabilizer} group.
The code subspace $\mathcal{H}_c$ is then characterized as the space of states that are $+1$-eigenstates of all of elements $S\in \mathcal{S}$.
For a given choice of stabilizer group, there exists a unique\footnote{The tensor product structure and local basis for the encoded space $\mathcal{H} = \mathcal{H}_L \otimes \mathcal{H}_J$ are only unique up to transformation by a Clifford operation, which preserves the encoded Pauli group.} tensor product structure and local basis for the encoded space $ \mathcal{H}_e \cong \mathcal{H}_c$ such that the codespace-preserving physical Pauli operators implement elements of the encoded Pauli group \cite{nielsenQuantumComputationQuantum2010}.
A \emph{stabilizer subsystem} code corresponds to a particular division of these encoded qudits into logical and junk subsystems such that the logical subsystem has at least one qudit\footnote{One can equivalently define a stabilizer subsystem code by choosing some generally non-abelian subgroup $\mathcal{G}\subset \mathcal{P}$, defining the stabilizer group $\mathcal{S}$ as the center of $\mathcal{G}$, and then defining the division into logical and junk subsystems such that elements of the quotient group $\mathcal{G}/\mathcal{S}$ are identified with dressed-CSP operators that act only on the junk subsystem~\cite{Poulin2005,Bacon2005a}.}.

When the junk subsystem is not trivial, it may be possible that no operator supported on $R^c$ can implement the logical Pauli operator $A_L \otimes \Id_J$, but some \emph{dressed-CSP} operator can implement $A_L \otimes A_J$.
Thus it is useful to introduce a weaker correctability property that we term \emph{dressed-cleanable}, following Ref.~\cite{pastawskiFaulttolerant2015}. 

\begin{definition}
	$R$ is \textbf{\emph{dressed-cleanable}} if for any logical Pauli operator $P_L$ there exists a dressed-CSP operator supported on $R^c$ that implements $P_L \otimes P_J$, for some operator $P_J \in \mathcal{B}(\mathcal{H}_J)$.
	\label{def:dressedCleanable}
\end{definition}
This is a strictly weaker requirement than correctability. Compare both notions using tensor networks in Fig.~\ref{fig:cleanable-tn}. 
\begin{lemma}
	A correctable region is also dressed-cleanable.
	\label{lem:BCDC}
\end{lemma}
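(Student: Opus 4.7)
The plan is to read off the result directly from the third equivalent characterization of correctability in \cref{lem:correctable}, observing that the dressed-cleanable property is a strict weakening of that characterization.

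First, I would unpack the definitions. By \cref{lem:correctable}, if $R$ is correctable then for every logical operator $A_L$ (and hence in particular for every logical Pauli $P_L$), there exists a bare-CSP operator $A$ supported on $R^c$ with $V^{\dagger} A V = A_L \otimes \Id_J$. Now a bare-CSP operator is, by definition (see \cref{subsec:subsystem-codes}), a special case of a dressed-CSP operator in which the action on the junk subsystem is $\Id_J$. So $A$ is automatically a dressed-CSP operator supported on $R^c$.

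Next, I would match this to the form required by \cref{def:dressedCleanable}. The definition asks for a dressed-CSP operator supported on $R^c$ implementing $P_L \otimes P_J$ for \emph{some} operator $P_J \in \mathcal{B}(\mathcal{H}_J)$. The operator $A$ above implements $P_L \otimes \Id_J$, so choosing $P_J = \Id_J$ realizes the required form. Since this works for every logical Pauli $P_L$, the region $R$ is dressed-cleanable.

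There really is no obstacle here; the content of the lemma is just the observation that correctability is strictly stronger than dressed-cleanability because it forces the dressing on the junk subsystem to be trivial rather than allowing an arbitrary $P_J$. The only thing worth flagging explicitly in the write-up is that the quantification in \cref{lem:correctable} ranges over all logical operators, which of course includes the logical Pauli operators appearing in \cref{def:dressedCleanable}, so the restriction to Paulis in the definition of dressed-cleanability poses no issue.
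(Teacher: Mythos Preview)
Your argument is correct and essentially identical to the paper's own proof: both invoke the third characterization in \cref{lem:correctable} to obtain a bare-CSP operator on $R^c$ implementing each logical Pauli, then observe that a bare-CSP operator is in particular a dressed-CSP operator with $P_J = \Id_J$. There is nothing to add.
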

\begin{proof}
	Suppose $R$ is correctable.
	Then for each bare-logical operator, such as $P_L$, there exists a bare-CSP operator supported in $R^c$ that implements it.
	This is also a dressed-CSP operator, satisfying the condition for dressed-cleanability.
\end{proof}

We remark again that due to the dependence on dressed-CSP operators, \cref{def:dressedCleanable} is defined relative to a particular choice of logical subsystem.
Similar to \cref{lem:dressedtrivial}, a corresponding property holds for dressed-cleanable regions.

\begin{figure}[t]
	\centering
	\subfloat[Tensor network depiction of the definition of a bare cleanable region $R$.]{
	\makebox[\linewidth][c]{
	\includegraphics[width=1.0\linewidth]{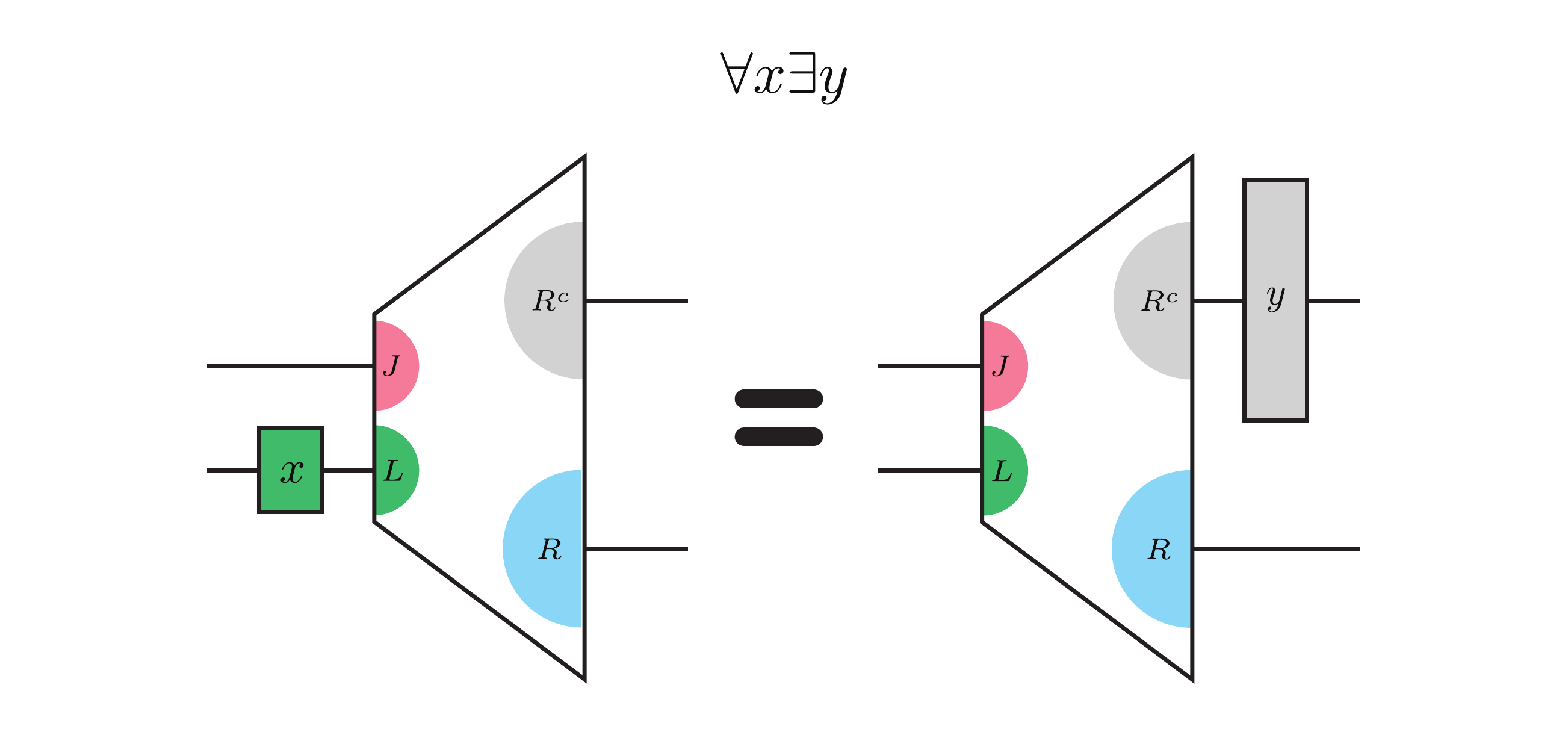}
	}}
	\\
	\subfloat[Tensor network depiction of the definition of a dressed cleanable region $R$]{
	\makebox[\linewidth][c]{
	\includegraphics[width=1.0\linewidth]{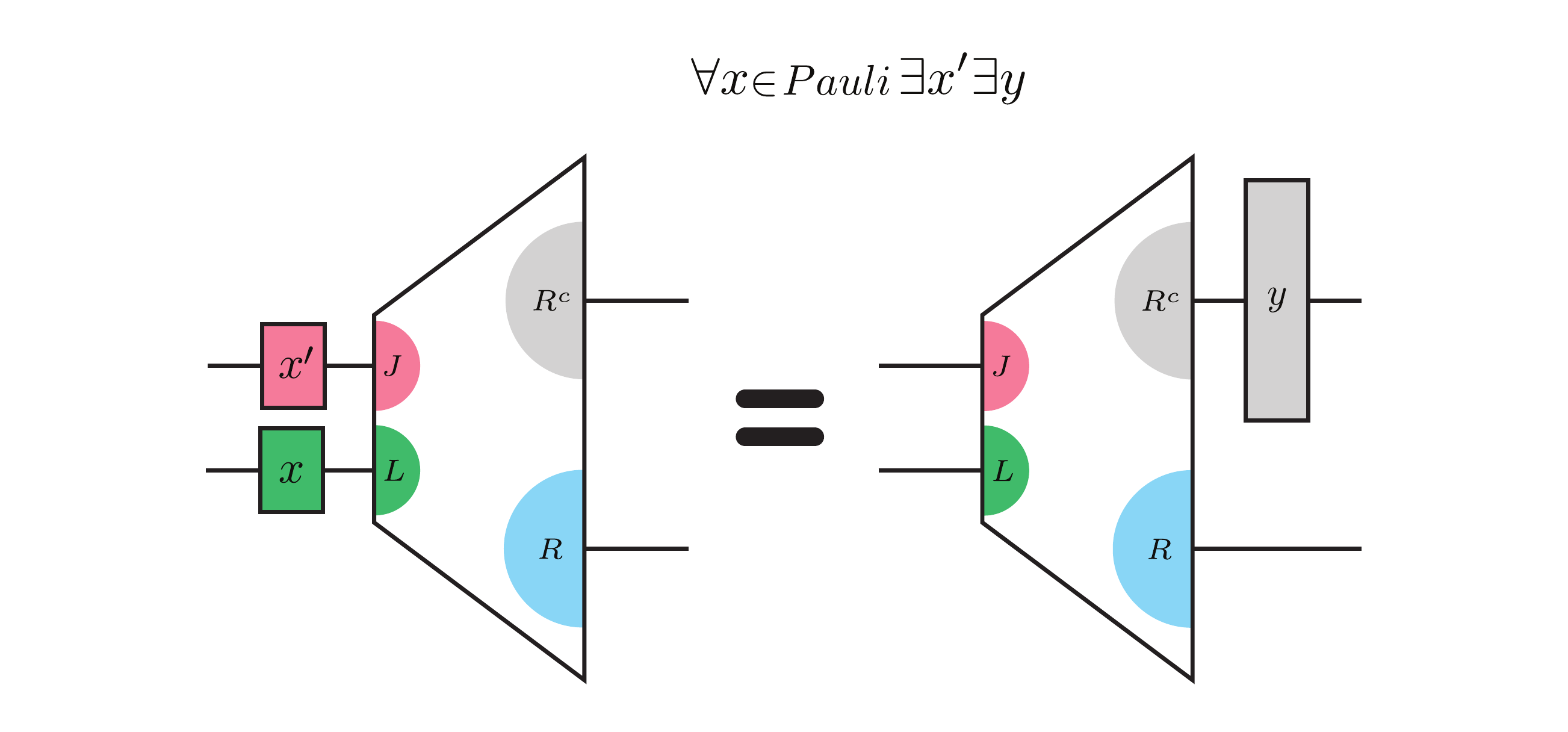}
	}}
\caption{Bare and dressed cleanability in tensor network notation. In the above, $L$ denotes the protected logical subsystem, while $J$ denotes the junk subsystem. }
\label{fig:cleanable-tn}
\end{figure}

\begin{lemma}
	Suppose a region $R$ is \emph{dressed-cleanable}.
	Then any \emph{bare-CSP} operator supported on $R$ implements a logical operation of the form $c\Id_L \otimes \Id_J$, with $c\in \mathbb{C}$.
	\label{lem:baretrivial}
\end{lemma}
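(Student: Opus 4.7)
The plan is to mirror the proof of \cref{lem:dressedtrivial}, exchanging the roles of bare and dressed operators. Let $A$ be a bare-CSP operator supported on $R$, so that $V^{\dagger} A V = A_L \otimes \Id_J$ for some $A_L \in \mathcal{B}(\mathcal{H}_L)$. The goal is to show $A_L \propto \Id_L$. The strategy is to test $A_L$ against every logical Pauli via a commutator argument, using dressed-cleanability to supply a reconstruction of each logical Pauli on the complementary region.

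First I would fix an arbitrary logical Pauli $P_L \in \mathcal{P}_L$. By dressed-cleanability of $R$, there exists a dressed-CSP operator $B$ supported on $R^c$ with $V^{\dagger} B V = P_L \otimes P_J$ for some $P_J \in \mathcal{B}(\mathcal{H}_J)$ (necessarily nonzero, since $P_L\otimes P_J$ is the image of an invertible operator under $V^{\dagger}\cdot V$). Since $A$ acts as the identity on $R^c$ and $B$ acts as the identity on $R$, they commute on the physical space, $[A,B] = 0$.

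Next I would transport this commutator to the encoded space. Both $A$ and $B$ are CSP, so $A\Pi = \Pi A$ and $B\Pi = \Pi B$; combined with $VV^{\dagger}=\Pi$ this gives
\begin{equation}
V^{\dagger} AB V = V^{\dagger} A (VV^{\dagger}) B V = (V^{\dagger}AV)(V^{\dagger}BV),
\end{equation}
and similarly for $BA$, so that
\begin{equation}
[A_L\otimes\Id_J,\; P_L\otimes P_J] = [A_L, P_L]\otimes P_J = 0.
\end{equation}
Since $P_J \neq 0$, this forces $[A_L, P_L]=0$.

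Finally I would let $P_L$ range over all of $\mathcal{P}_L$. In the stabilizer subsystem setting the logical Pauli operators span the full operator algebra $\mathcal{B}(\mathcal{H}_L)$, so $A_L$ lies in its center $\mathcal{Z}[\mathcal{B}(\mathcal{H}_L)] = \mathbb{C}\Id_L$, giving $A_L = c\Id_L$ and hence $V^{\dagger} A V = c\Id_L\otimes\Id_J$, as required. I do not anticipate any real obstacle; the only subtle point is the descent of the commutator through $V^{\dagger}\cdot V$, which is handled by the CSP property of both operators as above, and the use of the fact that $\mathcal{P}_L$ spans $\mathcal{B}(\mathcal{H}_L)$ — a feature specific to the stabilizer setting in which this lemma is stated.
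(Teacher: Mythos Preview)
Your proof is correct and follows essentially the same route as the paper's: pick an arbitrary logical Pauli $P_L$, use dressed-cleanability to reconstruct it on $R^c$, commute with $A$ by disjointness of supports, and conclude $A_L$ is central because the Paulis span $\mathcal{B}(\mathcal{H}_L)$. The only difference is that you spell out the descent of the commutator through $V^{\dagger}(\cdot)V$ via the CSP property and the nonvanishing of $P_J$, which the paper leaves implicit.
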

\begin{proof}
Suppose $R$ is dressed-cleanable.
Then take any bare-CSP operator $A$ supported on $R$ with $V^{\dagger} A V = A_L \otimes \Id_J$.
By assumption, for any $P_L \in \mathcal{P}(\mathcal{H}_L)$, one can find $P$ supported on $R^c$ such that $V^{\dagger} P V = P_L \otimes P_J$.
Since $A$ and $P$ are supported on complementary regions, we have $[A,P]=0$, which implies $[A_L, P_L ] = 0$.
Since this is true for any $P_L$, $A_L$ must commute with all Pauli operators in $\mathcal{P}(\mathcal{H}_L)$; thus it commutes with all of $\mathcal{B}(\mathcal{H}_L)$, so it must be proportional to the identity, $A_L = c\Id_L$.
\end{proof}

Furthermore, although dressed-cleanability is generally a weaker requirement than correctability, these properties are in fact equivalent for a region obeying complementary recovery.
\begin{lemma}
	Suppose that some region $R$ obeys \emph{complementary recovery}, and it is dressed-cleanable with respect to the logical subsystem given by $S$.
	Then it is also correctable with respect to $S$.
	\label{lem:CR-DC}
\end{lemma}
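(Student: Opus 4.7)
The plan is to reformulate the claim as a statement about the maximal entanglement wedge, and then derive a contradiction from dressed-cleanability using \cref{lem:baretrivial}. First, I would note that "$R$ is correctable with respect to $S$" is equivalent to "$S \subseteq \mathcal{E}[R^c]$": the inclusion says exactly that each local encoded subsystem $i \in S$ has $R$ correctable with respect to $\{i\}$, and since correctability is closed under union of logical subsystems (see \cref{subsec:holocodes-as-subsystem-codes}), this upgrades to correctability with respect to all of $S$ at once. So it suffices to show that every $i \in S$ lies in $\mathcal{E}[R^c]$.

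Next, I would argue by contradiction. Suppose some $i \in S$ is not in $\mathcal{E}[R^c]$. By the complementary recovery assumption, every local encoded subsystem lies in $\mathcal{E}[R]$ or in $\mathcal{E}[R^c]$, so we must have $i \in \mathcal{E}[R]$. Applying the third equivalent condition in \cref{lem:correctable} to the singleton subsystem $\{i\}$ (for which $R^c$ is correctable by definition of $\mathcal{E}[R]$), there exists an operator $A$ supported on $R$ that is bare-CSP with respect to $\{i\}$ and implements a nontrivial logical Pauli $P^{(i)}$ on qudit $i$; that is, $V^{\dagger} A V = P^{(i)} \otimes \Id_{\{i\}^c}$.

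The decisive step is to reinterpret $A$ with respect to the logical subsystem $S$. Since $i \in S$, we can split $\Id_{\{i\}^c} = \Id_{S\setminus\{i\}} \otimes \Id_{S^c}$ and rewrite
\begin{equation}
    V^{\dagger} A V = \bigl(P^{(i)} \otimes \Id_{S\setminus\{i\}}\bigr) \otimes \Id_{J},
\end{equation}
so $A$ is also bare-CSP with respect to $S$, and the logical factor $P^{(i)} \otimes \Id_{S\setminus\{i\}}$ is not proportional to $\Id_L$ because $P^{(i)}$ is a nontrivial Pauli. But $R$ is assumed to be dressed-cleanable with respect to $S$, so \cref{lem:baretrivial} forces every bare-CSP operator supported on $R$ to implement $c \Id_L \otimes \Id_J$ for some scalar $c$. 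This contradicts what we just constructed, so no such $i$ exists, $S \subseteq \mathcal{E}[R^c]$, and $R$ is correctable with respect to $S$.

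The main obstacle I expect is purely one of bookkeeping about what "bare-CSP" means for different choices of logical subsystem: the same physical operator $A$ plays two roles in the argument, bare-CSP with respect to $\{i\}$ in the first step and bare-CSP with respect to $S$ in the second. Verifying that the reinterpretation is legitimate requires using the fixed tensor product decomposition of $\mathcal{H}_e$ from \cref{eq:TPS}. Beyond that, the proof is a short chain of applications of the definitions and \cref{lem:correctable,lem:baretrivial}.
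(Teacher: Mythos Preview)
Your proof is correct and follows essentially the same route as the paper's own argument: both show $S \subseteq \mathcal{E}[R^c]$ by contradiction, using complementary recovery to place a hypothetical $i \in S \setminus \mathcal{E}[R^c]$ inside $\mathcal{E}[R]$, then reconstructing a nontrivial bare-CSP operator on $R$ that contradicts \cref{lem:baretrivial}. Your version is just more explicit about the bookkeeping when switching the logical subsystem from $\{i\}$ to $S$, which the paper handles in a single clause.
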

\begin{proof}
	By assumption, $\mathcal{E}[R^c] \cup \mathcal{E}[R] = \left\{ 1,\dots ,n_L \right\}$, the full set of all local encoded subsystems; furthermore $R$ is correctable with respect to $\mathcal{E}[R^c]$ and $R^c$ is correctable with respect to $\mathcal{E}[R]$.
	If any local encoded subsystems of $S$ are in $\mathcal{E}[R]$, then correctability of $R^c$ ensures that logical operators on those local encoded subsystems can be implemented on $R$, which would be non-trivial bare-CSP operators; contradicting \cref{lem:baretrivial}.
	Thus $S \cap \mathcal{E}[R] = \emptyset$, and therefore $S \subset \mathcal{E}[R^c]$.
	Since $R$ is correctable with respect to $\mathcal{E}[R^c]$, it is also correctable with respect to $S$.
\end{proof}

In stabilizer subsystem codes, it has been shown that the converse directions for \cref{lem:baretrivial,lem:dressedtrivial} are also true (see Ref.~\cite{bravyi_SubsystemLocal}, which is easily generalized from qubit to qudit stabilizer codes).
This proof is more involved, however, and fortunately we do not need it here.
There is more to be said about a number of different correctability properties of regions of subsystem codes, which we elaborate on further in \cref{app:regions} for the interested reader.

\subsection{Fault-tolerant operations in stabilizer subsystem codes}\label{sect:fault-tolerant}

A core result from the theory of quantum error-correcting codes is the Eastin-Knill theorem~\cite{eastinRestrictionsTransversalEncoded2009}, which states that no code can allow a universal set of gates to be implemented transversally, as long as it can detect errors on a single subsystem.
Several related results have strengthened this in various contexts~\cite{bravyiClassificationTopologicallyProtected2013,pastawskiFaulttolerant2015,beverlandProtectedGatesTopological2016,Jochym-OConnor2017,Webster2018,faistContinuousSymmetriesApproximate2019,Woods2020,Burton2020}, such as more narrowly specifying the set of gates that \emph{can} be implemented transversally.
For a family of stabilizer subsystem codes with locally generated stabilizer groups known as \emph{topological stabilizer subsystem codes}, the Pastawski-Yoshida theorem~\cite{pastawskiFaulttolerant2015} 
establishes a restriction on transversally implementable gates 
that depends only on the dimension of the code's geometry.
Specifically, in $D$ dimensions a transversal gate must implement a logical operation to the $D$th level of the so-called \emph{Clifford hierarchy}, which is defined as follows.
\begin{definition}
Denote $\mathcal{P}$ as the Pauli group on the logical subsystem $\mathcal{H}_L$, and the zeroth level of the Clifford hierarchy to be $\mathcal{C}_0 \defi \mathbb{C}$.
Then the Clifford hierarchy for $i>0$ is a family of sets $\mathcal{C}_i$ of unitary operators defined recursively as follows%
\footnote{We remark that this differs slightly from the standard definition of the Clifford hierarchy \cite{bravyiClassificationTopologicallyProtected2013}, which consists of unitaries such that $UPU^\dagger \in \mathcal{C}_{i-1}$.
It was shown in \cite{pastawskiFaulttolerant2015} that the definitions coincide for $i>1$.
For $i=1$, this definition gives $\mathcal{C}_1 = \mathcal{P} \times \mathbb{C}$, whereas the standard definition gives $\mathcal{C}_1 = \mathcal{P}$.
This definition also allows $\mathcal{C}_0$ to be non-empty, unlike the standard definition.}%
:
\begin{align}
    \mathcal{C}_i = \{ U | \forall P \in \mathcal{C}_1, UPU^\dagger P^\dagger\in \mathcal{C}_{i-1} \}
\end{align}
\end{definition}
We emphasize that this is defined in terms of the Pauli strings acting on the logical subsystem $\mathcal{H}_L$.

In any code with a finite distance, another set of fault-tolerant gates is those that do not increase the support of an operator too much under conjugation.
For example, suppose that an operator $U$ has the property that when conjugating any local one-site operator $X$, the resultant operator $UXU^\dagger$ has support on only $k<d$ sites. 
Then any one-site error will remain correctable after the application of this gate.
However, successive applications of these gates may expand the support of the error exponentially, unless there is some particular structure in \emph{how} the errors are spread (or $k=1$, for example with transversal gates).

Fortunately, when there is a geometric structure associated with physical space, such as in topological codes and holographic codes, we can restrict to a more specific class of gates with even better protection properties.
If application of a unitary $U$ causes the support of an error to spread by some bounded amount, not just in terms of the number of sites affected, but also in terms of the \emph{geometric extent} of these errors, then we say that $U$ is \emph{locality-preserving}.
For example, evolution by a local Hamiltonian for a short time period can be well-approximated by such a unitary, e.g.\ via Trotterization.
We remark that unlike the previous example, successive applications of locality-preserving gates in flat geometries only spread errors to a polynomial number of total sites in the number of applications (raised to the power $D$, the geometric dimension), rather than an exponential number.

We now formalize this notion.
\begin{definition}
The extent to which a gate $U$ is locality-preserving is quantified by its \emph{spread} $s_U$, the maximum distance by which it can increase the support of a local operator supported on some region $R$.
Formally, the spread of $U$ is 
\begin{align}
	s_U = \min \{ d\in [0,\infty] | \forall \text{ regions }R, \forall X \in \supp(R), \ \
	\nonumber \\
	UXU^{\dagger} \in \supp(B(R,d))  \}.
\end{align}
\end{definition}
We remark that the spread of a transversal operator is zero.

It was shown by Pastawski and Yoshida in Ref.~\cite{pastawskiFaulttolerant2015} that for any $D$-dimensional topological subsystem stabilizer code satisfying some basic error-correction properties\footnote{In particular, a logarithmic distance and a nonzero error-threshold.}, any dressed-CSP operator with sufficiently limited spread can only implement a dressed-logical operator $A_L \otimes A_J$ such that $A_L\in \mathcal{C}_D$.
This generalizes a result due to Bravyi and Koenig \cite{bravyiClassificationTopologicallyProtected2013} that applies only to standard topological stabilizer codes.
A key step in the proof of this result was the following lemma, which we make frequent use of in this work.
\begin{lemma}[Pastawski-Yoshida lemma]
	For a subsystem stabilizer code, let $U$ be a dressed-CSP unitary operator supported on the union of regions $\left\{ R_j \right\}_{j\in[0,m]}$. If $R_0$ is correctable and each ${R_j^+ \defi B(R_j,2^{m-j}s_U)}$ is dressed-cleanable for $j>0$, then the logical unitary implemented by $U$ belongs to $\mathcal{C}_m$.
	\label{lem:PY}
\end{lemma}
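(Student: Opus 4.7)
The plan is to induct on $m$, using the definition of the Clifford hierarchy $\mathcal{C}_m = \{U : UPU^\dagger P^\dagger \in \mathcal{C}_{m-1} \ \forall P \in \mathcal{C}_1\}$ to peel off one level at a time by forming group commutators with appropriately chosen Pauli implementations.

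The base case $m=0$ is immediate: $U$ is then a dressed-CSP operator supported on the correctable region $R_0$, so by \cref{lem:dressedtrivial} its implemented operator has the form $\Id_L \otimes A_J$, i.e.\ the logical action on $\mathcal{H}_L$ is a scalar, which lies in $\mathcal{C}_0 = \mathbb{C}$.

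For the inductive step ($m \geq 1$), assume the statement at level $m-1$ and fix an arbitrary logical Pauli $P_L \in \mathcal{P}(\mathcal{H}_L)$. Since the lemma is stated for stabilizer subsystem codes and $R_m^+ = B(R_m, s_U)$ is dressed-cleanable, \cref{def:dressedCleanable} furnishes a dressed-CSP \emph{Pauli} operator $\tilde P$ supported on $(R_m^+)^c$ that implements $P_L \otimes P_J$ for some Pauli $P_J \in \mathcal{P}(\mathcal{H}_J)$. Form the group commutator
\begin{equation}
W \defi U\, \tilde P\, U^\dagger\, \tilde P^\dagger.
\end{equation}
Then $W$ is dressed-CSP (a product of dressed-CSP operators) and, because $\bar U$ acts trivially on $\mathcal{H}_J$, a direct calculation gives $V^\dagger W V = (\bar U P_L \bar U^\dagger P_L^\dagger)\otimes \Id_J$. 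Thus the logical action of $W$ is exactly the group commutator we need to land in $\mathcal{C}_{m-1}$.

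The core technical step is controlling the support of $W$. Because $\tilde P$ is a Pauli, split it as a product $\tilde P = \tilde P_{\text{in}} \tilde P_{\text{out}}$ where $\tilde P_{\text{out}}$ is the part outside $\text{supp}(U)=\bigcup_j R_j$; this commutes with $U$ and cancels in $W$. The remaining factor $\tilde P_{\text{in}}$ lives in $\bigcup_j R_j$ but, by the constraint on $\tilde P$, is disjoint from $B(R_m, s_U)$, hence is supported on $\bigcup_{j<m} R_j$. Conjugation by $U$ spreads this support by at most $s_U$, yielding
\begin{equation}
\supp(W) \subseteq \bigcup_{j=0}^{m-1} B(R_j, s_U).
\end{equation}
Treating $W$ as a dressed-CSP operator with effective spread $s_W \le 2s_U$, one then identifies a new collection of regions $\{R'_j\}_{j=0}^{m-1}$ for which the hypotheses at level $m-1$ hold: the expanded radii telescope as $B(R'_j, 2^{m-1-j} s_W) \subseteq R_j^+$ (up to the indexing shift used to absorb the $s_U$-neighborhoods into the geometric doubling built into the statement), and dressed-cleanability is preserved under taking subsets, while correctability of the innermost region $R'_0$ is inherited from that of $R_0$. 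Applying the inductive hypothesis to $W$ yields $\bar U P_L \bar U^\dagger P_L^\dagger \in \mathcal{C}_{m-1}$, and since $P_L$ was arbitrary, $\bar U \in \mathcal{C}_m$.

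The main obstacle I expect is exactly this bookkeeping in the inductive step: verifying that the shifted regions $\{R'_j\}$ genuinely satisfy the hypotheses after each reduction (in particular that the innermost region remains \emph{correctable}, not merely dressed-cleanable, so that the base case is reachable) and that the exponentially growing radii $2^{m-j}s_U$ are exactly what is needed to absorb the cumulative spread from successive group commutators. The factor of $2$ doubling in the radii built into the statement is precisely designed so that one $s_U$-expansion is available at each level without overrunning the dressed-cleanable $R_j^+$'s.
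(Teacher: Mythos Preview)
Your strategy matches the paper's: induct on $m$, use dressed-cleanability of $R_m^+$ to pick a Pauli representative $\tilde P$ off $B(R_m,s_U)$, form the group commutator $W$, and recurse. The gap is in your support estimate for $W$. You obtain only $\supp(W)\subseteq\bigcup_{j<m}B(R_j,s_U)$, which forces $R'_j=B(R_j,s_U)$; then with $s_W\le 2s_U$ the regions required by the inductive hypothesis are $B(R'_j,2^{m-1-j}s_W)=B\bigl(R_j,(2^{m-j}+1)s_U\bigr)\supsetneq R_j^+$, and neither dressed-cleanability nor correctability passes to supersets. So the telescoping you assert does not hold, and $R'_0=B(R_0,s_U)$ need not be correctable---the induction does not close with your bound.

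The missing ingredient is that $W$ is \emph{also} confined to $\supp(U)$: write $W=U\cdot(\tilde P U^\dagger\tilde P^\dagger)$ and use that the Pauli $\tilde P$ has zero spread, so $\supp(\tilde P U^\dagger\tilde P^\dagger)=\supp(U)$. Intersecting this with $B(\supp\tilde P,s_U)\subseteq B\bigl((R_m^+)^c,s_U\bigr)\subseteq R_m^c$ gives the sharp bound $\supp(W)\subseteq\bigcup_{j<m}R_j$, with no stray $s_U$. Then one may take $R'_j=R_j$: $R'_0=R_0$ stays correctable, and $B(R_j,2^{m-1-j}s_W)\subseteq B(R_j,2^{m-j}s_U)=R_j^+$ closes the induction exactly. (A minor aside: since $U$ is only dressed-CSP, $V^\dagger WV=(\bar U P_L\bar U^\dagger P_L^\dagger)\otimes(U_J P_J U_J^\dagger P_J^\dagger)$, not $\otimes\,\Id_J$; this is harmless because $W$ is still dressed-CSP with the correct logical part, but your stated reason is not right.)
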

For completeness, we replicate the proof of this lemma here from Ref.~\cite{pastawskiFaulttolerant2015}.
\begin{proof}
    Use induction on $m$.
    For $m=0$, $U$ is a dressed-CSP operator supported on a correctable region, so by \cref{lem:dressedtrivial}, it must implement a scalar multiple of the identity on the logical subsystem, i.e.\ an element of $\mathcal{C}_0$, thus the base case holds.
    
    Now assume the case $m$ holds, and that we wish to show it for $m+1$.
    Suppose we have a dressed-CSP unitary $U$ supported on $\bigcup_{j=0}^{m+1} R_j$ such that ${R_j^+ \defi B(R_j,2^{m+1-j}s_U)}$ is dressed-cleanable for each $j>0$.
    Consider an arbitrary Pauli operator on the logical subsystem, $P_L\in \mathcal{P}$. 
    Because $R_1^+$ is dressed-cleanable, $P$ can be implemented by a dressed-CSP operator $P$ supported in $\supp(P) \subseteq  (R_{m+1}^+)^c$.
    Then
    \begin{align}
        \supp(UPU^\dagger P^\dagger) &\subseteq \supp(UPU^\dagger) \cap \supp(PU^\dagger P^\dagger) \\
        &\subseteq B(\supp(P), s_U) \cap \supp(U) \\
        &\subseteq B((R_{m+1}^+)^c, s_U) \cap \bigcup_{j=0}^{m+1} R_j \\
        &\subseteq R_{m+1}^c \cap \bigcup_{j=0}^{m+1} R_j \\
        &\subseteq \bigcup_{j=0}^{m} R_j,
    \end{align}
    where we used the fact that $R_{m+1}^+ = B(R_{m+1},s_U)$ in the second last line.
    We can now apply the inductive hypothesis to the dressed-CSP unitary $UPU^\dagger P^\dagger$, so long as $B(R_j,2^{m-j} s_{UPU^\dagger P^\dagger}) $ are dressed-cleanable.
    Fortunately, since $s_{UPU^\dagger P^\dagger} \leq 2s_U$, this is implied by the fact that $B(R_j,2^{m+1-j}s_U)$ is dressed-cleanable, and thus $UPU^\dagger P^\dagger \in \mathcal{C}_m$.
    By the definition of the Clifford hierarchy, we have $U\in \mathcal{C}_{m+1}$.
\end{proof}
This lemma is simpler in the case that $U$ is transversal, which implies that $s_U = 0$ and so $R_j^+ = R_j$.
We also introduce a stronger form of this lemma in the case that $U$ is also bare-CSP.
We prove it using virtually an identical proof to that of \cref{lem:PY} from Ref.~\cite{pastawskiFaulttolerant2015}.
\begin{lemma}
	For a subsystem stabilizer code, let $U$ be a \emph{bare}-CSP unitary operator supported on the union of regions $\left\{ R_j \right\}_{j\in[0,m]}$. If $R_0$ and each $R_j^+ \defi B(R_j,2^{j-1}s_U)$ are dressed-cleanable, then the logical unitary implemented by $U$ belongs to $\mathcal{C}_m$. 
	\label{lem:PYbare}
\end{lemma}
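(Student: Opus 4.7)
The plan is to mirror the induction in the proof of \cref{lem:PY}, with two modifications suited to the bare-CSP setting: replace \cref{lem:dressedtrivial} by \cref{lem:baretrivial} to handle the base case, and peel off the regions $R_j$ in \emph{ascending} order of $j$ rather than descending, since the expansion factors $2^{j-1}s_U$ now grow with $j$ rather than shrink. The base case $m=0$ is immediate: $U$ is a bare-CSP operator supported on $R_0$, which is dressed-cleanable, so \cref{lem:baretrivial} gives $V^\dagger UV = c\Id_L \otimes \Id_J$ for some $c\in\mathbb{C}$, i.e.\ $U \in \mathcal{C}_0$.

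For the inductive step from $m$ to $m+1$, take any $P_L \in \mathcal{P}$. Use dressed-cleanability of $R_1^+ = B(R_1, s_U)$ to pick a dressed-CSP operator $P$ implementing $P_L$ with $\supp(P) \subseteq (R_1^+)^c$. A short computation shows that the group commutator $U' := UPU^\dagger P^\dagger$ is itself bare-CSP: writing $V^\dagger UV = U_L \otimes \Id_J$ and $V^\dagger PV = P_L \otimes P_J$, the junk factors cancel, leaving $V^\dagger U'V = (U_L P_L U_L^\dagger P_L^\dagger) \otimes \Id_J$. Using $\supp(P) \subseteq (R_1^+)^c$ and the spread of $U$, one checks $B((R_1^+)^c, s_U) \subseteq R_1^c$, so
\begin{equation}
\supp(U') \subseteq \supp(U) \cap B(\supp(P), s_U) \subseteq R_0 \cup \bigcup_{j=2}^{m+1} R_j .
\end{equation}

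Relabel $\tilde R_0 \defi R_0$ and $\tilde R_j \defi R_{j+1}$ for $j = 1, \dots, m$. Then $\tilde R_0$ is still dressed-cleanable. Using $s_{U'} \le 2s_U$, we have for each $j \ge 1$,
\begin{equation}
B(\tilde R_j, 2^{j-1} s_{U'}) \subseteq B(R_{j+1}, 2^j s_U) = R_{j+1}^+,
\end{equation}
which is dressed-cleanable by hypothesis; since dressed-cleanability is inherited by subregions (a smaller region has a larger complement, in which the required dressed-CSP representatives still live), the relabeled $\tilde R_j^+$ are dressed-cleanable. The inductive hypothesis applied to the bare-CSP operator $U'$ on $\bigcup_{j=0}^m \tilde R_j$ gives $U' \in \mathcal{C}_m$. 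Since $P_L$ was arbitrary, $U \in \mathcal{C}_{m+1}$ by the defining property of the Clifford hierarchy.

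The only real subtlety — rather than a genuine obstacle — is bookkeeping: one must confirm that peeling the \emph{smallest}-expansion region $R_1$ first (the opposite end to the original PY proof) is what makes the doubling $s_{U'} \le 2s_U$ cascade properly into the shifted expansion budgets $2^{j-1} \mapsto 2^j$, so that the new hypothesis at level $m$ is exactly what is supplied by the old hypothesis at level $m+1$ after reindexing.
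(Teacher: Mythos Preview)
Your proof is correct and follows essentially the same approach as the paper's. The paper is terser, stating only that the argument ``follows identically to \cref{lem:PY}'' after swapping in \cref{lem:baretrivial} for the base case and checking that $UPU^\dagger P^\dagger$ remains bare-CSP; you have correctly made explicit the reindexing this entails (peeling off $R_1$ rather than $R_{m+1}$, so that the doubling $s_{U'}\le 2s_U$ matches the shifted exponents $2^{j-1}\mapsto 2^{j}$).
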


\begin{proof}
    Use induction on $m$.
    For $m=0$, $U$ is a bare-CSP operator supported on a dressed-cleanable region, so by \cref{lem:baretrivial}, it must implement a scalar multiple of the identity on the logical subsystem, i.e.\ an element of $\mathcal{C}_0$, thus the base case holds.
    
    The proof then follows identically to \cref{lem:PY}, until it comes time to apply the inductive hypothesis, which now additionally requires that $UPU^\dagger P^\dagger$ be a bare-CSP operator.
    Fortunately, as $U$ is a bare-CSP operator, then even though $P$ is only dressed-CSP, the component $P_J$ acting on the junk subsystem cancels with its inverse, and the product is indeed a bare-CSP operator.
\end{proof}

The final lemma of this section establishes a useful sufficient condition, in terms of price, to guarantee the existence of three correctable regions that suffice to place restrictions on the logical action of fault-tolerant gates.

\begin{lemma}
	If a logical subsystem $S$ of a subsystem stabilizer code satisfies the relationship
	\begin{align}
		2 \leq p(S) \leq 2 d(S) - 2,
		\label{eq:pricedistance}
	\end{align}
	then any dressed-CSP transversal unitary $U$ can only implement an element of the Clifford group $\mathcal{C}_2$ on the logical subsystem associated with $S$.
	\label{lem:PYprice}
\end{lemma}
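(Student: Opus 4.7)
The plan is to apply the Pastawski--Yoshida lemma (\cref{lem:PY}) with $m = 2$, by exhibiting a tripartition $\{R_0, R_1, R_2\}$ of the physical qudits satisfying its hypotheses. Since $U$ is transversal its spread is $s_U = 0$, so each fattened region $R_j^+ = B(R_j, 2^{m-j} s_U)$ collapses to $R_j$ itself, and the support condition of \cref{lem:PY} reduces to demanding that $R_0 \cup R_1 \cup R_2$ cover the entire physical system.

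For the correctable piece, I would invoke the definition of $p(S)$ directly: there exists a region $\widetilde{R}$ with $|\widetilde{R}| = p(S)$ whose complement is correctable. Set $R_0 \defi \widetilde{R}^c$. The remaining qudits form $\widetilde{R}$, a set of size $p(S)$, which I would partition into two pieces $R_1, R_2$ with $|R_j| \leq d(S) - 1$. This partition exists exactly because the hypothesis $p(S) \leq 2d(S) - 2 = 2(d(S) - 1)$ gives room to split. By the definition of distance (\cref{def:distance}), any region of size strictly less than $d(S)$ is correctable, so both $R_1$ and $R_2$ are correctable and hence dressed-cleanable by \cref{lem:BCDC}.

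With this tripartition in place, \cref{lem:PY} immediately yields $U_L \in \mathcal{C}_2$. The auxiliary assumption $p(S) \geq 2$ serves only to avoid a degenerate partition of $\widetilde{R}$ (and is essentially redundant, since combining $p(S) \leq 2d(S) - 2$ with \cref{lem:pricedistance} forces $d(S) \geq 2$ and therefore $p(S) \geq 2$). There is no real obstacle: once \cref{lem:PY} is granted, this statement is a repackaging of its hypotheses in terms of the global code invariants $p(S)$ and $d(S)$, and the only substantive observation is the elementary inequality $p(S) \leq 2(d(S)-1)$, which is what makes the two-piece decomposition of $\widetilde{R}$ into sub-distance-sized correctable pieces possible.
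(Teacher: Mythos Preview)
Your proof is correct and follows essentially the same approach as the paper: take a region $\widetilde{R}$ of size $p(S)$ whose complement is correctable, split $\widetilde{R}$ into two pieces of size at most $d(S)-1$ (hence correctable, hence dressed-cleanable via \cref{lem:BCDC}), and apply \cref{lem:PY} with $s_U=0$. Your additional observation that $p(S)\geq 2$ is actually implied by the other hypothesis together with \cref{lem:pricedistance} is correct and not noted in the paper.
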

\begin{proof}
	By definition of price, there exists some region $R$ with size $|R| = p(S)$ such that $R^c$ is correctable.
	Then as $|R|\geq 2$, $R$ can be split into two non-empty regions $R_1$ and $R_2$ such that $|R_1|, |R_2| \leq d(S)-1$, meaning that $R_1$ and $R_2$ are smaller than the smallest non-correctable region; thus they must be correctable.
	Also, recall that \cref{lem:BCDC} showed that correctability is stronger than dressed-cleanability, so these regions are also dressed-cleanable.
	Thus we can apply \cref{lem:PY} with $s_U = 0$ because of its transversality, to conclude that $U$ implements a Clifford gate, $U_L \in \mathcal{C}_2$.
\end{proof}
We re-express the lemma in this way in order to crystallize the physical intuition for why holographic codes should not permit transversal non-Clifford gates for sufficiently small bulk regions.
In a code satisfying complementary recovery, it was shown in Ref.~\cite{pastawskiCodePropertiesHolographic2017} that the distance is equal to the price for a logical subsystem associated with a single point.
Although holographic stabilizer codes do not generally satisfy complementary recovery, we use this observation as motivation to argue that the difference between price and distance is an effective measure of the effective ``size'' of a region, after accounting for the degree to which complementary recovery fails.
In this way, \cref{lem:PYprice} can be interpreted as saying non-Cliffords cannot be transversally implemented on sufficiently small regions -- i.e. those with $p-d \leq d - 2$.

We remark that an analogous definition can be formed in terms of the generalized notions of price and distance, $p_{\sigma }(S)$ and $d_\sigma (S)$.
Finally, the lemma could also be strengthened by redefining distance in terms of the smallest \emph{non-dressed-cleanable} region; this would be a larger notion of distance and thus easier to satisfy \cref{eq:pricedistance}.
$R_1$ and $R_2$ would be dressed-cleanable, and thus \cref{lem:PY} could still be invoked.

We remark that both \cref{lem:PY,lem:PYbare} were originally proven in the context of qubit stabilizer codes, but easily generalize to qudits.
From here on, we restrict our exposition to the case of qubits for simplicity, however we emphasize that all of our results apply likewise to qudit-based holographic stabilizer codes.

\section{HaPPY codes \& non-Clifford gates}\label{sect:happy}

The simplest holographic stabilizer code, commonly referred to as HaPPY \cite{pastawskiHolographicQuantumErrorcorrecting2015}, is built out of \emph{perfect tensors} -- a class of tensors with particularly nice error-correction properties that allow the entanglement wedge map to mimic the geometric behaviour of AdS/CFT.
As the encoding map is given by a tensor network built from identical local encoding tensors, 
any gates that can be implemented transversally on this encoding tensor are inherited by the larger network\footnote{
This is a specific instance of global symmetry inherited from local symmetry, which is common in the tensor network literature~\cite{1367-2630-12-2-025010,Singh2010,Singh2013,williamson2014matrix,Bridgeman2017,NewSETPaper2017}, an interesting open question is whether a converse statement of global implying local tensor symmetry can be proven for holographic tensor networks.}. 
For example, the standard choice of perfect tensor is the encoding isometry of the five-qubit code (see the appendix of Ref.~\cite{pastawskiHolographicQuantumErrorcorrecting2015} for a review). This code admits a transversal gate set generated by the homogeneous tensor product of single site Pauli group elements and the Clifford gate $K$, which can be defined as mapping $X \to Z \to Y \to X$ under conjugation \cite{Yoder2016}.
Transversal implementation of the group generated by these gates is inherited by the larger network built out of this perfect tensor, as shown in \cref{fig:transversalpushing}. 
This is the most straightforward way in which a holographic code can allow for transversal implementation of a logical gate; and we now argue that no non-Clifford gate can be implemented in such a fashion.

Consider a rank-$2N$ tensor $T_{i_1,i_2,\dots ,i_{2N}}$, and suppose that for any $n\leq N$, the matrix given by lowering the first $n$ indices is an isometry, i.e.\
\begin{align}
	T_{i_1,\dots, i_n}^{i_{n+1},\dots ,i_{2N}} T_{ i_{n+1},\dots ,i_{2N}}^{\dagger j_1,\dots, j_n} = \delta _{i_1,\dots, i_n}^{j_1,\dots ,j_n}.
\end{align}
If this holds not only for the first $n$ indices but for \emph{any} choice of $n$ indices, then we say that $T$ is a perfect tensor.
We remark that the tensor need not be permutation invariant, i.e.\ it may form different isometries depending on which legs are chosen to be inputs and which to be outputs.
For a review of perfect tensors, see Ref.~\cite{pastawskiHolographicQuantumErrorcorrecting2015}.

Such a tensor can be interpreted as a good quantum error-correcting code with parameters $\left[ \left[ n,k,d \right] \right] = \left[ \left[ 2N-1,1,N \right] \right]$,  
by lowering only the first index and treating this as the encoding isometry:
\begin{align}
	V\defi T_{i_1}^{i_2,\dots ,i_{2N}} : \mathcal{H}_L = \mathcal{H}_2 \to \mathcal{H}_P = \mathcal{H}_2^{\otimes (2N-1)},
\end{align}
with the distance of $N$ being guaranteed by the perfect tensor property (see Ref.~\cite{pastawskiHolographicQuantumErrorcorrecting2015} for details).
Furthermore, it guarantees that any region of $N-1$ qubits or fewer is correctable.
Thus, the $2N-1$ qubits in the physical system can easily be partitioned into three sets such that each is correctable.
By \cref{lem:PY}, this means that for any transversal unitary $U$, the logically implemented gate is a Clifford, $V^{\dagger}UV \in \mathcal{C}_2$.

\begin{figure}[t]
	\centering
	\includegraphics[width=\linewidth]{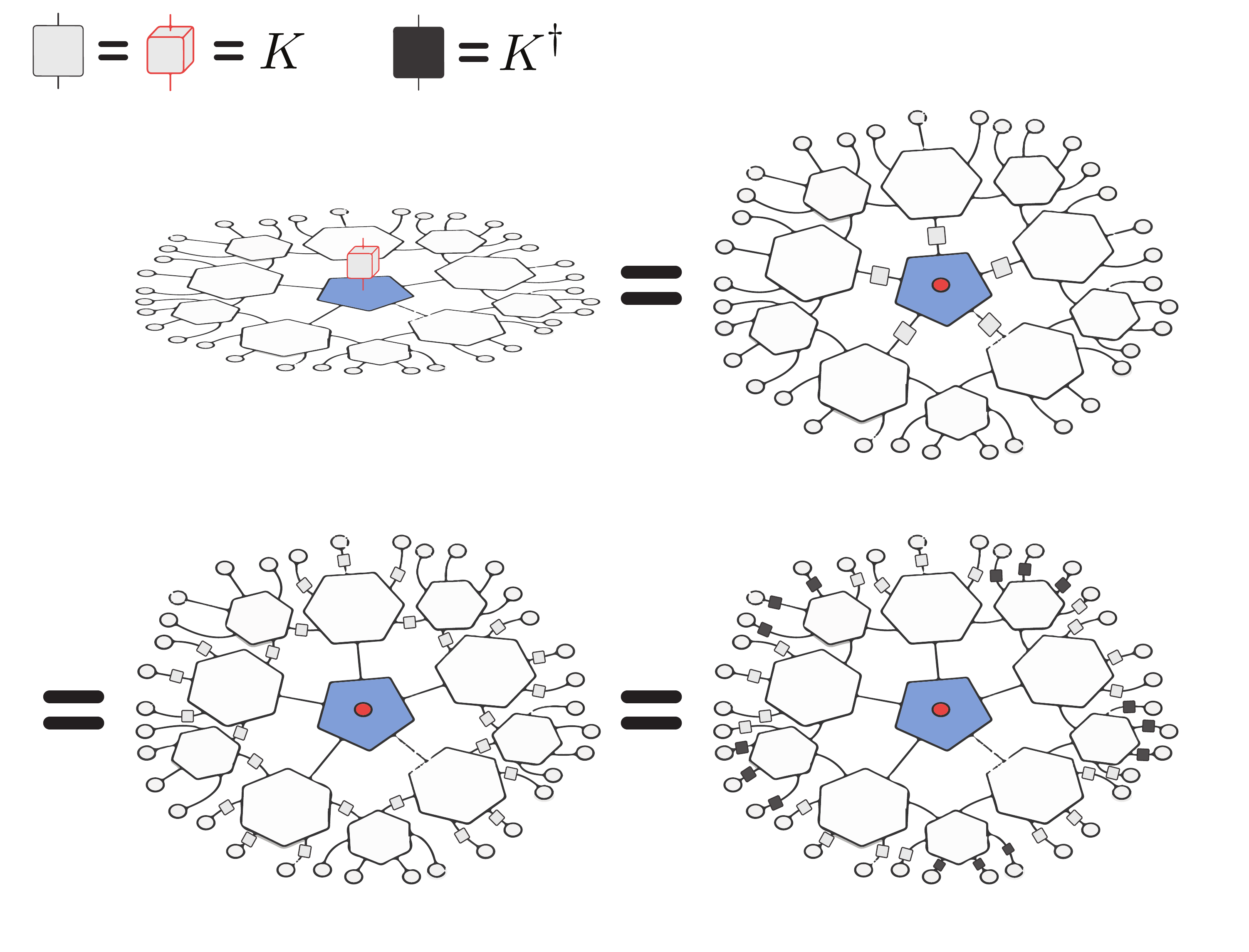}
	\caption{Pictorial argument that transversal implementation of a gate on a single perfect tensor implies transversal implementation for the entire HaPPY network.}
	\label{fig:transversalpushing}
\end{figure}

Furthermore, the same can be said about holographic codes built from a slightly broader class of tensors known as \emph{planar perfect} \cite{berger2018perfect} or \emph{block perfect} \cite{harrisCalderbankSteaneShorHolographicQuantum2018}.
In these tensors, the order of the indices matters -- the isometry property is only required to hold for a set of $n$ consecutive indices (where $i_{2N}$ is understood to be adjacent to $i_1$).

Of course, the above argument does not rule out the possibility of transversally implemented non-Clifford gates in networks built out of these tensors -- it only excludes the simplest possible way for such a gate to be implemented, which would be that it is transversal on each individual tensor that makes up the network.
Nevertheless, rigorous arguments can be formulated that rule out any kind of transversal non-Clifford in a HaPPY code.
Essentially, one starts with a logical operator on some central tensor and considers the boundary regions that it can be pushed out to at each successive layer of the network.
Due to the ``negative curvature'' (which is a necessary requirement for the network to form a bulk to boundary isometry), the angular size of the region which can be pushed to does not increase too much layer by layer; the total size of each layer grows at least as quickly as the region being pushed to does.
Thus the complement of the region being pushed to, which is correctable, is sufficiently large that one can divide the physical space into three such regions and proceed by applying \cref{lem:PY} to exclude the possibility of transversal non-Cliffords. 

In \cref{app:schlafli}, we use a rigorous form of the above argument to prove that any HaPPY network built out of a homogeneous hyperbolic planar tiling does not accommodate the transversal application of non-Cliffords. 
We remark that the idea behind our proof of no transversal non-Clifford gates in holographic tensor network stabilizer codes can be extended to rule out more general locality preserving gates. In the following section we describe such a generalization for more general families of sufficiently holographic stabilizer codes that should include tensor network codes as a special case. 

\section{Complementary recovery \& non-Clifford gates} \label{sect:complementary}

In the previous section we argued that HaPPY codes generally cannot admit transversal non-Clifford gates. In this section we turn our attention to wider families of holographic codes that have been introduced and studied more recently.
The natural question that occurs is whether the transversal Clifford restriction is a particular attribute of HaPPY codes, or a more general phenomenon due to some underlying physical reason that extends to these other families.
In \cref{sect:simple} we argue for the latter -- specifically that it is a consequence of any code for which regions satisfying complementary recovery are sufficiently common.
In fact, the argument we present can even apply if no regions satisfy complementary recovery, as long as some almost do (see \cref{sect:relaxGC}).
We extend to finite-size regions of the bulk in \cref{sect:regions}, codes with only approximate stabilizer encoding maps in \cref{sect:approx}, non-transversal locality-preserving gates in \cref{sect:nontransversal}, and codes with non-trivial centers in \cref{sect:center}.

\subsection{Complementary recovery \& transversal non-Clifford gates} \label{sect:simple}

First, we begin with a proof that satisfying complementary recovery for every region is sufficient to exclude transversal non-Cliffords.
A stronger variant of this theorem is one in which only every \emph{connected} region needs to satisfy complementary recovery.
This is stronger for two reasons: firstly, the assumption is less stringent and applies to more codes, and secondly, the assumption of $d_c\geq 2$ is weaker than $d\geq 2$.
\begin{theorem} \label{thm:complementary}
	Consider a holographic stabilizer code with encoding isometry $V$, and a particular choice of local bulk subsystem $i$.
	Suppose that every (connected) region satisfies complementary recovery, and the (connected) code distance is at least $d_{(c)}(\{i\}) \geq 2$.
	Then any transversal dressed-CSP unitary $U$ can only act on this subsystem via an element of the Clifford group, $VUV^{\dagger} = U_L \otimes U_J,\ U_L \in \mathcal{C}_2$.
\end{theorem}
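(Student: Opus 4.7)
My plan is to reduce the theorem directly to \cref{lem:PYprice}, which states that any transversal dressed-CSP unitary implements a Clifford logical operation on $S$ whenever $2 \le p(S) \le 2d(S)-2$. Taking $S = \{i\}$, the two hypotheses of the theorem supply exactly what is needed: the distance hypothesis $d(\{i\}) \ge 2$ gives the lower bound on the price, while complementary recovery will be shown to force $p(\{i\}) = d(\{i\})$, making the upper bound $p \le 2d-2$ automatic from $d \ge 2$.

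The identity $p = d$ follows by unpacking both quantities in terms of the maximal entanglement wedge map. Directly from the definitions, $p(\{i\}) = \min\{|R| : i \in \mathcal{E}[R]\}$ whereas $d(\{i\}) = \min\{|R| : i \notin \mathcal{E}[R^c]\}$. The general containment $\mathcal{E}[R] \subseteq \mathcal{E}[R^c]^c$ of \cref{eq:ewlimit} immediately gives $d(\{i\}) \le p(\{i\})$, which is \cref{lem:pricedistance}. The complementary-recovery hypothesis is precisely the reverse implication $i \notin \mathcal{E}[R^c] \Rightarrow i \in \mathcal{E}[R]$ applied at every region, so the two minima coincide. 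Feeding $p(\{i\}) = d(\{i\}) \ge 2$ into \cref{lem:PYprice} yields $U_L \in \mathcal{C}_2$ as required.

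For the connected variant of the theorem (complementary recovery only on connected regions together with $d_c(\{i\}) \ge 2$), the same argument transfers verbatim with $p, d$ replaced by $p_c, d_c$, using the $\sigma$-generalisation of \cref{lem:PYprice} noted at the end of \cref{sect:fault-tolerant} with $\sigma$ taken to be the size of a connected region (and infinity otherwise). The one point worth checking is that the splitting step inside the proof of \cref{lem:PYprice} can be carried out while keeping the two resulting pieces connected and of size strictly less than $d_c$; in the one-dimensional boundary geometries relevant to holographic codes this is straightforward by severing a minimal non-correctable connected boundary arc at a single interior site, so I expect no genuine obstacle. Conceptually the main step, and what the writeup should emphasise, is the recognition that complementary recovery is exactly the structural feature needed to close the $p$-versus-$d$ gap that \cref{lem:PYprice} leaves open; no new fault-tolerance machinery is required beyond the preliminaries.
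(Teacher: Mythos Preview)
Your proposal is correct and follows essentially the same approach as the paper: both arguments use the complementary-recovery hypothesis to establish $p_{(c)}(\{i\}) = d_{(c)}(\{i\})$ and then invoke \cref{lem:PYprice} (or its $\sigma$-generalisation for the connected variant). The paper phrases the key step slightly more concretely---it picks a smallest non-correctable region $R$, argues $i\notin\mathcal{E}[R^c]$ hence $i\in\mathcal{E}[R]$ by complementary recovery, so $R^c$ is correctable and $p_{(c)}\le |R|=d_{(c)}$---but this is exactly your entanglement-wedge reformulation of price and distance unwound at the minimizer, so there is no substantive difference.
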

\begin{proof}
	Let the region $R$ be the smallest non-correctable region.
	As $R$ is non-correctable, the entanglement wedge of its complement $\mathcal{E}[R^c]$ cannot contain $i$.
	By complementary recovery, this means that $i \in \mathcal{E}[R]$, and so $R^c$ must be correctable.
	Since we have a (connected) region of size $d_{(c)}(\left\{ i \right\})$ with a correctable complement, the price must be at most $d_{(c)}(\left\{ i \right\})$, and therefore $p_{(c)}(\left\{ i \right\}) = d_{(c)}(\left\{ i \right\})$ using \cref{lem:pricedistance}.
	Thus by assumption $2 \leq p_{(c)}(\left\{ i \right\}) \leq 2 d_{(c)}(\left\{ i \right\}) - 2$.
	By \cref{lem:PYprice}, any transversal dressed-CSP unitary can only implement an element of the Clifford group, $\mathcal{C}_2$.
\end{proof}

Of course, insisting on complementary recovery for every region, or even every connected region, is highly restrictive.
We do not even know if such a holographic stabilizer code exists\footnote{The random tensor networks of Ref.~\cite{haydenHolographicDualityRandom2016} provide non-stabilizer examples.}.
Fortunately, the following theorem requires a much weaker sufficient condition to exclude the possibility of transversal implementation of non-Clifford gates on a logical subsystem. 
The idea is that we need just one region $R$ satisfying complementary recovery to exist that would be suitable for use in the proof of \cref{thm:complementary}. This region $R$ need only satisfy a few basic properties for the proof to work.
\begin{theorem} \label{thm:lessComplementary}
	Consider a holographic stabilizer code with encoding isometry $V$, and a particular choice of local bulk subsystem $i$.
	Suppose there exists a non-correctable (connected) region $R$ with size $2\leq |R|\leq 2 d_{(c)}(\left\{ i \right\})-2$ that obeys complementary recovery\footnote{We use $d_{(c)}$ to denote $d$ if $R$ is not connected, and $d_c$ if it is.}.
	Then any transversal dressed-CSP unitary $U$ can only act on this subsystem via an element of the Clifford group, $VUV^{\dagger} = U_L \otimes U_J,\ U_L \in \mathcal{C}_2$.
\end{theorem}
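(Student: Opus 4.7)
The plan is to adapt the proof of \cref{thm:complementary} to this weaker hypothesis, observing that the previous argument only ever invoked complementary recovery for a single critical region rather than globally. The role of that region is played here by the hypothesized $R$, and the bulk of the work lies in showing that the price of $\{i\}$ is bounded above by $|R|$, after which \cref{lem:PYprice} does the heavy lifting.

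First I would use the assumptions on $R$ to show that $R^c$ is correctable with respect to $\{i\}$. Since $R$ is non-correctable with respect to $\{i\}$, the definition of the maximal entanglement wedge (\cref{def:ew}) gives $i \notin \mathcal{E}[R^c]$. Complementary recovery of $R$ (\cref{def:cr}) then forces $\{i\} \subseteq \mathcal{E}[R]$, which by the definition of the entanglement wedge is precisely the statement that $R^c$ is correctable with respect to $\{i\}$.

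Next I would convert this into the price/distance inequality required by \cref{lem:PYprice}. Since $R$ (respectively, the connected region $R$) is a concrete region whose complement is correctable with respect to $\{i\}$, we have $p_{(c)}(\{i\}) \le |R|$. Combined with the hypothesized size bound $|R| \le 2 d_{(c)}(\{i\}) - 2$, this yields $p_{(c)}(\{i\}) \le 2 d_{(c)}(\{i\}) - 2$. For the lower bound, $|R|\ge 2$ together with $|R| \le 2 d_{(c)}(\{i\})-2$ gives $d_{(c)}(\{i\})\ge 2$, and \cref{lem:pricedistance} (which goes through in the connected version by the same argument as in the original proof, using that any region achieving the connected price is automatically non-correctable) upgrades this to $p_{(c)}(\{i\}) \ge d_{(c)}(\{i\}) \ge 2$. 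Thus $2 \le p_{(c)}(\{i\}) \le 2 d_{(c)}(\{i\}) - 2$.

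Finally I would invoke \cref{lem:PYprice} (or its $\sigma$-analogue in the connected case, with $\sigma$ equal to site count on connected regions and infinite otherwise), which immediately gives that any transversal dressed-CSP unitary $U$ implements a logical Clifford. The main ``obstacle'' is essentially bookkeeping: making sure that the connected and disconnected cases are handled uniformly and that the argument $p_{(c)} \ge d_{(c)}$ survives restricting to connected regions. This is minor, since a connected region achieving the connected price must itself be non-correctable (otherwise both it and its complement would be correctable with respect to $\{i\}$, contradicting \cref{eq:ewlimit}), so it witnesses a connected non-correctable region of size $p_c$, giving $p_c \ge d_c$.
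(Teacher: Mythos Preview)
Your proposal is correct and follows essentially the same route as the paper, which simply says the proof proceeds exactly as that of \cref{thm:complementary}: non-correctability of $R$ plus complementary recovery gives $i\in\mathcal{E}[R]$, hence $R^c$ is correctable, hence $p_{(c)}(\{i\})\le |R|\le 2d_{(c)}(\{i\})-2$, and then \cref{lem:PYprice} (in its $\sigma$-form for the connected case) finishes. Your extra care in justifying $p_c\ge d_c$ is a welcome addition that the paper leaves implicit.
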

\begin{proof}
    The proof proceeds exactly as that of \cref{thm:complementary}.
\end{proof}
\begin{figure}[t]
	\includegraphics[width=0.7\linewidth]{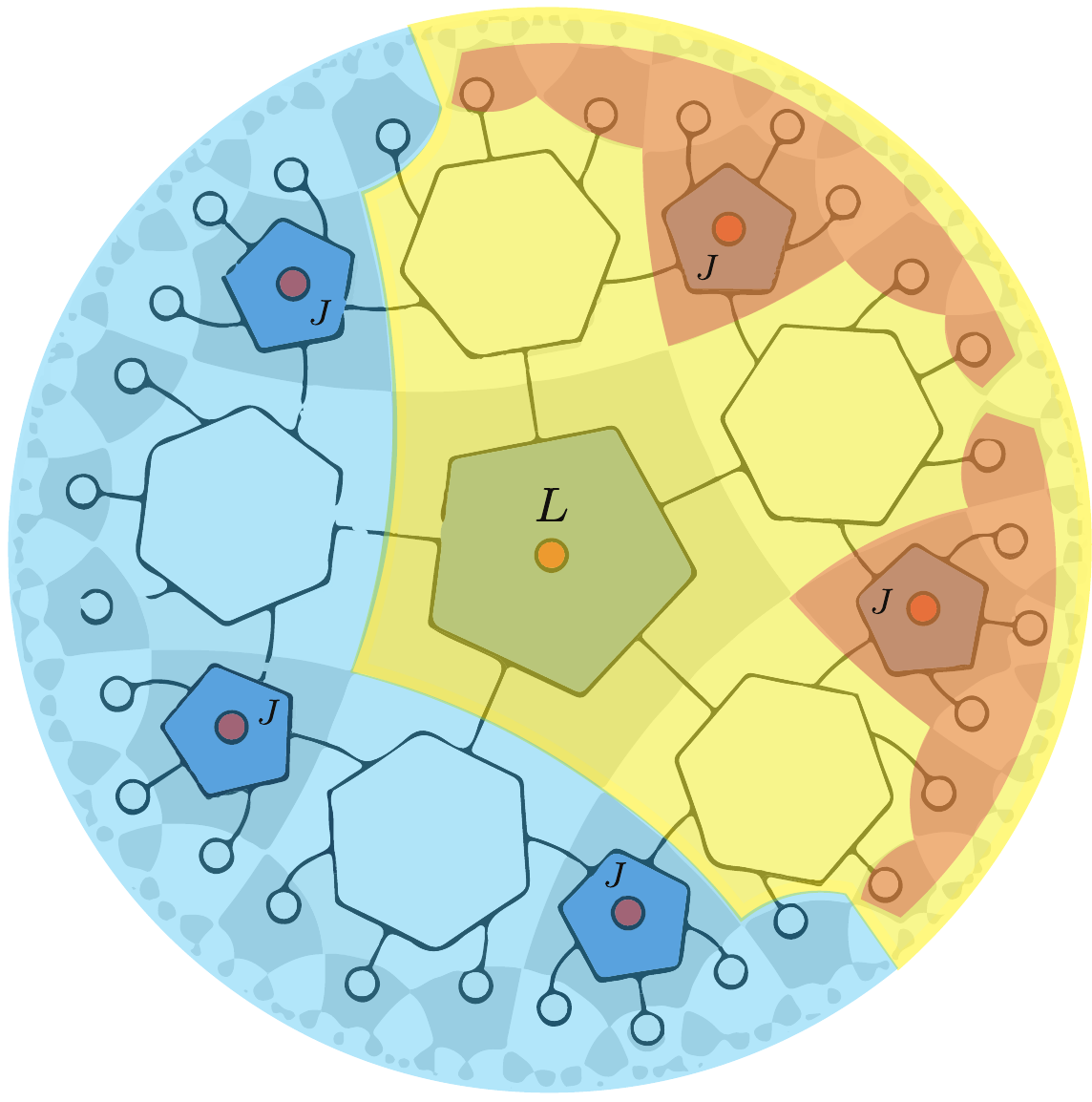}
	\caption{An alternating pentagon-hexagon HaPPY code. 
	We show how to construct three appropriate regions to be used in the proof of \cref{thm:lessComplementary}.
	In this case, the logical subsystem consists of only the central bulk qubit, and its connected code distance $d_c$ is $13$; an example of a connected non-correctable region $R$ with this size, along with its entanglement wedge, is highlighted in yellow.
	In this case, $R$ obeys complementary recovery, and hence can be used to prove Theorem~\ref{thm:lessComplementary} as follows: 
	due to complementary recovery, $R^c$ (highlighted along with its entanglement wedge in blue), is correctable.
	Furthermore, the two regions highlighted in red must be correctable as they are smaller than $R$, the smallest non-correctable region; thus the boundary partitions into three correctable regions. 
	\Cref{lem:PY} then implies that a non-Clifford gate cannot be applied transversally on the central qubit.
}
	\label{fig:alternating}
\end{figure}
An example of the construction used in this proof is shown in \cref{fig:alternating}.
In many holographic codes, such as certain instances of HaPPY, we expect the connected code distance to be significantly larger than the code distance, making the connected form of this theorem again more useful%
\footnote{Unlike \cref{sect:complementary}, the $d_c$ variant is not strictly stronger than the $d$ variant, for example a code may exist where only disconnected regions satisfy complementary recovery.}.

In any case, the definition of (connected) code distance implies the existence of at least one non-correctable region of size $d_{(c)}$; in addition to a number of (connected) supersets of this region that are also non-correctable and smaller than $2 d_{(c)}-2$.
As long as regions obeying complementary recovery are not \emph{too} rare, which they should not be for a purported holographic code, at least one such region will obey complementary recovery and the theorem will apply.
For example, although HaPPY codes do not generally obey complementary recovery, \cref{fig:alternating} shows an example of a HaPPY code in which an appropriate region can still be found to apply \cref{thm:lessComplementary}.

One can also construct a variant for any definition of size $\sigma(R)$, as per \cref{def:generalDistance}, so long as it is always possible to break a region into two smaller regions according to this notion of size.
As long as the assumptions are satisfied for any such notion of size, the conclusion holds.

One can strengthen this theorem by noting that the three regions do not necessarily need to be correctable.
Two regions need only be dressed-cleanable, which is a weaker condition in general -- relaxing the assumptions necessary to apply the theorem.
For example, the red and blue regions of the HaPPY code in \cref{fig:pauli} are both dressed-cleanable but not correctable.
We remark that this freedom is irrelevant in a code with all regions satisfying complementary recovery, due to \cref{lem:CR-DC} -- which also suggests holographic codes should not generally have too large a difference between dressed-cleanable and correctable regions.

It follows from the proof that if $R$ is close to $d_{(c)}$ in size, then there is significant ``room to spare'' when dividing the boundary system into three correctable regions.
That is, $R_1$ and $R_2$ are not only smaller than the smallest non-correctable region; they are roughly \emph{half} its size.
Instead of choosing $R_0$, $R_1$ and $R_2$ to be mutually non-intersecting, we could instead choose them to have some overlap while still being correctable and satisfying $R = R_0 \cup R_1\cup R_2 $.
Such an overlap proves useful in many of our subsequent extensions of this simple initial setup.

For example, it proves useful in generalizing beyond the case where  the logical subsystem $S\subset \left\{ 1,\dots ,n_L \right\}$ is chosen to consist of only a single local encoded subsystem $i$.
First, let us point out why the argument as presented above breaks down if multiple local encoded subsystems are included -- for example, suppose that two local encoded subsystems are contained in the logical subsystem, $S = \{i,j\}$.
Then it may be the case that $i\in \mathcal{E}[R^c]$ and $j \in \mathcal{E}[R]$, in which case both $R$ and $R^c$ are non-correctable with respect to $\left\{ i,j \right\}$ -- even if complementary recovery is satisfied.
By contrast, with a single local encoded subsystem, complementary recovery guaranteed that either $R$ or $R^c$ be correctable.
Nonetheless, in many cases three appropriate regions can still be constructed for a collection of sites in the bulk; an example in a HaPPY code is shown in \cref{fig:threesite}. 
We argue below that due to the existence of ``room to spare'' in the above argument, this should generally be the case.

\subsection{Finite-size bulk regions}\label{sect:regions}

We now generalize our argument to bulk regions with a finite, non-zero size. First we attempt to provide some intuition: 
We have seen that a logical subsystem associated to a point in a holographic code satisfies (connected) distance equals (connected) price. For extended bulk regions this equality no longer holds. However, from AdS/CFT we expect for the logical subsystem associated to a ball of radius $\delta$ in the bulk $p_c \leq d_c + \kappa(\delta)$, where $\kappa(\delta)$ grows exponentially with $\delta$ for any spatial dimension. Hence for sufficiently small bulk regions of a code with a holographic structure we expect $2\leq p_c \leq 2d_c -2$ is still satisfied. By Lemma~\ref{lem:PYprice} this implies that transversal gates can only act via Cliffords on the logical subsystem associated to any such bulk region. 
This argument is extended to locality-preserving gates for logical regions that are sufficiently deep within the bulk below. 

To make this generalization more rigorous we need to make reference to the geometric structure of the code, as introduced in \cref{sect:geometry}.
Specifically, we use the notion of a $\kappa_R$-smooth entanglement wedge map.
As discussed where it was introduced, $\kappa_R$ is typically exponentially large in $r$; however the amount of ``room to spare'' available from the proof of \cref{thm:lessComplementary} is proportional to the code distance $d$, which is also expected to be exponential in $r$ for deep bulk inputs.
Thus we are usually able to trade this overlap for the inclusion of a size $\mathcal{O}(1)$ region in the bulk.
{
\begin{theorem} \label{thm:kappa}
	Consider a holographic stabilizer code with encoding isometry $V$, and a particular choice of local bulk subsystem $S$ of diameter $D_S$ with a $\kappa_R$-smooth entanglement wedge. Suppose there exists a non-correctable connected region $R$ with size $2\le \abs{R} \le 2d_{c}(S)-2-\kappa_R(D_S)$ that obeys complementary recovery. Then any transversal dressed-CSP unitary $U$ is constrained to act on this subsystem via an element of the Clifford group, $VUV^{\dagger} = U_L \otimes U_J,\ U_L \in \mathcal{C}_2$.
\end{theorem}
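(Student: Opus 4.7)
The plan is to strengthen the proof of \cref{thm:lessComplementary} by using the $\kappa_R$-smoothness of the entanglement wedge to ``thicken'' $R$ just enough that its entanglement wedge captures all of $S$, while still leaving a budget compatible with splitting the thickened region into two correctable connected pieces. The overall structure is to exhibit a tripartition of the boundary into three regions that are each correctable with respect to $S$ and then invoke \cref{lem:PY} with $s_U = 0$.

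First, I use complementary recovery together with non-correctability of $R$ to locate $S$ geometrically. Since $R$ is non-correctable with respect to $S$, by definition some local encoded subsystem in $S$ fails to lie in $\mathcal{E}[R^c]$; complementary recovery then forces this subsystem into $\mathcal{E}[R]$. Because $S$ has diameter at most $D_S$, every point of $S$ is within distance $D_S$ of this subsystem, giving $S \subset B(\mathcal{E}[R], D_S)$.

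Next, I apply \cref{def:kappa} with $\delta = D_S$ and $x = \kappa_R(D_S)$: there exists a connected region $R_x$ with $R \subset R_x$, $|R_x - R| \le \kappa_R(D_S)$, and $B(\mathcal{E}[R], D_S) \subset \mathcal{E}[R_x]$. Combining this with the previous step gives $S \subset \mathcal{E}[R_x]$, so $R_x^c$ is correctable with respect to $S$. The size assumption on $R$ then yields
\begin{equation}
    2 \le |R_x| \le |R| + \kappa_R(D_S) \le 2 d_c(S) - 2 .
\end{equation}
Partition $R_x$ (which is connected) into two connected pieces $R_1$ and $R_2$, each of size at most $d_c(S) - 1$; by definition of the connected code distance, each is correctable with respect to $S$, and hence dressed-cleanable by \cref{lem:BCDC}. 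With the three correctable regions $R_1$, $R_2$, $R_x^c$ covering the boundary, and $s_U = 0$ for a transversal $U$, \cref{lem:PY} (applied with $m = 2$ and $R_0 = R_x^c$) yields $U_L \in \mathcal{C}_2$.

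The main subtlety is ensuring the partition of $R_x$ into two connected correctable pieces actually exists. For a 1D boundary (as in HaPPY), a connected region is an arc that can always be bisected into two sub-arcs of size at most $\lceil |R_x|/2 \rceil \le d_c(S) - 1$, so this is automatic. For higher-dimensional boundaries one has to take more care in choosing a geometric cut of $R_x$ so that both halves remain connected, but this is a purely geometric rather than code-theoretic issue. The other point worth checking is that the $\kappa_R(D_S)$ ``overhead'' really is an acceptable cost: since $d_c(S)$ is expected to grow exponentially with the bulk depth of $S$ in a holographic code, while $\kappa_R(D_S)$ grows only with the diameter of $S$, the hypothesis $|R| \le 2 d_c(S) - 2 - \kappa_R(D_S)$ remains satisfiable for sufficiently deep, small bulk regions; this is the natural generalization of the ``room to spare'' remarked on after \cref{thm:lessComplementary}.
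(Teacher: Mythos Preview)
Your proof is correct and follows essentially the same route as the paper: both use non-correctability of $R$ plus complementary recovery to place some $i\in S$ inside $\mathcal{E}[R]$, then invoke $\kappa_R$-smoothness to enlarge $R$ to a connected $R_x$ with $S\subset\mathcal{E}[R_x]$ and $|R_x|\le 2d_c(S)-2$, and finally split $R_x$ into two correctable pieces to apply \cref{lem:PY}. The only cosmetic difference is that the paper packages the final splitting step by invoking \cref{lem:PYprice} (implicitly in its connected variant), whereas you carry out that partition explicitly; your added remark about the need for a connected bisection in higher boundary dimensions is a point the paper leaves implicit.
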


\begin{proof}
	As $R$ is not correctable and obeys complementary recovery, there is at least one subsystem $i\subset S$ contained in $\mathcal{E}(R)$.
	$S \subset B(\mathcal{E}(R),D_S)$ because the distance between two points in $S$ is at most $D_S$.
	Thus, if $x=\kappa_R(D_S)$ then there exists a connected boundary region $R_x$ such that $S \subset \mathcal{E}(R_x)$, meaning its complement $R_x^c$ is correctable.
	Thus the code price is upper bounded by $|R_x| = |R|+x$, which in turn is upper bounded by $2d_{c}(S)-2$, so we can apply \cref{lem:PYprice} to conclude the proof.
    \end{proof}
    
}

\begin{figure}[t]
	\centering
	\includegraphics[width=0.7\linewidth]{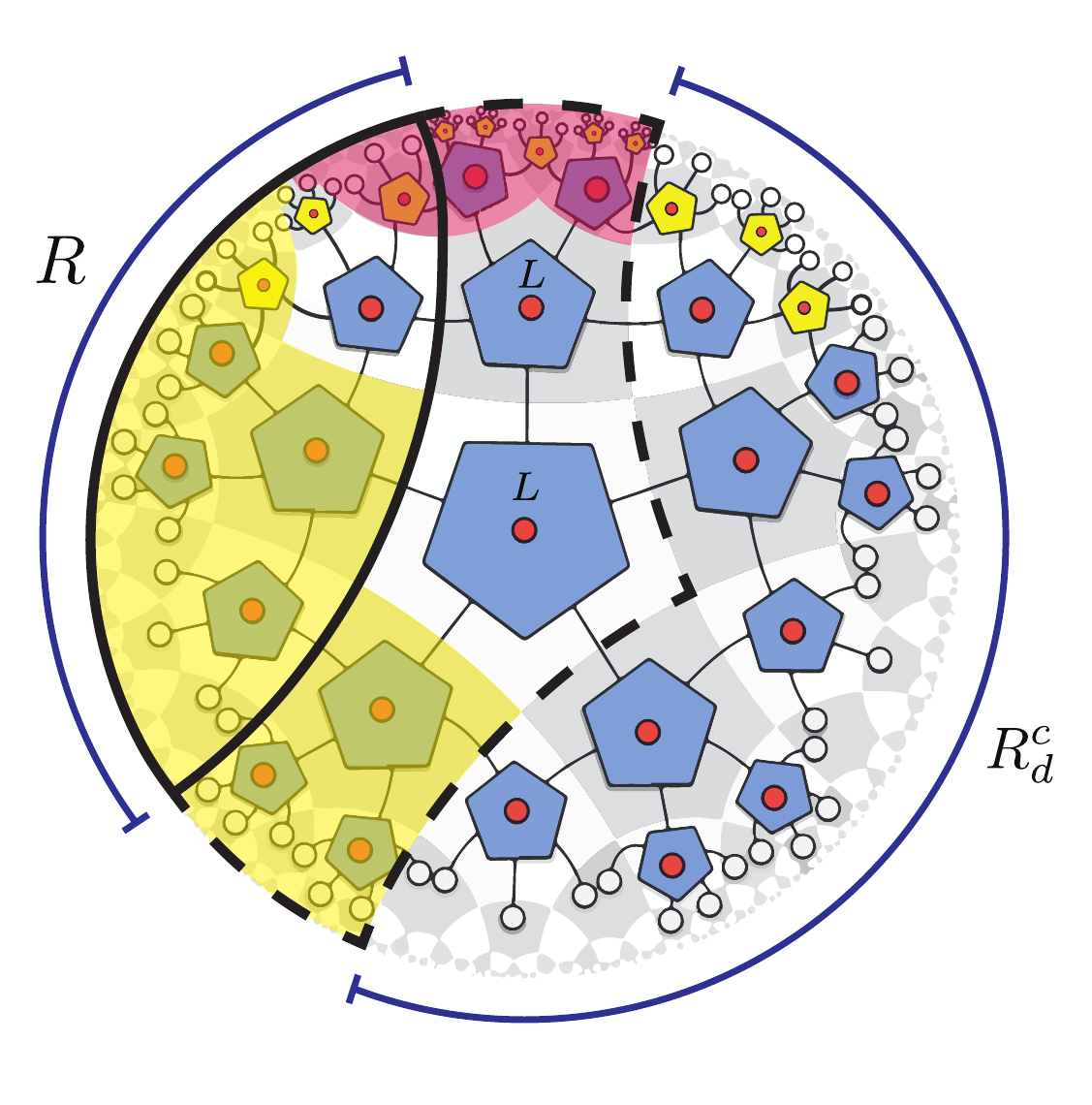}
	\caption{An example with a bulk region containing more than one site, illustrating theorem \ref{thm:kappa}. More perfect tensors than usual, colored (not highlighted) yellow, are included in order to be able to pick boundary regions with greater precision. A non-correctable region $R$ has an entanglement wedge bordering the logical region. $R$ needs to be expanded only slightly, to $R_d$, in order to include the whole of the logical region. Thus $R_d^c$ is correctable. Furthermore, we can break $R_d$ into two correctable regions, highlighted in yellow and red. Thus we have found three correctable regions and \cref{lem:PY} can be applied.}
	\label{fig:threesite}
\end{figure}

\subsection{Almost complementary recovery} \label{sect:relaxGC}

We now move on to consider a more general case where the entanglement wedge map is smooth, but the region $R$ only almost obeys complementary recovery. 
In this case, $i$ may not belong to the entanglement wedge $\mathcal{E}[R]$, but a slightly 
larger $R^+\supset R$ can be chosen such that $i\in \mathcal{E}[R^+]$, and thus $R^{+c}$ is correctable with respect to~${\left\{ i \right\}}$.

\begin{definition} \label{def:approxCR}
	We say that a region $R$ satisfies $\varepsilon$-almost complementary recovery if it can be expanded by $\varepsilon $ in all directions, such that the new entanglement wedge includes the complement of the entanglement wedge of $R^c$, i.e.\ $\mathcal{E}[R^c]^c \subset  \mathcal{E}[B(R,\varepsilon ) ] $%
	\footnote{
	One can also express this in terms of the bulk rather than the boundary, i.e.\ by requiring $\mathcal{E}[R^c]^c \subset B(\mathcal{E}[R],\delta ) $, and then enforcing $\kappa_R(\delta) \leq \varepsilon$. 
	This has the nice physical interpretation that $\delta$ bounds the thickness of the residual region contained in neither $\mathcal{E}[R]$ nor $\mathcal{E}[R^c]$ -- for example, connected regions of HaPPY codes satisfy it with $\delta = 1$ -- however formulating it in terms of the boundary makes \cref{thm:approxCR} slightly stronger.}
	.
\end{definition}

\begin{theorem} \label{thm:approxCR}
	Consider a holographic stabilizer code with encoding isometry $V$, and a particular choice of local bulk subsystem $S$ of diameter $D_S$ with a $\kappa_R$-smooth entanglement wedge. Suppose there exists a non-correctable connected region $R$ with size ${2\le \abs{R} \le 2d_{}(S)-2- \kappa_R(D_S) -  \varepsilon }$ that obeys $\varepsilon $-complementary recovery. Then any transversal dressed-CSP unitary $U$ can only act on this subsystem via an element of the Clifford group, $VUV^{\dagger} = U_L \otimes U_J,\ U_L \in \mathcal{C}_2$.
\end{theorem}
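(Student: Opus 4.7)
The plan is to mirror the proof of \cref{thm:kappa}, but prepend an extra expansion step to leverage the $\varepsilon$-almost complementary recovery of $R$. Because $R$ is non-correctable with respect to $S$, at least one local encoded subsystem $i \in S$ must fail to lie in $\mathcal{E}[R^c]$, placing it in $\mathcal{E}[R^c]^c$. By \cref{def:approxCR}, this forces $i$ into the entanglement wedge of the thickened region $R^+ \defi B(R,\varepsilon)$, which I will treat as a connected boundary region with $|R^+| \le |R| + \varepsilon$.

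Next I would invoke the $\kappa_R$-smoothness of the entanglement wedge map applied to $R^+$: since $i \in \mathcal{E}[R^+] \cap S$ and $S$ has diameter $D_S$, all of $S$ lies within $B(\mathcal{E}[R^+], D_S)$, so smoothness supplies a connected boundary region $R^{++} \supset R^+$ with $|R^{++} \setminus R^+| \le \kappa_R(D_S)$ and $S \subset \mathcal{E}[R^{++}]$. The hypothesis on $|R|$ then gives $|R^{++}| \le |R| + \varepsilon + \kappa_R(D_S) \le 2 d(S) - 2$, so $(R^{++})^c$ witnesses $p(S) \le 2 d(S) - 2$. Combined with $|R| \ge 2$, which forces $d(S) \ge 2$ and hence $p(S) \ge 2$ via \cref{lem:pricedistance}, the price-distance hypothesis of \cref{lem:PYprice} is met, and since $U$ is transversal we conclude that it acts on $S$ as a Clifford gate.

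The main obstacle is not conceptual --- this is a routine adaptation of \cref{thm:kappa} --- but rather bookkeeping. First, I need to justify identifying $|B(R,\varepsilon)| - |R|$ with at most $\varepsilon$, as the additive size bound in the theorem statement implicitly requires; this is consistent with the conventions from \cref{def:approxCR,def:kappa} but deserves a brief remark (e.g., in the boundary geometry used for the codes of interest, an $\varepsilon$-thickening of a connected region adds at most $\varepsilon$ sites, or one can alternatively reinterpret $\varepsilon$ as a count of added sites). Second, I should verify that $\kappa_R$-smoothness, which is stated as a property indexed by a boundary region, can be applied to $R^+$ rather than only to $R$; this follows because $R^+$ is a small connected enlargement of $R$ whose smoothness bound is controlled by the same function $\kappa_R$ up to the already-accounted-for $\varepsilon$ shift.
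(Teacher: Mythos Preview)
Your proposal is correct and matches the paper's approach exactly: the paper simply says ``The proof is a straightforward combination of \cref{def:approxCR} with the proof of \cref{thm:kappa},'' which is precisely the two-step expansion (first by $\varepsilon$, then by $\kappa_R(D_S)$) you spell out. The bookkeeping concerns you flag---identifying the $\varepsilon$-thickening with at most $\varepsilon$ added boundary sites, and applying $\kappa_R$-smoothness to $R^+$ rather than $R$---are genuine soft spots that the paper itself leaves implicit, so your explicit discussion is an improvement rather than a gap.
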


\begin{proof}
    The proof is a straightforward combination of \cref{def:approxCR} with the proof of \cref{thm:kappa}.
\end{proof}

\subsection{Approximate stabilizer codes} \label{sect:approx}
AdS/CFT is believed to not be an exact quantum error-correcting code, as its error-correction properties only hold to first order in a perturbative expansion~\cite{almheiriBulkLocalityQuantum2015}.
Instead, it is best understood as an approximate code~\cite{almheiriBulkLocalityQuantum2015}.
One family of toy models~\cite{caoApproximateBaconShorCode2020} attempt to capture this aspect of AdS/CFT in terms of an encoding map that is only \emph{approximately} equal to an encoding isometry of a stabilizer code.
Specifically, the map $V:\mathcal{H}_L \to \mathcal{H}_P$ is given by
\begin{align}
	V = V_0 + \delta V,
	\label{eq:approximateStabilizer}
\end{align}
with $V_0$ an encoding isometry for a stabilizer code, and $\delta V$ some small perturbation; $\|\delta  V\|_1 = \varepsilon $ for some small $\varepsilon >0$.

In such a case, it is straightforward to see that a CSP unitary must implement a logical gate which is approximately equal to an element of the Clifford group.
To see this, note that $V_0$ is an exact holographic stabilizer code, and thus the unitary applied to this code implements some element of the Clifford group, $V_0^{\dagger} U V_0= U_L \in \mathcal{C}_2$.
Then one can show using H\"older's inequality that 
\begin{align}
	\| V^{\dagger}UV - U_L \|_1 &=  \| (V_0 + \delta V)^{\dagger} U \left( V_0 + \delta V \right) - V_0^{\dagger} U V_0 \|_1 \\
	&\leq  \| V_0 U \delta V^{\dagger}\|_1  + \| \delta V U  V_0^{\dagger} \|_1 +\|  \delta V^{\dagger} U \delta V\|_1 \\
	&\leq  \| \delta V\|_1  + \| \delta V\|_1  + \| \delta V \|_1  \| \delta V^{\dagger}\|_{\infty } \\
	&\leq 2 \varepsilon +\varepsilon ^2
	\label{eq:approximateClifford}
\end{align}
i.e.\ the logical operation implemented by the full code $V$ is close to an element of the Clifford group.

\subsection{Non-transversal locality-preserving gates} \label{sect:nontransversal}
We now combine the Pastawski-Yoshida lemma \cref{lem:PY} with the geometric structure introduced in \cref{sect:geometry} to extend the previous result to the case of unitaries that are not necessarily transversal, but are still locality-preserving -- i.e.\ that have some finite spread $s_U>0$.

This generalization is simplest to formulate when the spatial boundary geometry is one-dimensional.
\begin{theorem}\label{thm:1d}
    For a subsystem stabilizer code $S$ with a one-dimensional boundary, if there is a non-correctable region $R$ of size $d_c(\{i\})$ that satisfies complementary recovery then any dressed-CSP unitary operator $U$ with $s_U<\frac{d_c}{6}$ implements a logical unitary in $\mathcal{C}_2$ on the logical system $\{i\}$.
\end{theorem}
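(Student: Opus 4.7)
The plan is to invoke the Pastawski-Yoshida lemma (\cref{lem:PY}) with $m=2$, so I need to exhibit a cover of the boundary by three regions $R_0, R_1, R_2$ such that $R_0$ is correctable with respect to $\{i\}$, while the buffered regions $B(R_1,2s_U)$ and $B(R_2,s_U)$ are each dressed-cleanable. Once such a cover exists, \cref{lem:PY} immediately gives $U_L \in \mathcal{C}_2$.

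First I would set $R_0 \defi R^c$. Non-correctability of $R$ with respect to $\{i\}$ is equivalent to $i \notin \mathcal{E}[R^c]$, and complementary recovery for $R$ then forces $i \in \mathcal{E}[R]$, which is equivalent to $R^c$ being correctable with respect to $\{i\}$. Next I would use one-dimensionality of the boundary: the region $R$ (which I take to be connected, as it is the smallest connected non-correctable region) is an interval of length $d_c$, so it can be partitioned into two adjacent connected subintervals $R_1$ and $R_2$ of lengths $a$ and $d_c - a$ respectively. Because we are in one dimension, $B(R_1,2s_U)$ is a connected interval of length at most $a+4s_U$ and $B(R_2,s_U)$ is a connected interval of length at most $(d_c-a)+2s_U$.

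The arithmetic step is to choose $a$ so that both buffered regions are connected intervals of length strictly less than $d_c$, which by the definition of connected code distance guarantees they are correctable, and hence (by \cref{lem:BCDC}) dressed-cleanable. The two inequalities $a+4s_U < d_c$ and $(d_c-a)+2s_U < d_c$ are equivalent to $2s_U < a < d_c - 4s_U$, and this range contains an integer exactly because the hypothesis $s_U < d_c/6$ yields $d_c - 6s_U > 0$. This is where the factor $1/6$ in the bound on the spread enters the argument; it represents the need to pay a buffer of $2s_U$ on each side of $R_1$ in the inner layer and $s_U$ on each side of $R_2$ in the outer layer of the Pastawski-Yoshida induction.

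Finally, since $R_0 \cup R_1 \cup R_2$ covers the entire boundary, the dressed-CSP unitary $U$ is trivially supported on this union, and all hypotheses of \cref{lem:PY} at level $m=2$ are satisfied. Its conclusion is exactly $U_L \in \mathcal{C}_2$. I do not foresee a genuinely hard step here: the only substantive input beyond bookkeeping is the implication (complementary recovery $+$ non-correctability of $R$) $\Rightarrow$ (correctability of $R^c$), which is exactly the mechanism used in the proofs of \cref{thm:complementary} and \cref{thm:lessComplementary}; the rest is a one-dimensional packing estimate that is sharp at $s_U = d_c/6$.
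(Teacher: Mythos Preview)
Your proposal is correct and follows essentially the same approach as the paper's proof: set $R_0=R^c$ (correctable by complementary recovery), split the interval $R$ into two connected pieces, and check that the buffered pieces remain shorter than $d_c$ so that \cref{lem:PY} applies at level $m=2$. The only cosmetic difference is that the paper commits to the specific split $|R_1|=\tfrac{2d_c}{3}$, $|R_2|=\tfrac{d_c}{3}$ (with the index labels swapped relative to yours), whereas you leave the split point $a$ free and observe that the admissible range $(2s_U,\,d_c-4s_U)$ is nonempty precisely when $s_U<d_c/6$.
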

\begin{proof}
 To apply \cref{lem:PY}, we need to partition the boundary space into three regions such that one is correctable, one is correctable even after expansion by a distance of $s_U$ either side, and one is correctable after expansion by $2s_U$ either side.
 Furthermore, the definition of $d_c$ means that any connected region smaller than $d_c$ is correctable.
 Thus we can proceed as follows.
 
 Parametrize the interval $R$ as $[0,d_c]$, and choose $R_1$ as $[0,\frac{2d_c}{3}]$ and $R_2$ as $[\frac{2d_c}{3},d_c]$.
Then for any unitary with spread $s_U < \frac{d_c}{6}$, the extended regions $R_1^+ = B(R_1,s_U) = [-s_U,\frac{2d_c}{3}+s_U]$ and $R_2^+ = B(R_2,2s_U) = [\frac{2d_c}{3}-2s_U,d_c+2s_U]$ are both less than $d_c$ in length, and therefore remain correctable; $R_0$ also remains correctable, and thus \cref{lem:PY} applies. 
\end{proof}

More generally, this argument extends to higher dimensions as follows: let $d$ be the size of the smallest non-correctable region. Let $R$ be a connected region of size $d$. We then split $R$ into two regions $R_1$ and $R_2$ of sizes $\frac{2d}{3}$ and $\frac{d}{3}$, respectively. We assume that we can choose sufficiently nice regions $R_1$ and $R_2$ in the sense that the size of their extensions $B(R_i,s)$ do not increase too fast\footnote{
We expect $B(R_i,s)$ to scale as $|\partial R_i|s$ for small $s$, so a region with small boundary/bulk ratio, such as a ball in flat space, suffices.} with $s$. Then for sufficiently small $s_U$ the regions $R^{+}_{i} = B(R_i,2^{2-i}s_U)$ have size smaller than $d$ and thus remain correctable. We can then apply \cref{lem:PY} to achieve the desired result. 
Following this line of reasoning, \cref{thm:1d} generalizes straightforwardly to a higher dimensional setting where the boundary is a sphere and the region $R$ is a ball. %
At this time we do not know of a clean statement of the most general higher dimensional version of our result. 
This is in part due to the richer set of possibilities for the topology of connected regions in dimensions higher than one. 
Furthermore, 
degenerate cases exist where the region $R$ is itself a (thickened) boundary and as such the scaling of  $B(R_i,s)$ with $s$ for any subregion is neccesarily too fast to accommodate any positive spreading $s_U>0$.
However, recall that the region $R$ was chosen as a non-correctable region with minimal size.
In AdS/CFT, such regions are generally expected to be solid spheres \cite{Hubeny2012}, and as such the analogous regions in discrete models of AdS/CFT are expected to be approximately spherical and thus avoid the troublesome degenerate cases mentioned above.

\subsection{Entanglement wedge surface algebras with non-trivial centers} \label{sect:center}

Subsystem codes are a special case of \emph{operator-algebraic quantum error correction}, in which some general algebra of operators (the ``logical algebra'') is protected against error.
In subsystem codes, this is the algebra of bare logical operators $\{A_L \otimes \Id_J\}$.
An algebra of this form is special in that its only central elements are those proportional to the identity
.
More generally, the logical algebra may have a \emph{non-trivial center}, i.e.\ it may contain elements which commute with all other elements, but are not proportional to the identity on $L$.
The physical consequence of this is that the logical region is no longer associated with a Hilbert space  -- it is now identified with a logical \emph{algebra} instead~\cite{Beny2007,Beny2007b}.
Because the entanglement wedge map associates a bulk region with each boundary region, this means it now identifies each boundary region with an algebra and not a subsystem~\cite{almheiriBulkLocalityQuantum2015}.
As a result, the limit from \cref{eq:ewlimit} no longer applies, and the regions $\mathcal{E}[R]$ and $\mathcal{E}[R^c]$ can have a non-empty overlap $\mathcal{E}(R)\cap\mathcal{E}(R^c)\neq \varnothing$, which we refer to as the \emph{entangling surface}.
Because this algebra can be reconstructed on independent regions of the boundary, it must be associated with an abelian algebra, which also forms the center of the algebras associated with $\mathcal{E}[R]$ and $\mathcal{E}[R^c]$ respectively.

For example, in Ref.~\cite{donnellyLivingEdgeToy2017} codes were introduced (which we refer to as LOTE codes after the title of that paper) that are constructed from HaPPY codes, but with additional qubits associated to the entangling surface of any given region $R$.
However, only the algebra of operators generated by $\left\{ \Id, \sigma _Z \right\}$ acting on such a qubit can be recovered on either $R$ or $R^c$.
The full set of Paulis $\left\{ \Id, \sigma _X,\sigma _Y,\sigma _Z \right\}$ can still be recovered on the boundary, but not on $R$ or $R^c$ alone.
This means that with respect to this partition of the boundary, the data on the entangling surface is composed of classical bits.
This is meant to model an intuitive feature of AdS/CFT: the area of the entangling surface is information which ought to be recoverable from both $R$ and $R^c$. 

Even for LOTE codes we are still able to apply our methods to exclude transversal implementation of non-Clifford gates.
We do this fairly straightforwardly -- we simply ignore the additional algebraic structure, and proceed as if it were a subsystem code with local encoded subsystems as in \cref{eq:TPS}.
A qubit on the entangling surface has a subalgebra reconstructable on either $R$ or $R^c$, but its entire algebra can be reconstructed on neither -- thus it does not belong to the subsystem associated with either region's maximal entanglement wedge.
In other words, such qubits contribute to any given region's failure to obey complementary recovery.
Thus there are no non-trivial regions satisfying complementary recovery (as we have defined it here) in such a code.
Nonetheless, so long as regions satisfying complementary recovery are not too rare in the underlying HaPPY code, regions that \emph{almost} satisfy complementary recovery are not too rare in the corresponding LOTE code.
Thus we can apply the results of \cref{sect:relaxGC} to show that non-Cliffords cannot be implemented in the \emph{interior} of a region's entanglement wedge.
It is unclear how to define a Clifford operator on an algebra with a mixture of bits and qubits, so this appears to be the most that can be said -- locality-preserving gates cannot implement logical non-Cliffords on a slightly smaller quantum subalgebra that excludes the classical parts.

\subsection{Other levels of the Clifford hierarchy}\label{sect:levels}
For the special case that $U$ is a bare-CSP operator, \cref{lem:PYbare} implies a stronger result; namely that one can allow $R_0$ to only be dressed-cleanable rather than correctable. 
This can be used to strengthen the result further. 
In fact, in some cases it can even be used to show that only Pauli gates can be implemented, such as in \cref{fig:pauli}, by dividing the boundary into just two dressed-cleanable regions $R_0$ and $R_1$. 
This case of the bound is relevant if one wishes to implement a desired bare logical gate on a small bulk subsystem without causing any disturbances to other regions of the bulk. 

We remark that it is not possible to restrict a \emph{dressed-CSP} operator to implement something only from $\mathcal{C}_1$ in this way.
To do so would require expressing the boundary as a union of a dressed-cleanable and a correctable region, which is impossible; any bare-logical operator could be cleaned to $R_1$ using correctability of $R_0$, but \cref{lem:baretrivial} applied to $R_1$ means that the operator must be trivial -- a contradiction.

Finally, for sufficiently large bulk regions, the arguments of \cref{sect:nontransversal} do not apply.
The reason for this is that all correctable regions are too small to cover the entire boundary using only three of them; some greater number would be required.
In these cases, a restriction on locality-preserving gates still exists, depending on the number $k$ of required correctable regions.
Thus, even if a non-Clifford cannot be ruled out for very large bulk regions, some restriction still exists corresponding to a higher level of the Clifford hierarchy, $\mathcal{C}_{k-1}$.

\begin{figure}[t]
	\centering
	\includegraphics[width=0.7\linewidth]{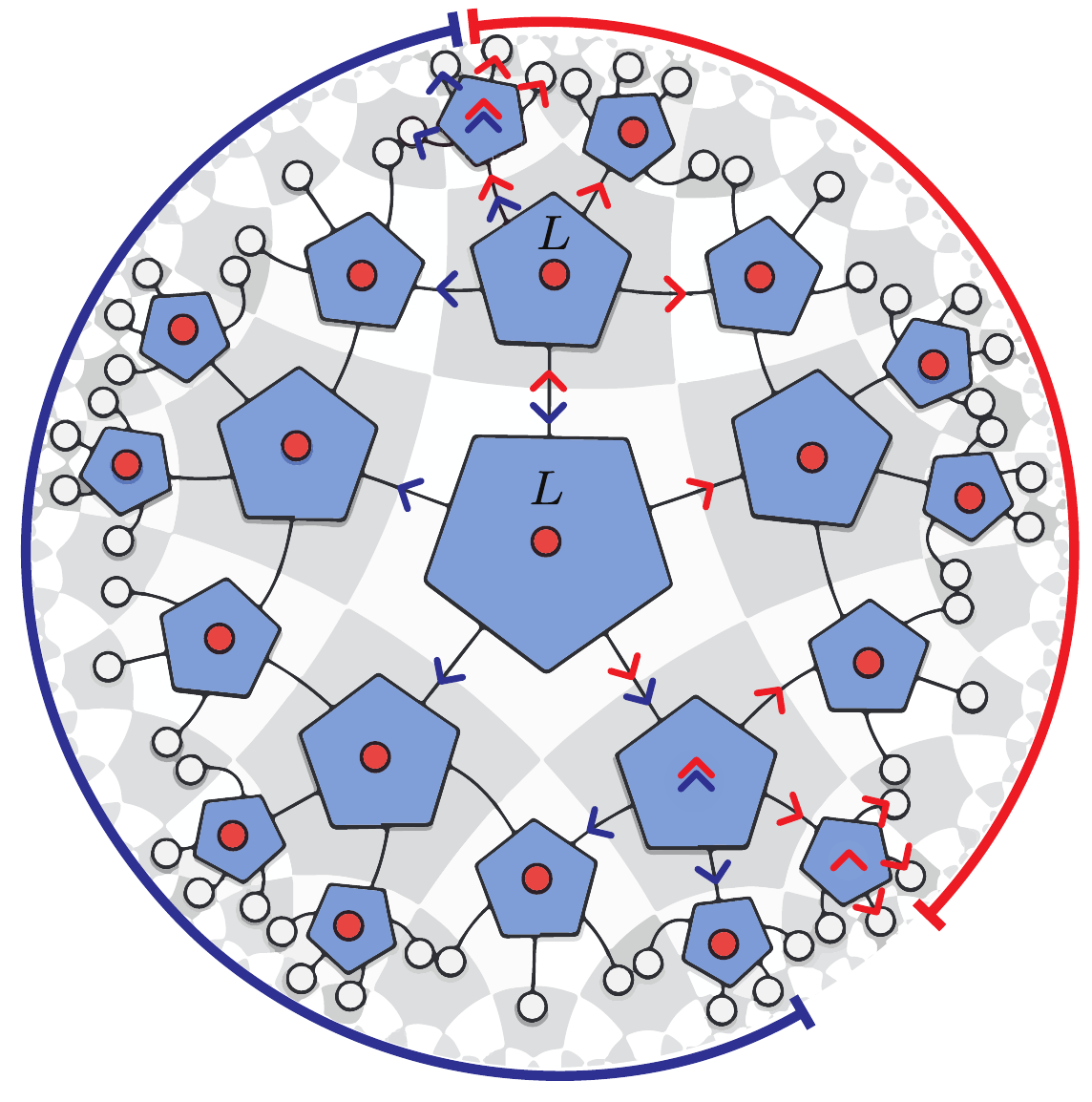}
	\caption{In this case, with a two-site logical region, we can make an even stronger claim about what can be implemented by a \emph{bare}-CSP operator using \cref{lem:PYbare}.
	The complements of both the red and blue solid regions are dressed-cleanable, as illustrated by the arrow assignments.
Thus they can be used in \cref{lem:PYbare} with $m=1$, meaning that a bare-CSP operator can only apply a Pauli operator to the central qubits; even a non-Pauli Clifford is forbidden. }
	\label{fig:pauli}
\end{figure}

\begin{theorem}\label{thm:HigherCliffords}
    For a subsystem stabilizer code $S$ with a one-dimensional boundary, if there is a region $R$ of size $d_c(S)$ that satisfies complementary recovery then any dressed-CSP unitary operator $U$ with $s_U<\frac{d_c}{2(2^m-1)}$ implements a logical unitary in $\mathcal{C}_{m}$ for $m\ge 2$.
\end{theorem}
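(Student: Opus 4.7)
The plan is to apply the Pastawski--Yoshida lemma (\cref{lem:PY}) directly, by partitioning the one-dimensional boundary into $m+1$ regions $R_0, R_1, \ldots, R_m$ so that $R_0$ is correctable and each extended region $R_j^+ \defi B(R_j, 2^{m-j} s_U)$ is dressed-cleanable for $j \geq 1$. This is the natural generalization of the $m=2$ argument already carried out in \cref{thm:1d}.

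First I would take $R$ to be a non-correctable connected region of size $d_c$ satisfying complementary recovery. Exactly as in the proofs of \cref{thm:complementary,thm:1d}, complementary recovery then forces the complement $R_0 \defi R^c$ to be correctable with respect to the chosen logical subsystem.

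Next I would parametrize $R$ as the interval $[0, d_c]$ and split it into $m$ consecutive connected subintervals $R_1, \ldots, R_m$ of positive lengths summing to $d_c$. Each $R_j^+$ is a connected interval of length $|R_j| + 2^{m-j+1} s_U$, and by the definition of $d_c$ any connected region of length less than $d_c$ is correctable, hence dressed-cleanable by \cref{lem:BCDC}. So the problem reduces to choosing the $|R_j|$ so that all $m$ inequalities $|R_j| + 2^{m-j+1} s_U < d_c$ hold. A natural equal-extension choice takes $|R_j^+|$ to the common value $L = (d_c + 2 s_U (2^m - 1))/m$, with $|R_j| = L - 2^{m-j+1} s_U$; using $\sum_{j=1}^m 2^{m-j+1} = 2(2^m - 1)$, one checks that both $L < d_c$ and $|R_j| > 0$ for all $j$ follow from the spread bound in the theorem whenever $m \geq 2$. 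Feeding the resulting regions into \cref{lem:PY} then gives $U_L \in \mathcal{C}_m$.

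The inductive content is already packaged inside \cref{lem:PY}, so the only genuinely new work is the partitioning step, which I expect to be the (mild) obstacle. It amounts to tracking how the nested extensions consume total ``room'' on the one-dimensional boundary, with the final $m$-dependence determined entirely by the sum $2(2^m - 1)$. In fact the construction yields the slightly sharper bound $s_U < (m-1) d_c / (2(2^m-1))$, so the theorem's spread bound has some slack for $m \geq 3$.
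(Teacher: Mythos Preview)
Your overall strategy---setting $R_0 = R^c$ (correctable by complementary recovery), partitioning $R$ into $m$ consecutive subintervals, and invoking \cref{lem:PY}---is exactly what the paper does. The gap is in your specific partition. The equal-extension choice $|R_j^+| = L$ does \emph{not} satisfy $|R_j| > 0$ for all $j$ under the theorem's hypothesis once $m \geq 4$. At $j=1$ one has
\[
|R_1| = L - 2^{m} s_U = \frac{d_c + 2(2^m-1)\,s_U}{m} - 2^{m} s_U,
\]
and positivity requires $s_U < \dfrac{d_c}{(m-2)\,2^m + 2}$. For $m \geq 4$ this is \emph{strictly stronger} than the theorem's bound $s_U < \dfrac{d_c}{2(2^m-1)}$, since $(m-2)\,2^m + 2 > 2(2^m-1)$ is equivalent to $(4-m)\,2^m < 4$. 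So your claimed check ``$|R_j| > 0$ follows from the spread bound'' fails for $m \geq 4$, and the sharper bound you quote at the end is illusory.

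The fix is easy and is what the paper does: use a geometric partition that pairs the largest expansion with the smallest interval. With the indexing of \cref{lem:PY} (so $R_1$ is expanded by $2^{m-1} s_U$), take $|R_j| = \dfrac{2^{\,j-1} d_c}{2^m - 1}$. Then
\[
|R_j^+| \;=\; \frac{2^{\,j-1} d_c}{2^m-1} + 2^{\,m-j+1} s_U \;<\; \frac{(2^{\,j-1} + 2^{\,m-j})\,d_c}{2^m-1} \;\leq\; \frac{(2^{m-1}+1)\,d_c}{2^m-1} \;\leq\; d_c
\]
for all $j=1,\ldots,m$ and all $m\geq 2$, using only $s_U < \dfrac{d_c}{2(2^m-1)}$. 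Your feasibility observation---that the slacks $d_c - 2^{m-j+1} s_U$ sum to more than $d_c$ under the hypothesis---is correct and already shows \emph{some} valid partition exists; it is just not the equal-extension one.
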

\begin{proof}
    As in Sec.~\ref{sect:nontransversal}, we subdivide the region $R$ into $m$ consecutive regions $R_{j=1,\dots,m}$ of sizes $\frac{2^{m-j}d_c}{(2^m-1)}$. Then $R_{j}^{+} = B(R_j,2^{j}s_U)$ has size $\frac{2^{m-j}d_c}{(2^m-1)}+2^{j+1}s_U < \frac{2^{m-j}+2^{j+1}}{2^{m}-1}d_c \le d_c$. We can thus apply \cref{lem:PY} and obtain the result.
\end{proof}

The above argument can be generalized to higher dimensions along similar lines to the discussion at the end of \cref{sect:nontransversal}. 
Again we assume that there is a boundary region $R$ of the smallest non-correctable region size that can be decomposed into a set of subregions $R_i$,  $i=1,\dots,m$, that are sufficiently nice in the sense that the size of their extensions $B(R_i,s)$ do not grow too fast with $s$. 
Then for sufficiently small $s_U$, the argument presented above for \cref{thm:HigherCliffords} can be applied straightforwardly, leading to the same conclusion. 
Similar to \cref{sect:nontransversal}, we do not currently have a clean statement of the most general version of this result but we expect the statement for a solid spherical boundary region to be the most relevant.

\section{Discussion \& conclusions}

In this work we have established a specialisation and strengthening of the Eastin-Knill theorem for holographic subsystem stabilizer codes, inspired by analogous work due to Bravyi-Koenig and Pastawski-Yoshida for topological stabilizer codes. 
Specifically, we have shown that sufficiently locality-preserving operations on the physical qubits of a holographic stabilizer code can implement only logical Clifford gates on sufficiently small regions of the bulk. 
This result was extended in several ways, including to approximate stabilizer encodings and to codes in which algebras with non-trivial centers live on the surfaces of entanglement wedges.
We have further shown that upon weakening our assumptions, locality-preserving gates are still restricted to implement an element from a fixed level of the Clifford hierarchy. 
In the course of deriving our results we also introduced a general definition of the maximal entanglement wedge for arbitrary subsystem codes which may be of independent interest.  
While we have focused on the simple case of the hyperbolic disc in our figures, we remark that our results also apply to non-trivial bulk topologies and higher dimensional spaces.

There is a recurring theme in fault-tolerant quantum computation that the set of fault-tolerantly implementable gates is more severely restricted the better the error-correction properties of a code are.
This makes some intuitive sense -- better protected logical information should also be harder to manipulate.
For example, Eastin-Knill's theorem establishing the impossibility of universal transversal gate sets only applies when the code distance is greater one; the Brayi-Koenig theorem and Pastawski-Yoshida's subsystem generalization require a macroscopically large code distance; and another result of Pastawski-Yoshida restricts transversally implementable gates more strongly the higher the loss threshold is.
Similarly, our results require assumptions about the quality of the code; namely that the distance of a region is ``large enough'' with respect to the price of the region.

An interesting future direction is the extension of our results restricting locality-preserving gates to the setting of the full AdS/CFT duality, possibly by exploiting conformal invariance. This approach is inspired by the generalization of the Bravyi-Koenig bound to all (2+1)D TQFTs (on closed manifolds)~\cite{beverlandProtectedGatesTopological2016} that was established by exploiting consequences of topological invariance. 
As explained in the introduction, it has been successfully argued that AdS/CFT supports no global symmetries~\cite{Harlow2019}. 
While this result partially overlaps with our goal, our aim is somewhat stronger, restricting the possible bulk evolutions generated by arbitrary locality- and codespace-preserving evolutions on the boundary with no requirement that they take the form of a global symmetry. 
This could have potentially interesting implications for the prospect of implementing bulk Hamiltonian evolution via locality-preserving operations on the boundary.

There is an apparent tension between the results of this paper and the intuition that there ought to exist states in AdS/CFT which contain a quantum computer in the bulk. In a short amount of time, this quantum computer ought to be able to perform, at least approximately, an arbitrary element of $SU(2^k)$ on $k$ qubits for $k\geq 2$ (but not much greater). But short time evolution is implemented on the boundary by a locality preserving unitary, which we have seen can only implement Cliffords in holographic stabilizer codes. Of course, in AdS/CFT the local bulk or boundary degrees of freedom are larger than simply qubits, so a fairer comparison should use qudit rather than qubit stabilizer codes. However, even for large qudit dimension it is impossible for the Clifford group to contain a subgroup arbitrarily approximating $SU(2^k)$ \cite{turing}. Thus it appears unlikely that exact or approximate (as defined in \cref{sect:approx}) stabilizer codes alone can capture this aspect of AdS/CFT. 
Our results can hence be interpreted as placing severe limitations on how well stabilizer toy models are able to capture aspects of holography to do with locality-preserving boundary evolutions. 
In a future work we plan to show that by adding additional constraints to the logical bulk of a qudit stabilizer code, it is indeed possible to build an approximate holographic code which can transversally implement any element of $SU(2^k)$.

It would be very interesting to combine our restrictions on locality-preserving operations based on holographic structure and locality with the approximate covariant code trade-off bounds derived in Ref.~\cite{Faist_ContSymApproxQEC2020}. 
We expect that both of these directions are relevant to studying the action of locality-preserving evolutions in the full AdS/CFT duality which is expected to lead to approximate quantum codes with a holographic locality structure. 

\acknowledgments

We thank Patrick Hayden for helpful discussions and comments on the manuscript.
We also thank Jon Sorce for pointing us to Ref.~\cite{Hubeny2012}.
KD is supported by the Center for Science of Information (CSoI), an NSF Science and Technology Center, under grant agreement CCF-0939370. 
SC acknowledges support from the Knight-Hennessy Scholars program.
DW acknowledges support from the Simons Foundation.

\newpage

\bibliographystyle{apsrev4-1}
\bibliography{samlibrary}

\appendix

\section{Correctability properties for regions of subsystem codes} \label{app:regions}

In \cref{sect:background}, we introduced two correctability properties that a region of a subsystem code can have -- correctability and dressed-cleanability.
In this appendix, we explore a broader family of such properties, and present the known relationships between them. 

\subsection{Subsystem codes}
To begin with, in general subsystem codes, a notion of dressed-cleanable can be defined analogous to \cref{def:dressedCleanable} but for any dressed operator, not just Pauli operators.
\begin{definition}\label{def:subsystem}
A region $R$ of a subsystem code is defined to be
	\begin{itemize}
		\item \textbf{\emph{bare-cleanable}} (aka ``correctable'') iff for any logical operator $A_L$ there exists a bare-CSP operator $A$ supported on $R^c$ which implements $A_L \otimes \Id_J$.
	\item \textbf{\emph{dressed-cleanable}} iff for any logical operator $A_L$ there exists a dressed-CSP operator $A$ supported on $R^c$ which implements $A_L \otimes A_J$.
	\item \textbf{\emph{bare-trivial}} iff any bare-CSP operator supported on $R$ implements a logical operation of the form $c\Id_L \otimes \Id_J$, with $c\in \mathbb{C}$.
	\item \textbf{\emph{dressed-trivial}} iff any dressed-CSP operator supported on $R$ implements a logical operation of the form $\Id_L \otimes A_J$, for some $A_J \in \mathcal{B}(\mathcal{H}_J)$.
	\end{itemize}
\end{definition}
We remark that \cref{lem:dressedtrivial,lem:baretrivial} can then be expressed as meaning that bare-cleanable implies dressed-trivial and dressed-cleanable implies bare-trivial.
Furthermore, \cref{lem:BCDC} means that bare-cleanable implies dressed-cleanable, and it is easy to show similarly that dressed-trivial implies bare-trivial.
We can summarise these results, along with \cref{lem:CR-DC} for regions satisfying complementary recovery, in \cref{fig:subsystem}.

\newlength{\noddist}
\setlength{\noddist}{4cm}
\newlength{\hordist}
\setlength{\hordist}{4cm}
\begin{figure}[htp]
\makebox[\linewidth]{
\begin{tikzpicture}[node distance = \noddist, scale=0.85, every node/.style={transform shape}]
	\node[align=center](BC){\textbf{Bare-cleanable}\\
	\textbf{(Correctable)}};
	\node[right of=BC,node distance=\hordist,align=center](DC) {\textbf{Dressed-cleanable}};
	\node [below of=BC,node distance=\noddist] (DT) {\textbf{Dressed-trivial}};
	\node [right of=DT,node distance=\hordist] (BT) {\textbf{Bare-trivial}};
	\draw[thick,double,->] (BC) to node[black,right,align=center]{}(DT);
	\draw[thick,double,->] (DC) to node[black,right,align=center]{}(BT);
	\draw[thick,double,->] (BC) -- node[black,right,align=center]{}(DC);
	\draw[thick,double,->] (DT) -- node[black,right,align=center]{}(BT);
	\draw[blue,thick,double,->] (BT) -- node[blue,sloped,above,align=center]{Complementary\\Recovery}(BC);
\end{tikzpicture}}
\caption{\textbf{Relationships between properties of physical regions in subsystem codes.} 
The blue relationship holds for regions with complementary recovery, such as those in holographic codes.
}
\label{fig:subsystem}
\end{figure}

\subsection{Stabilizer subsystem codes}
In the special case of a \emph{stabilizer} subsystem code, there is some additional structure.
The definition in \cref{def:subsystem} are still valid, but we could instead choose to define analogous properties that only require the respective properties to hold for Pauli operators.
It turns out that most of these properties are equivalent to those in \cref{def:subsystem}, for example:
\begin{lemma}\label{lem:paulibare}
	A region $R$ of a stabilizer subsystem code is bare-cleanable if and only if for any logical Pauli operator $P_L$, there exists a bare-CSP Pauli operator $P$ supported on $R^c$ which implements $P_L \otimes \Id_J$.
\end{lemma}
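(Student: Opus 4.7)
The plan is to prove the two directions separately. The easy direction is that the Pauli-cleanable property implies bare-cleanability: given any logical operator $A_L$, expand it in the logical Pauli basis as $A_L = \sum_j d_j P_{L,j}$, use the hypothesis to find for each $j$ a bare-CSP Pauli $P^{(j)}$ on $R^c$ implementing $P_{L,j}\otimes \Id_J$, and form $A \defi \sum_j d_j P^{(j)}$. Since bare-CSP operators on $R^c$ are closed under linear combinations (as both the codespace-preserving property and the vanishing of the junk action are linear conditions), $A$ witnesses bare-cleanability.

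The substantive direction is the converse. Given a logical Pauli $P_L$, bare-cleanability supplies a (not necessarily Pauli) bare-CSP operator $A$ on $R^c$ with $V^\dagger A V = P_L \otimes \Id_J$. The goal is to extract a Pauli witness. First I would expand $A$ in the physical Pauli basis on $R^c$, writing $A = \sum_i c_i P_i$, and then apply the stabilizer twirl $\Phi(X) \defi |\mathcal{S}|^{-1}\sum_{S\in\mathcal{S}} S X S^\dagger$. Because Paulis either commute or anticommute with each stabilizer, $\Phi(P_i)=P_i$ when $P_i$ is CSP and $\Phi(P_i)=0$ otherwise; since $A$ is CSP, $A = \Phi(A)$, so without loss of generality every $P_i$ in the expansion is a CSP Pauli on $R^c$.

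Now I would use the fact (recalled in \cref{subsec:stabilizer-subsystem-codes}) that in a stabilizer subsystem code every CSP physical Pauli implements some element of the encoded Pauli group, i.e.\ $V^\dagger P_i V = \alpha_i\,Q_L^{(i)}\otimes Q_J^{(i)}$ for a phase $\alpha_i$ in the center of the Pauli group and Pauli strings $Q_L^{(i)}, Q_J^{(i)}$. Substituting into $V^\dagger A V = P_L\otimes \Id_J$ and grouping terms sharing the same encoded Pauli $(Q_L,Q_J)$, the linear independence of distinct Paulis on $\mathcal{H}_L\otimes\mathcal{H}_J$ forces
\begin{align}
\sum_{i:(Q_L^{(i)},Q_J^{(i)})=(P_L,\Id_J)} c_i \alpha_i = 1,
\end{align}
while all other grouped sums vanish. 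In particular, at least one index $i^\star$ in this sum has $c_{i^\star}\alpha_{i^\star}\neq 0$; the corresponding Pauli $P_{i^\star}$ is CSP, supported on $R^c$, and satisfies $V^\dagger P_{i^\star} V = \alpha_{i^\star}P_L\otimes \Id_J$. Finally, since $\alpha_{i^\star}$ is a phase belonging to the center of the Pauli group, $P\defi \alpha_{i^\star}^{-1} P_{i^\star}$ is again a physical Pauli, and it is bare-CSP on $R^c$ implementing exactly $P_L\otimes \Id_J$.

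The main subtlety to navigate carefully is the bookkeeping in the linear-independence step: distinct physical Paulis on $R^c$ can implement the same encoded Pauli (they differ by an element of the stabilizer group), so one cannot conclude directly that a unique $P_i$ does the job; grouping and taking \emph{some} index within the correct group is essential. The other small point to handle is verifying that the global phase $\alpha_{i^\star}$ lives in the Pauli group, so that absorbing it into $P_{i^\star}$ keeps the witness within the Pauli group; this follows from the canonical identification of CSP physical Paulis with encoded Paulis reviewed earlier. The argument generalises verbatim to qudits, with $\{\pm 1,\pm i\}$ replaced by $\{\omega_d^k\}$.
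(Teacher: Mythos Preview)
Your proposal is correct and follows essentially the same route as the paper's proof: both directions proceed by expanding in the Pauli basis and invoking the fact that CSP physical Paulis implement encoded Paulis. Your version is in fact more carefully argued than the paper's---you use the stabilizer twirl to justify discarding the non-CSP Pauli terms (the paper asserts a bit loosely that each term in the expansion must individually commute with the stabilizers), and you explicitly address the grouping of distinct physical Paulis implementing the same encoded Pauli and the absorption of the central phase, both of which the paper leaves implicit.
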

\begin{proof}
	$(\implies)$
	For any logical operator $P_L$, bare-cleanability tells us there is a bare-CSP operator $A$ supported on $R^c$ that implements $P_L \otimes \Id$.
	$A$ can be decomposed into Pauli operators supported on $R^c$, as $A = \sum_{i}^{}\lambda _i P^i$.
	Since $A$ commutes with every stabilizer, each of these Pauli operators $P^i$ must also commute with all the stabilizers, and therefore be codespace-preserving.
	Furthermore, the basis of the logical space has been chosen such that CSP physical Pauli operators always implement logical Paulis, so each $P^i$ is a dressed-CSP operator supported on $R^c$ implementing $V P_i V^{\dagger} = P_L^i \otimes P_J^i$.
	Since $A$ implements $P_L \otimes \Id$, we have $VAV^{\dagger} = \sum_{i}^{}\lambda _i P_L^i \otimes P_J^i = P_L \otimes \Id_J $.
	Since the Paulis form a basis of logical operators, at least one $P^i$ is a bare-CSP implementing $P_L \otimes \Id _J$, completing the proof.

	$(\impliedby) $ Any logical operator $A_L$ can be decomposed into Paulis as $\sum_{i}^{} \lambda _i P_L^i$.
	Then each of these Paulis can be implemented as a bare-CSP operator $P^i$ on $R^c$.
	Then $\sum_{i}^{}\lambda _i P^i$ is a bare-CSP operator supported on $R^c$ implementing $A_L$, concluding the proof.
\end{proof}
Similar proofs can be constructed for the bare-trivial and dressed-trivial properties; however the reasoning breaks down for the dressed-cleanable property.
We remark that in the converse direction of the above proof, it was crucial that the linear combination of bare-CSP operators $\sum_{i}^{} \lambda _i P^i$ was also a bare-CSP operator.
However, this does not apply to dressed-CSP operators, which are not closed under linear combination.
Thus, we need to define two separate notions of dressed-cleanability for stabilizer subsystem codes -- one based on general operators and one based on Paulis.
The latter of these two is the one that was used in the main text and referred to simply as ``dressed-cleanable'' -- because it is the one which is useful for the Pastawski-Yoshida lemma, \cref{lem:PY}.
\begin{definition}\label{def:stabsubsystem}
A region $R$ of a subsystem stabilizer code is defined to be
	\begin{itemize}
	\item \textbf{\emph{general dressed-cleanable}} iff for any logical operator $A_L$ there exists a dressed-CSP operator $A$ supported on $R^c$ which implements $A_L \otimes A_J$.
	\item \textbf{\emph{Pauli dressed-cleanable}} iff for any logical Pauli operator $P_L$ there exists a dressed-CSP operator $P$ supported on $R^c$ which implements $P_L \otimes P_J$.
	\end{itemize}
\end{definition}
It can still be shown analogously to the first part of the proof of \cref{lem:paulibare} that general dressed-cleanable regions are also Pauli dressed-cleanable.

The one other difference for stabilizer subsystem codes is that the converse directions for \cref{lem:baretrivial,lem:dressedtrivial} are also true (see Ref.~\cite{bravyi_SubsystemLocal}), as mentioned in the main text.
	These are shown specifically for Pauli dressed-cleanability, and not general dressed-cleanability.
	Thus the web of relationships for stabilizer subsystem codes is shown in \cref{fig:stabilizersubsystem}.

\begin{figure}[htp]
\makebox[\linewidth]{
\begin{tikzpicture}[node distance = \noddist, scale=0.85, every node/.style={transform shape}]
	\node[align=center](BC){\textbf{Bare-cleanable}\\
	\textbf{(Correctable)}};
	\node[right of=BC,node distance=\hordist,align=center](DC) {\textbf{General}\\\textbf{Dressed-cleanable}};
	\node[right of=DC,node distance=\hordist,align=center](PDC) {\textbf{Pauli}\\\textbf{Dressed-cleanable}};
	\node [below of=BC,node distance=\noddist] (DT) {\textbf{Dressed-trivial}};
	\node [right of=DT,node distance=2\hordist] (BT) {\textbf{Bare-trivial}};
	\draw[thick,double,<->] (BC) to 
	node[black,right,align=center]{}(DT);
	\draw[thick,double,<->] (PDC) to 
	node[black,right,align=center]{}(BT);
	\draw[thick,double,->] (BC) -- node[black,right,align=center]{}(DC);
	\draw[thick,double,->] (DC) -- node[black,right,align=center]{}(PDC);
	\draw[thick,double,->] (DT) -- node[black,right,align=center]{}(BT);
	\draw[blue,thick,double,->] (BT) -- node[blue,sloped,above,align=center]{Complementary\\Recovery}(BC);
\end{tikzpicture}}
\caption{\textbf{Relationships between properties of physical regions in subsystem codes.} 
The blue relationship holds for regions with complementary recovery, such as those in holographic codes.
}
\label{fig:stabilizersubsystem}
\end{figure}

To the best of our knowledge, this is the extent of the known relationships between different correctability properties in both subsystem and stabilizer subsystem codes.
This leaves many open questions: are dressed-trivial and bare-cleanable equivalent for any subsystem code?
Is the general dressed-cleanable property of stabilizer subsystem codes equivalent to Pauli dressed-cleanability after all, is it equivalent to bare-cleanability, or is it distinct from both?
In HaPPY codes, such as in \cref{fig:pauli}, regions are found which are Pauli dressed-cleanable and not bare-cleanable, which already shows that some of the above properties must be distinct.
Perhaps similar counter-examples can be found demonstrating that the relationships shown here are all that can be generally expressed without additional information.
We leave this to future work.

\newpage

\section{Homogen{e}ous planar hyperbolic HaPPY codes} \label{app:schlafli}

In Ref.~\cite{jahnCentralChargesAperiodic2020}, a range of possible HaPPY codes were considered by systematically exploring possible homogeneous hyperbolic tilings characterized by Schlafli symbols.
Schlafli symbols are a characterization of a regular polygonal tiling via a pair of integers ${n,k}$, such that $k$ $n$-sided polygons ($n$-gons) meeting at each vertex.
Such a tiling is hyperbolic if the total internal angles of each polygon exceeds $(n-2)\pi $, which is equivalent to
\begin{align}
	\frac{1}{n} + \frac{1}{k} \leq \frac{1}{2}.\label{eq:hyperbolic}
\end{align}
In Ref.~\cite{jahnCentralChargesAperiodic2020}, the perfect tensors of HaPPY are associated to polygons of this tiling; however for our purposes we consider the dual identification of these tensors with the \emph{vertices} of the graph instead.
Thus for a network with a rank $2N$ perfect tensor, we should have $k=2N-1$ as the number of in-plane legs.
These two conventions are related by Poincar\'e duality, where faces and vertices are interchanged, which corresponds to switching $n\leftrightarrow k$ in the Schlafli symbol.
The symmetry of \cref{eq:hyperbolic} shows that a tiling is hyperbolic if and only if its dual is hyperbolic, as expected.

Of course, the HaPPY tensor network is more than just a hyperbolic tiling filled with a network of perfect tensors; it must also be an isometry from the bulk to the boundary.
The former does not imply the latter, as one can see from the case where $k=3$.
In this case, a local bulk operator cannot be pushed out to the boundary; whenever an operator is pushed to a new vertex, there are two inputs (one from the leg ``pushed'' from the previous vertex, and one from the bulk input), so the result of the pushing is that some operator is pushed to both remaining legs of the tensor.
In this way, there is no control over the pushing process to guide the output towards the boundary; it inevitably loops back on itself as soon as given the chance, because it follows all available paths.
Thus an isometry does not occur for this case.

Whenever $k>3$, it appears that the tensor network does result in an isometry.
In Ref.~\cite{jahnCentralChargesAperiodic2020}, these cases were studied separately according to $n=3$ and $n>3$.
A central perfect tensor is placed to begin with, and then consecutive layers of polygons are added.
This is done such that each vertex is on the boundary of the tiling when it is in the most-recently added layer, and it is in the interior afterwards.

Then the vertices of the network can be characterized according to how many other vertices they are connected to from the same or previous layers.
Each such vertex is connected with exactly two from the same layer as itself, as it is on the boundary of the tiling.
Either (a) it connect to no vertices from previous layers (only two of those from the same layer as itself); (b) they connect to one vertex from a previous layer (and thus a total of three including those of the same layer); or (c) they can connect to two vertices of previous layers (for a total of four vertices).
We label these types of vertices as $a$, $b$ and $c$ respectively, and note that the third type $c$ only appears when $n=3$.
The initial vertex effectively counts as a distinct category, $i$, which connects to no vertices from the same or previous layers.

The construction of each new layer can be surmised using replacement rules of these vertex types as follows~\cite{jahnCentralChargesAperiodic2020}
\begin{align}
		n&=3: \begin{cases}
			a \to c b^{k-4} \\
			b \to c b^{k-5} \\
			c \to c b^{k-6} \\
		\end{cases} \\
		n&>3: \begin{cases}
			a\to a^{n-4} b\left( a^{n-3}b \right)^{k-3}\\
			b\to a^{n-4} b\left( a^{n-3}b \right)^{k-4}\\
		\end{cases}.
	\label{eq:Replacement}
\end{align}
The resultants of these replacement rules are written in a specific order to denote a particular ordering of each layer; e.g.\ one can use a clockwise convention such that the new vertices are in the appropriate order from left to right.
We also require the rules for $i$-type vertices which were not present in Ref.~\cite{jahnCentralChargesAperiodic2020} due to the different convention, which are as follows for all $n$:
\begin{align}
			i \to (b a^{n-3})^k
	\label{eq:iReplacement}
\end{align}

Our proof strategy is as follows -- we construct the network such that the initial site is the one onto which a local bulk operator is applied. This operator is then pushed through to three different choices of region $S_1$, $S_2$ and $S_3$.
Each of these regions can fully reconstruct an arbitrary bulk logical operator on the initial site, so their complements are correctable, and if the union of these complements covers the full boundary space we can apply \cref{lem:PY} to obtain the desired result.
Thus we must show that $S_1$, $S_2$ and $S_3$ can be constructed such that no site of the boundary lies in the intersection of all three regions.
We proceed by partitioning the boundary into $k=2N-1$ sets, $\left\{ M_j \right\}_{j=1}^{j=2N-1}$; each of which corresponds to some $b_j$, one of the $b$-type vertices that directly connects to the initial vertex.
Each of these vertices comes with $n-3$ $a$-type vertices from the $i$ replacement rule \cref{eq:iReplacement}.
Together, these vertices, along with all of the vertices obtained from them via repeated application of the replacement rules in \cref{eq:Replacement}, form the set $M_j$.

One can push out the bulk operator acting on the initial vertex to any choice of $N$ of the $b_j$ vertices.
Consider what would happen if one could construct a pushing protocol such that an operator acting on each $b_j$ can always be pushed out into just the boundary region $M_j$.
Then the initial bulk operator can be pushed to any $N$ of the $M_j$, and we could use these sets to form $S_1$, $S_2$ and $S_3$.
For example, let $S_1 = \bigcup_{j=1}^{j=N}$, $S_2 = \bigcup_{j=2}^{j=N+1} $, and $S_3 = \bigcup_{j=N+1}^{j=2N-1}$. 
Each $S_i$ consists of the union of $N$ of the $M_j$ regions, and no $M_j$ belongs to all three, so these choices of $S_i$ are appropriate choices that allow application of the main lemma \cref{lem:PY}, and thus no non-Cliffords can be applied transversally.

\begin{definition}[Replacement Pushing Assumption]
	An operator acting on a vertex $b_j$ connected to the initial vertex can be pushed out to $M_j$, which is defined as the image of $b_j$ under repeated applications of the replacement rules \cref{eq:Replacement}.
\end{definition}
In other words, the replacement pushing assumption means that we can partition the boundary into $2N-1$ regions $\left\{ R_j \right\}$ such that each $b_j\in \mathcal{E}[R_j]$, with the entanglement wedge $\mathcal{E}$ as defined in \cref{def:ew}.

When does the replacement pushing assumption apply?
The replacement rules are constructed such that whenever they are applied to an existing vertex and a new $b$ or $c$ vertex is obtained, it must connect to that existing vertex.
Thus, if at least $N$ vertices of $b$ or $c$ type are obtained from any replacement rule, then the replacement pushing assumption applies straightforwardly; each vertex can be pushed into its replacement, which is repeated until the boundary is reached.

First, let us consider the $n>3$ case.
The rules for $a$ and $b$ result in $k-2$ and $k-3$ new $b$-type vertices respectively.
Thus, so long as $k-3 \geq N$, i.e.\ $k \geq 7$, replacement pushing applies, and the result follows.

Now, for $n=3$ -- note that because none of the replacement rules in \cref{eq:iReplacement,eq:Replacement} produce any $a$ vertices, we only need to consider $b$ and $c$ (again, this is different to Ref.~\cite{jahnCentralChargesAperiodic2020}, where their conventions necessitate starting with some $a$ vertices).
Thus, if $N \geq k-5$, i.e.\ $k \geq 11$, replacement pushing applies, and the result follows.

The cases of $\left\{ n,k \right\}$ which produce valid HaPPY networks (i.e.\ have odd $k=2N-1$, satisfy \cref{eq:hyperbolic}, and form actual isometries i.e.\ $k\neq 3$), but for which replacement pushing does not straightforwardly apply according to the above arguments, are (A) $ \left\{ n>3, 5 \right\}$, (B) $\left\{ 3,9 \right\}$, and (C) $\left\{ 3,7 \right\}$.
In each of these cases, replacement pushing does in fact still apply, but with a little more effort required to complete the argument.

(A) In this case, replacement pushing does straightforwardly apply for $a$-type vertices, as their replacement rule gives three $b$-type vertices, and $N= \frac{k+1}{2} = 3$.
However, $b$-type vertices are potential causes for concern, as operators acting here must be pushed to $N=3$ connecting vertices, but there are only two available connected $b$-type vertices in the succeeding layer.
Thus, one leg must be pushed into \emph{along} the boundary, i.e.\ there must be some pushing to the same layer.
However, this is fine; we just push in the direction of the nearest $a$-type vertex that belongs to the same vertex set $M_j$ -- noting that because $n>3$, there always exists at least one such $a$-type vertex.
Even if the $b$-type vertex in question does not connect to an $a$-type vertex directly, it can push to a neighouring $b$-type vertex, which will eventually connect to an appropriate $a$-type from the same set $M_j$.

(B) 
Each $b$-type vertex is mapped under replacement to a $c b b b b$ string of vertices, so these straightforwardly obey the replacement pushing assumption.
Each $c$-type vertex is mapped under replacement with $c b b b$, but we must push to $N=5$ vertices.
We remark that due to the replacement rules, no two $c$ vertices are ever adjacent, so there must be a $b$ vertex to the right of each $c$ vertex.
Furthermore, this $b$ vertex must belong to the same set $M_j$, again because of the nature of the replacement rules.
Thus we simply push from each $c$-type vertex onto its four descendant vertices, and for the remaining leg, push into the $b$ vertex to its right -- a protocol which satisfies the replacement pushing assumption.

(C)
This case is slightly more complicated as we must employ different pushing rules for $b$-type vertices depending on their position.
We remark that since the $n=3$ replacement rules always generate a $c$ for each vertex, and $c$ vertices by definition connect with \emph{two} vertices from the previous layer, the $c$ in a given vertex's replacement is also connected to that of the vertex to its left.
Thus, a vertex has three options for pushing -- it can push ``forwards'' to the vertices in its replacement; it can push ``sideways'' to adjacent vertices of the same layer, or it can push ``diagonally'' to the $c$-type vertex in the replacement of its neighbor on the right.
Each $b$-type vertex is mapped under replacement to a $c b b$ string of vertices, but must be pushed along $N=4$ legs.
For any $b$-type vertex that is not on the very far right of the corresponding set $M_j$, it can be pushed into the three vertices in its replacement, and also to the $c$-type vertex in the replacement of its neighbor on the right.
For the rightmost vertex, which is always a $b$-type, it can be pushed to its left neighbour (which is always a $b$), as well as to the three in its replacement.
The rightmost vertex must always have a $b$ neighbour because of the replacement rules; although some $b$ vertices have $c$-types on either side, these are never on the boundary of the $M_j$ set.

Finally, each $c$-type vertex is mapped under replacement to $c b$, but must also be pushed along four legs.
Again, we can push into the two in its replacement, as well as the $c$-type in the replacement of its rightward neighbor.
For the final leg, we can push sideways to an adjacent $b$-vertex from the same $M_j$.

None of these pushing rules are in conflict with one another, so we have shown a consistent pushing protocol which keeps each operator in the same $M_j$; i.e.\ the replacement pushing assumption is satisfied.

We remark that all of these ideas still apply in the case of perfect planar/block perfect tensors as well, as we never need to push to non-adjacent legs.
Similar arguments can be generalized to non-homogeneous codes (such as \cref{fig:alternating}), as well as higher-dimensional HaPPY codes.

\end{document}